\documentclass{LMCS}
\usepackage{proof}
\usepackage{amssymb}
\usepackage{enumerate,hyperref}

%++++++++++++++++++++++++++ MACROS ++++++++++++++++++++++++++++++++++++++++++

%-------expressions----------------

\newcommand{\lpi}{\widehat{\pi}}%------------------------lazy pi
\newcommand{\lapp}[2]{#1\widehat{@}#2}%--------------------lazy append

\newcommand{\lb}{\lambda}

\newcommand{\lbd}{\leftthreetimes\!}%------------------------------fancy lambda

\newcommand{\com}[2]{\langle #1|#2\rangle}%-----------commands
\newcommand{\ctt}[1]{\{#1\}}%---------------coercion command to term

\newcommand{\app}[2]{#1@#2}

\newcommand{\cov}{{*}}%----the special co-variable

%---------explicit substitutions----------

%--------lambdas-------
\newcommand{\mt}{\tilde{\mu}}%-----------mu-tilde

\newcommand{\ol}{\overline{\lambda} }%----------lambda-bar

\newcommand{\lJmse}{\mathbf{\lambda J}^{\mathbf mse}}%----------lambdaJmse
\newcommand{\lJms}{\mathbf{\lambda J}^{\mathbf ms}}%----------lambdaJms
\newcommand{\lJm}{\mathbf{\lambda J}^{\mathbf m}}%----------lambdaJm
\newcommand{\lm}{\mathbf{\lambda}^{\mathbf m}}%----------lambda-m
\newcommand{\lJ}{\mathbf{\lambda J}}%-------------------lambdaJ

\newcommand{\lJmseS}{\lb 2\mathbf{J}^{\mathbf mse}}%----------lambda2Jmse

\newcommand{\ldfS}{\underline{\lb}2}%---------lambda underbar 2

\newcommand{\lJmseO}{\lb\omega\mathbf{J}^{\mathbf mse}}%----------lambdaOmegaJmse

\newcommand{\ldfO}{\underline{\lb}\omega}%---------lambda underbar Omega

\newcommand{\HJmse}{\lb H\mathbf{J}^{\mathbf mse}}%----------HJmse
\newcommand{\HOL}{\underline{\lb}H}%--------------------------HOL

\newcommand{\lmmt}{\ol\mu\tilde{\mu}}%-----------lambda mu mu-tilde
%-----------lambda mu-tilde
%-----------lambda mu

%-----------mappings----------------
\newcommand{\ov}[1]{\overline{#1}}

\newcommand{\J}{J}
\newcommand{\m}{m}
\newcommand{\s}{s}
\newcommand{\e}{e}

\newcommand{\sm}[1]{{#1}^{\dagger}}%------------------------Jms to Jm
\newcommand{\es}[1]{{#1}^{\circ}}%------------------------Jmse to Jms

\newcommand{\suc}[1]{\mathsf{s}(#1)}%----------successor

%------------types-----------------
\newcommand{\imp}{\supset}%-----implication, arrow type
\newcommand{\PROP}{PROP}

%----------sequents----------------------
\newcommand{\seqc}[3]{#1\stackrel{#2}{\longrightarrow}#3}%----sequent for commands
\newcommand{\seql}[4]{#1|#2:#3\vdash #4}%----sequent for lists/co-terms

%-----------caligraphic------------------

\newcommand{\D}{\mathcal{D}}
\newcommand{\E}{\mathcal{E}}
\newcommand{\X}{\mathsf{X}}

%++++++++++++++++++++++++++++++++++++++++++++++++++++++++++++++++++
\def\doi{5 (2:11) 2009}
\lmcsheading%
{\doi}
{1--36}
{}
{}
{Feb.~29, 2008}
{May~\phantom{.}25, 2009}
{}   

\begin{document}

\title[CPS and Strong Normalisation for Intuitionistic Sequent
  Calculi]{Continuation-Passing Style and Strong Normalisation for
  Intuitionistic Sequent Calculi\rsuper*}
\author[J.~Esp\'\i{}rito~Santo]{Jos\'e Esp\'\i{}rito Santo\rsuper a}
\address{{\lsuper{a,c}}Departamento de Matem\'{a}tica, Universidade do Minho,
Portugal}
\email{\{jes,luis\}@math.uminho.pt}
\author[R.~Matthes]{Ralph Matthes\rsuper b}
\address{{\lsuper b}Institut de Recherche en Informatique de Toulouse (IRIT), C.N.R.S. and University of Toulouse III (Paul Sabatier), France}
\author[L.~Pinto]{Lu\'\i{}s Pinto\rsuper c}
%\address{Departamento de Matem\'{a}tica, Universidade do Minho, Portugal}
%\email{luis@math.uminho.pt} 
\keywords{continuation-passing style,
garbage passing, intuitionistic sequent calculus, strong
normalisation, strict simulation, Curry-Howard isomorphism}
\subjclass{F.4.1}
\titlecomment{{\lsuper*}Extended version of the conference contribution \cite{ESMP:TLCA07} by the same authors}

\begin{abstract}
\noindent
The intuitionistic fragment of the call-by-name version of Curien
and Herbelin's $\lmmt$-calculus is isolated and proved strongly
normalising by means of an embedding into the simply-typed
$\lambda$-calculus. Our embedding is a
continuation-and-garbage-passing style translation, the inspiring
idea coming from Ikeda and Nakazawa's translation of Parigot's
$\lb\mu$-calculus.  The embedding strictly simulates reductions while usual
continuation-passing-style transformations erase permutative
reduction steps. For our intuitionistic sequent calculus, we even
only need ``units of garbage'' to be passed. We apply the same
method to other calculi, namely successive extensions of the
simply-typed $\lb$-calculus leading to our intuitionistic system,
and already for the simplest extension we consider ($\lb$-calculus
with generalised application), this yields the first proof of strong
normalisation through a reduction-preserving embedding. The results
obtained extend to second and higher-order calculi.
\end{abstract}

\maketitle

\tableofcontents

%*********************************************************************************
\section{Introduction}\label{sec:introduction}

CPS (continuation-passing style) translations are a tool with
several theoretical uses. One of them is an interpretation between
languages with different type systems or logical infra-structure,
possibly with corresponding differences at the level of program
constructors and computational behavior. Examples are when the
source language (but not the target language): (i) allows
permutative conversions, possibly related to connectives like
disjunction \cite{Groote02}; (ii) is a language for classical logic,
usually with control operators
\cite{FelleisenFriedmanKohlbeckerDubaLICS1986,GriffinPOPL1990,IkedaNakazawa06};
(iii) is a language for type theory
\cite{BartheHatcliffSoerensen97,BartheHatcliffSoerensen99}
(extending (ii) to variants of pure type systems that have dependent
types and polymorphism).

This article is about CPS translations for intuitionistic sequent
calculi. The source and the target languages will differ neither
in the reduction strategy (they will be both call-by-name) nor at
the types/logic (they will be both based on intuitionistic
implicational logic); instead, they will differ in the structural
format of the type system: the source is in the sequent calculus
format (with cut and left introduction) whereas the target is in
the natural deduction format (with elimination/application). From
a strictly logical point of view, this seems a new
proof-theoretical use for double-negation translations.

Additionally, we insist that our translations strictly simulate reduction.
This is a strong requirement, not present, for instance in the concept
of reflection of \cite{SabryWadlerFP1996}. It seems to have been
intended by \cite{BartheHatcliffSoerensen97}, however does not show up
in the journal version \cite{BartheHatcliffSoerensen99}. But it is,
nevertheless, an eminently useful requirement if one wants to infer
strong normalisation of the source calculus from strong normalisation
of the simply-typed $\lambda$-calculus, as we do. In order to achieve
strict simulation, we define continuation-and-garbage passing style (CGPS)
translations, following an idea due to Ikeda and Nakazawa
\cite{IkedaNakazawa06}. Garbage will provide room for observing
reduction where continuation-passing alone would inevitably produce an
identification, leading to failure of strict simulation in several published
proofs for variants of operationalized classical logic, noted by
\cite{NakazawaTatsuta03} (the problem being $\beta$-reductions under
vacuous $\mu$-abstractions). As opposed to \cite{IkedaNakazawa06}, in
our intuitionistic setting garbage can be reduced to ``units'', and
garbage reduction is simply erasing a garbage unit.

The main system we translate is the intuitionistic fragment of the
call-by-name restriction of the $\lmmt$-calculus
\cite{CurienHerbelinFP2000}, here named $\lJmse$. The elaboration
of this system is interesting on its own. We provide a CPS and a
CGPS translation for $\lJmse$. We also consider other
intuitionistic calculi, whose treatment can be easily derived from
the results for $\lJmse$. Among these is included, for instance,
the $\lambda$-calculus with generalised application. For all these
systems a proof of strong normalisation through a reduction-preserving
embedding into
the simply-typed $\lambda$-calculus is provided for the first
time.

The article is an extended version of the conference contribution
of the same authors \cite{ESMP:TLCA07}.  It is organized as
follows: Section \ref{sec:lambdaJmse} presents $\lJmse$. Section
\ref{sec:other-systems} compares $\lJmse$ with other systems, and
obtains as a by-product confluence of $\lJmse$. Sections
\ref{sec:cgps} deals with the C(G)PS translation of $\lJmse$ and
its subsystems. Section \ref{sec:higher-order-systems} extends the
results to systems $F$, $F^{\omega}$ and intuitionistic
higher-order logic. Section \ref{sec:conclusions} compares this
work with related work and concludes.

%*********************************************************************************

%*********************************************************************************
\section{An intuitionistic sequent calculus}\label{sec:lambdaJmse}

In this section, we define and identify basic properties of the
calculus $\lJmse$. A detailed explanation of the connection
between $\lJmse$ and $\lmmt$ is left to the next section.

There are three classes of expressions in $\lJmse$:
$$
\begin{array}{lcrcl}
\textrm{(Terms)} &  & t,u,v & ::= & x\,|\,\lambda x.t\,|\,\ctt c\\
\textrm{(Co-terms)} &  & l & ::= & []\,|\,u::l\,|\,(x)c\\
\textrm{(Commands)} &  & c & ::= & tl
\end{array}
$$
Terms can be variables (of which we assume a denumerable set
ranged over by letters $x$, $y$, $w$, $z$), lambda-abstractions
$\lambda x.t$ or coercions $\ctt c$ from commands to
terms\footnote{A version of $\lJmse$ with implicit coercions would
be possible but to the detriment of the clarity, in particular, of
the reduction rule $\epsilon$ below.}. A \emph{value} is a term which is
either  a variable or a lambda-abstraction. We use letter $V$ to
range over values.

Co-terms provide means of forming lists of arguments, generalised
arguments \cite{JoachimskiMatthesRTA2000}, or explicit
substitutions. A co-term of the form $(x)c$ binds variable $x$ in
$c$ and provides the generalised application facility. Operationally
it can be thought of as ``substitute for $x$ in $c$''. A co-term of
the form $[]$ or $u::l$ is called an \emph{evaluation context} and
is denoted by $E$. An evaluation context of the form $u::l$ allows
for multiary applications, and when passed to a term it indicates
that after consumption of argument $u$ computation should carry on
with arguments in $l$. $[]$ marks the end of an evaluation context
and compensates the impossibility of writing $(x)x$.

A command
$tl$ has a double role: if $l$ is of the form $(x)c$, $tl$ is an
explicit substitution; otherwise, $tl$ is a general form of
application.

In writing expressions, sometimes we add parentheses to help their
parsing. Also, we assume that the scope of binders $\lambda x$ and
$(x)$ extends as far as possible. Usually we write only one $\lb$
for multiple abstraction.

In what follows, we reserve letter $T$ (``term in a large sense'')
for arbitrary expressions. We write $x\notin T$ if $x$ does not
occur free in $T$. Substitution $[t/x]T$ of a term $t$ for all free
occurrences of a variable $x$ in $T$ is defined as expected, where
it is understood that bound variables are chosen so that no variable
capture occurs.
\[
\begin{array}{lcl@{\qquad\qquad}lcl}
[t/x]x&=&t&[t/x][]&=&[]\\
\relax[t/x]y&=&y\textrm{ if $x\neq y$}&[t/x](u::l)&=&[t/x]u::[t/x]l\\
\relax[t/x](\lb y.u)&=&\lb y.[t/x]u&[t/x]((y)c)&=&(y)[t/x]c\\
\relax[t/x]\ctt c&=&\ctt{[t/x]c}&[t/x](ul)&=&[t/x]u[t/x]l\\
\end{array}
\]
Evidently, syntactic classes are respected by substitution, i.\,e., $[t/x]u$ is a term, $[t/x]l$ is
a co-term and $[t/x]c$ is a command.\\

The calculus $\lJmse$ has a form of sequent for each class of expressions:
$$\Gamma\vdash t:A\qquad \quad \Gamma|l:A\vdash B\qquad \quad \seqc{\Gamma}{c}{B}$$

Letters $A,B,C$ are used to range over the set of types
(=formulas), built from a base set of type variables (ranged over
by $X$) using the function type (that we write $A\supset B$). In
sequents, contexts $\Gamma$ are viewed as finite sets of
declarations $x:A$, where no variable $x$ occurs twice. The
context $\Gamma,x:A$ is obtained from $\Gamma$ by adding the
declaration $x:A$, and will only be written if this yields again a
valid context, i.\,e., if $x$ is not declared in $\Gamma$.
Similarly, $\Gamma,\Delta$ is the union of $\Gamma$ and $\Delta$,
and assumes that the sets of variables declared in $\Gamma$ and
$\Delta$ are disjoint. We can think of a term (resp. co-term) as
an annotation for a selected formula in the \emph{rhs} (resp.
\emph{lhs}). Commands annotate sequents generated as a result of
logical cuts, where there is no selected formula on the \emph{rhs}
or \emph{lhs}; as such we write them on top of the sequent arrow.

\smallskip
The typing rules of $\lJmse$ are presented in Figure~\ref{fig:ljmse},
stressing the parallel between left and right rules.
\begin{figure}\caption{Typing rules of $\lJmse$}\label{fig:ljmse}$$
\begin{array}{c}
\infer[LAx]{\seql{\Gamma}{[]}{A}{A}}{}\quad\quad\infer[RAx]{\Gamma,x:A\vdash x:A}{}\\ \\
\infer[LIntro]{\seql{\Gamma}{u::l}{A\imp B}{ C}}{\Gamma\vdash u:A&
\seql{\Gamma}{l}{B}{C}}\quad\quad
\infer[RIntro]{\Gamma\vdash\lambda x.t:A\imp B}{\Gamma,x:A\vdash
t:B}\\
\\
\infer[LSel]{\seql{\Gamma}{(x)c}{A}{B}}{\seqc{\Gamma,x:A}{c}{B}}
\quad\quad \infer[RSel]{\Gamma\vdash
\ctt c:A}{\seqc{\Gamma}{c}{A}}\\
\\
\infer[Cut]{\seqc{\Gamma}{tl}{B}}{\Gamma\vdash t:A &
\seql{\Gamma}{l}{A}{B}}
\end{array}
$$
\end{figure}

The following
other forms of cut are admissible  as typing rules for
substitution for each class of expressions:
\[
\begin{array}{c}
\infer{\Gamma\vdash[t/x]u:B}{\Gamma\vdash t:A&\Gamma,x:A\vdash
u:B}\quad\quad
\infer{\seql{\Gamma}{[t/x]l}{B}{C}}{\Gamma\vdash t:A&\seql{\Gamma,x:A}{l}{B}{C}}\\[2ex]
\infer{\seqc{\Gamma}{[t/x]c}{B}}{\Gamma\vdash
t:A&\seqc{\Gamma,x:A}cB}
\end{array}
\]
We also have the usual weakening rules: If a sequent with context
$\Gamma$ is derivable and $\Gamma$ is replaced by a context $\Gamma'$
that is a superset of $\Gamma$, then also this sequent is derivable.

We consider the following base reduction rules on expressions:

$$
\begin{array}{rrclrrcl}
(\beta) & (\lambda x.t)(u::l)& \rightarrow &
u((x)tl)&\quad\quad(\mu) & (x)xl & \rightarrow & l,\textrm{ if
$x\notin
l$}\\
(\pi) & \ctt{tl}E & \rightarrow & t\,(\app lE)&
(\epsilon) & \ctt{t[]} & \rightarrow & t\\
(\sigma) & t(x)c & \rightarrow & [t/x]c,\\
\end{array}
$$
where, in general, $\app l{l'}$ is a co-term that represents an
``eager'' concatenation of $l$ and $l'$, viewed as lists, and is
defined as follows\footnote{Concatenation is ``eager'' in the
sense that, in the last case, the right-hand side is not
$(x)\ctt{tl}l'$ but, in the only important case that $l'$ is an
evaluation context $E$, its $\pi$-reduct. One immediately verifies
$\app l{[]}=l$ and $\app{(\app l{l'})}{l''}=\app l{(\app
{l'}{l''})}$ by induction on $l$. Associativity would not hold with
the lazy version of $\app{}{}$. Nevertheless, one would get that
the respective left-hand side reduces in at most one $\pi$-step to
the right-hand side.}:
$$\app{[]}{l'}=l'\qquad
\app{(u::l)}{l'}=u::(\app l{l'})\qquad \app{((x)tl)}{l'}=(x)t\,(\app
l{l'})$$
The one-step reduction relation $\rightarrow$ is
inductively defined as the term closure of the reduction rules, by
adding the following closure rules to the above initial cases of
$\rightarrow$:
$$
\begin{array}{lcl}
t \to t' &\Longrightarrow& \lambda x. t \to \lambda x. t',\quad tl \to t'l,\quad t::l \to t'::l,\\
l \to l' &\Longrightarrow& u::l \to u::l',\quad tl \to tl',\\
c \to c' &\Longrightarrow& (x)c \to (x)c',\quad \ctt c\to \ctt {c'}.
\end{array}
$$

The reduction rules $\beta$, $\pi$ and $\sigma$ are relations on
commands. The reduction rule $\mu$ (resp. $\epsilon$) is a
relation on co-terms (resp. terms). Rules $\beta$ and $\sigma$
generate and execute an explicit substitution, respectively. Rule
$\pi$ appends fragmented co-terms, bringing the term $t$ of the
$\pi$-redex $\ctt{tl}E$ closer to root position. Also, notice here
the restricted form of the outer co-term $E$. This restriction
characterizes call-by-name reduction \cite{CurienHerbelinFP2000}.
A $\mu$-reduction step that is not at the root
has necessarily one of two forms: (i)
$t(x)xl \rightarrow tl$, which is the execution of a linear
substitution; (ii) $u::(x)xl\rightarrow u::l$, which is the
simplification of a generalised argument. Rule $\mu$ undoes
the sequence of inference steps consisting in un-selecting a
formula and giving it the name $x$, followed by immediate
selection of the same formula. The proviso $x\notin l$ guarantees
that no contraction was involved. Finally, rule $\epsilon$ erases
an empty list under $\ctt{\_}$. Notice that empty lists are
important under $(x)$. Another view of $\epsilon$ is as a way of
undoing a sequence of two coercions: the ``coercion'' of a term
$t$ to a command $t[]$, immediately followed by coercion to a term
$\ctt{t[]}$. By the way, $\ctt{c}[]\rightarrow c$ is a
$\pi$-reduction step. Most of these rules have genealogy: see
Section \ref{subsec:spectrum}.

The $\beta\pi\sigma$-normal forms are obtained by constraining
commands to one of the two forms $V[]$ or $x(u::l)$, where $V,u,l$
are $\beta\pi\sigma$-normal values, terms and co-terms
respectively. The $\beta\pi\sigma\epsilon$-normal forms are
obtained by requiring additionally that, in coercions $\ctt{c}$,
$c$ is of the form $x(u::l)$ (where $u,l$ are
$\beta\pi\sigma\epsilon$-normal terms and co-terms respectively).
$\beta\pi\sigma\epsilon$-normal forms correspond to the multiary
normal forms of \cite{SchwichtenbergTCS1999}. If we further impose
$\mu$-normality as in \cite{SchwichtenbergTCS1999}, then co-terms
of the form $(x)x(u::l)$ obey to the additional restriction that
$x$ occurs either in $u$ or $l$.\\

Subject reduction holds for $\rightarrow$, i.\,e., the
following rules are admissible:
\[
\infer{\Gamma\vdash t':A}{\Gamma\vdash t:A&t\to t'}\quad
\infer{\Gamma|l':A\vdash B}{\Gamma|l:A\vdash B&l \to l'}\quad
\infer{\seqc{\Gamma}{c'}B}{\seqc\Gamma cB & c\to c'}
\]
This fact is established with the help of the admissible rules for
typing substitution and  with the help of yet another admissible
form of cut for typing the append operator:
\[
\infer{\seql{\Gamma}{\app l{l'}}{A}{C}}{\seql{\Gamma}{l}{A}{
B}&\seql{\Gamma}{l'}{B}{C}}
\]

We offer now a brief analysis of critical pairs in $\lJmse$
\footnote{For higher-order rewrite systems, see the formal
definition in \cite{MayrNipkow98}.}.

There is a self-overlap of $\pi$ ($\ctt{\ctt{tl}E'}E$), there are
overlaps between $\pi$ and any of $\beta$ ($\ctt{(\lb x.t)(u::l)}E$),
$\sigma$ ($\ctt{t(x)c}E$) and $\epsilon$ (the latter in two different
ways from $\ctt{t[]}E$ and $\ctt{\ctt{tl}[]}$).  Finally, $\mu$
overlaps with $\sigma$ in two different ways from $t(x)xl$ for
$x\notin l$ and $(x)(x(y)c)$ for $x\notin c$.  The last four critical
pairs are trivial in the sense that both reducts are identical. Also
the other critical pairs are joinable in the sense that both terms
have a common $\to^*$-reduct. We only show this for the first case:
$\ctt{tl}E'\to t(\app lE')$ by $\pi$, hence also
$$\ctt{\ctt{tl}E'}E\to\ctt{t(\app lE')}E=:L.$$
On the other hand, a direct application of $\pi$ yields
$$\ctt{\ctt{tl}E'}E\to\ctt{tl}(\app {E'}{E})=:R.$$
 Thus the critical
pair consists of the terms $L$ and $R$. $L\to t(\app{(\app
l{E'})}{E})$ and $R\to t(\app{l}{(\app {E'}{E})})$, hence $L$ and $R$
are joinable by associativity of $\app{}{}$.

We remark that the first three critical pairs (like the one just
shown) are of a particularly simple nature: The forking term is of the
form $\ctt{c}E$ with $c$ any of the command redexes, i.\,e., a
left-hand side of $\beta$, $\pi$ or $\sigma$. The $L$ term is obtained
by reducing $c$ to the respective right-hand side $c'$ of that rule,
and the $R$ term comes from applying $\pi$ at the root. Since $c'$ is
again a command, $L=\ctt{c'}E$ can be reduced by $\pi$ to a term
$L'$. The decisive feature of $@$ is that $R\to L'$ by an instance
of the rule $c\to c'$ where the co-term part $l$ of $c=tl$ is replaced
by $\app lE$.

Since the critical pairs are joinable, the relation $\to$ is
locally confluent \cite{MayrNipkow98}. Thus, from
Corollary~\ref{cor:SN-for-LambdaJmse} below and Newman's Lemma,
$\to$ is confluent on typable terms.
Confluence on all terms is proved in the next section.

%*********************************************************************************

%+++++++++++++++++++++++++++++++++++++++++++++++++++++++

\section{Comparison with other systems}\label{sec:other-systems}

In this section we show that $\lJmse$ can be generated ``from
above'' - being the intuitionistic fragment of the call-by-name
restriction of Curien and Herbelin's $\lmmt$-calculus; and ``from
below'' - being the end-point of a spectrum of successively more
general intuitionistic systems, starting from the ordinary
$\lambda$-calculus. This latter result, by showing that the
systems in the spectrum are subsystems of $\lJmse$, will allow us
to adapt easily the result about $\lJmse$ to (new) results about
its subsystems. In addition, we will obtain, as a by-product,
a proof of confluence for $\lJmse$ even for the untypable terms.

\subsection{$\lJmse$ as the intuitionistic fragment of CBN
$\lmmt$}\label{subsec:int-fragment}

After a recapitulation of a call-by-name version of
$\lmmt$-calculus, we restrict it to the intuitionistic case and
rediscover $\lJmse$.

\subsubsection{The call-by-name $\lmmt$-calculus}\label{subsec:lambda-mu-mu-tilde}

Here, we recall the Curien and Herbelin's
$\lmmt$-cal\-cu\-lus \cite{CurienHerbelinFP2000}. More precisely, we
only consider implication (i.\,e., we do not include the
subtraction connective) and we present the call-by-name restriction
of the system.

Expressions are either terms, co-terms or
commands and are defined by
the following grammar:
$$
\begin{array}{rclcrclcrcl}
 t,u,v & ::= & x\,|\,\lambda x.t\,|\,\mu a.c&\quad&
 e & ::= & a\,|\,u::e\,|\,\mt x.c&\quad&
 c & ::= & \com{t}{e}
\end{array}
$$

\noindent Variables (resp.~co-variables) are ranged over by $x$,
$y$, $z$ (resp.~$a$, $b$, $c$). An \emph{evaluation context} $E$
is a co-term of the form $a$ or $u::e$.

There is one kind of sequent per each syntactic class

$$\Gamma\vdash t:A|\Delta \quad \quad \Gamma|e:A\vdash \Delta
\quad \quad c:(\Gamma\vdash\Delta)$$

\noindent Typing rules are given in Figure~\ref{fig:cbnlmmt}.
\begin{figure}\caption{Typing rules of CBN $\lmmt$}\label{fig:cbnlmmt}
$$
\begin{array}{c}
\infer[LAx]{\Gamma|a:A\vdash a:A,\Delta}{}\quad\quad\infer[RAx]{\Gamma,x:A\vdash x:A|\Delta}{} \\ \\
\infer[LIntro]{\Gamma|u::e:A\supset B\vdash \Delta}{\Gamma\vdash
u:A|\Delta & \Gamma|e:B\vdash
\Delta}\quad\quad\infer[RIntro]{\Gamma\vdash\lambda x.t:A\supset
B|\Delta}{\Gamma,x:A\vdash t:B|\Delta}
\\
\\
\infer[LSel]{\Gamma|\mt x.c:A\vdash
\Delta}{c:(\Gamma,x:A\vdash\Delta)}\quad\quad
\infer[RSel]{\Gamma\vdash \mu a.c:A|\Delta}{c:(\Gamma\vdash
a:A,\Delta)}\\
\\
\infer[Cut]{\com{t}{e}:(\Gamma\vdash\Delta)}{\Gamma\vdash
t:A|\Delta & \Gamma|e:A\vdash \Delta}
\end{array}
$$
\end{figure}

There are 6 substitution operations altogether:
$$[t/x]c\quad\quad[t/x]u\quad\quad[t/x]e\quad\quad[e/a]c\quad\quad[e/a]u\quad\quad[e/a]e'$$

\noindent
We consider 5 reduction rules:

$$
\begin{array}{rrclcrrcl}
(\beta) & \com{\lambda x.t}{u::e}& \rightarrow & \com{u}{\mt
x.\com{t}{e}}&\quad&(\mu) & \mt x.\com{x}{e} & \rightarrow &
e,\textrm{ if $x\notin
e$}\\
(\pi) & \com{\mu a.c}{E} & \rightarrow & [E/a]c&\quad&(\mt) & \mu
a.\com{t}{a} & \rightarrow & t,\textrm{ if $a\notin t$}\\
(\sigma) & \com{t}{\mt x.c} & \rightarrow & [t/x]c&&&&&
\end{array}
$$

\noindent These are the reductions considered by Polonovski in
\cite{PolonovskiFOSSACS2004}, with three provisos. First, the
$\beta$-rule for the subtraction connective is not included.
Second, in the $\pi$-rule, the co-term involved is an evaluation
context $E$; this is exactly what characterizes the call-by-name
restriction of $\lmmt$ \cite{CurienHerbelinFP2000}. Third, the
naming of the rules is non-standard. Curien and Herbelin (and
Polonovski as well) name rules $\pi$ and $\sigma$ as $\mu$, $\mt$,
respectively. The name $\mu$ has moved to the rule called $se$ in
\cite{PolonovskiFOSSACS2004}. By symmetry, the rule called $sv$ by
Polonovski is now called $\mt$. The reason for this change is
explained below by the spectrum of systems in Section
\ref{subsec:spectrum}: the rule we now call $\pi$ (resp. $\mu$) is
the most general form of the rule with the same name in the system
$\lJ$ (resp. $\lJm$), and therefore its name goes back to
\cite{JoachimskiMatthesRTA2000} (resp. \cite{jesLuisTLCA03},
actually back to \cite{SchwichtenbergTCS1999}).

%++++++++++++++++++++++++++++++++++++++++++++++++++++++++++++++++++++++
\subsubsection{The intuitionistic fragment of CBN
$\lmmt$}\label{subsubsec:int-fragment}

The following description is in the style of Section 2.13 of Herbelin's
habilitation thesis \cite{HerbelinHabilitation}.

Let $\cov$ be a fixed co-variable. The intuitionistic terms,
co-terms and commands are generated by the grammar

$$
\begin{array}{lcrcl}
\textrm{(Terms)} &  & t,u,v & ::= & x\,|\,\lambda x.t\,|\,\mu\cov.c\\
\textrm{(Co-terms)} &  & e & ::= & \cov\,|\,u::e\,|\,\mt x.c\\
\textrm{(Commands)} &  & c & ::= & \com{t}{e}
\end{array}
$$

\noindent Terms have no free occurrences of co-variables. Each
co-term or command has exactly one free occurrence of $\cov$.
Sequents are restricted to have exactly one formula in the RHS.
Therefore, they have the particular forms $\Gamma\vdash t:A$, $
\Gamma|e:A\vdash \cov:B$ and $c:(\Gamma\vdash\cov:B)$. We omit
writing the intuitionistic typing rules. Reduction rules read as
for $\lmmt$, except for $\pi$ and $\mt$:
$$(\pi) \quad \com{\mu\cov.c}{E} \rightarrow  [E/\cov]c\qquad\quad
(\mt) \quad \mu\cov.\com{t}{\cov} \rightarrow t$$

\noindent Since $\cov\notin t$, $[E/\cov]t=t$. Let us spell out
$[E/\cov]c$ and $[E/\cov]e$.

$$
\begin{array}{rclcrcl}
[E/\cov]\com{t}{e}&=&\com{t}{[E/\cov]e}&\quad&{[}E/\cov{]}(u::e)&=&u::[E/\cov]e\\
{[}E/\cov{]}\cov&=&E&\quad&{[}E/\cov{]}(\mt x.c)&=&\mt x.[E/\cov]c\\
\end{array}
$$

\noindent If we define rule $\pi$ as $\com{\mu
\cov.\com{t}{e}}{E}\rightarrow\com{t}{[E/\cov]e}$ and
${[}E/\cov{]}(\mt x.\com{t}{e})=\mt x.\com{t}{[E/\cov]e}$ we can
avoid using $[E/\cov]c$ altogether.

The $\lJmse$-calculus is obtained from the intuitionistic fragment
as a mere notational variant. The co-variable $\cov$ disappears
from the syntax. The co-term $\cov$ is written $[]$. $\ctt{c}$ is
the coercion of a command to a term, corresponding to $\mu
\cov.c$. This coercion is what remains of the $\mu$ binder in the
intuitionistic fragment. Since there is no $\mu$, there is little
sense for the notation $\mt$. So we write $(x)c$ instead of $\mt
x.c$. Reduction rule $\mt$ now reads $\ctt{t[]}\rightarrow t$ and
is renamed as $\epsilon$. Sequents $\Gamma|e:A\vdash \cov:B$ and
$c:(\Gamma\vdash\cov:B)$ are written $\seql{\Gamma}{e}{A}{B}$ and
$\seqc{\Gamma}{c}{B}$. Co-terms are ranged over by $l$ (instead of
$e$) and thought of as generalised lists. Finally, $[E/\cov]l$ is
written $\app lE$.

%*********************************************************************************

\subsection{A spectrum of intuitionistic calculi}\label{subsec:spectrum}

The calculus $\lJmse$ can also be explained as the end product of
successive extensions of the simply-typed $\lambda$-calculus through
several intuitionistic calculi, as illustrated in
Fig.~\ref{fig:spectrum}, which includes both natural deduction
systems and sequent calculi other than $\lJmse$.

\newcommand{\lTo}[1]{\stackrel{#1}{\longleftarrow}}

%----------------------------------------------------------------
\begin{figure}\caption{A spectrum of intuitionistic
calculi}\label{fig:spectrum}
\[
\begin{array}{|ccc|} \hline
&&\\
&\lJmse\lTo{\e}\lJms\lTo{\s}\lJm\lTo{\m}\lJ\lTo{\J}\lambda&\\
&&\\
&\begin{tabular}{lr} Sequent
Calculus\quad\quad\quad\quad\quad&\quad\quad\quad\quad\quad Natural
Deduction
\end{tabular}&\\
\hline
\end{array}
\]
\end{figure}

%----------------------------------------------------------------

Each extension step adds both a new feature and a reduction rule
to the preceding calculus. The following table summarizes these
extensions.

$$
\begin{array}{l|l|l}
\textrm{calculus\,}&\textrm{\,reduction rules\,}&\textrm{\,feature added}\\
\hline \lb&\beta\\
\lJ&\beta,\pi&\textrm{generalised application}\\
\lJm&\beta,\pi,\mu&\textrm{multiarity}\\
\lJms&\beta,\pi,\mu,\sigma&\textrm{explicit substitution}\\
\lJmse&\beta,\pi,\mu,\sigma,\epsilon&\textrm{empty lists of
arguments}
\end{array}
$$

\noindent The scheme for naming systems and reduction rules
intends to be systematic (and in particular explains the name
$\lJmse$).

The path between the two end-points of this spectrum visits and
organizes systems known from the literature. $\lJ$ is a variant of
the calculus $\Lambda $J of \cite{JoachimskiMatthesRTA2000}.
$\lJm$ is a variant of the system in \cite{jesLuisTLCA03}.
$\lJmse$ is studied in \cite{jesTLCA07} under the name
$\lambda^{Gtz}$. This path is by no means unique. Other
intermediate systems could have been visited (like the multiary
$\lambda$-calculus $\lm$, named $\lambda Ph$ in
\cite{jesLuisTLCA03}), had the route been a different one, i.\,e.,
had the different new features been added in a different order.
The reader is referred to the literature for the specific
motivations underlying the introduction of the intermediate
systems $\lJ$, $\lJm$, and $\lJms$. Here, their interest lies in
being the successive systems obtained by the addition, in a
specific order, of the features exhibited by $\lJmse$.

Each system $\mathcal{L}\in\{\lJ,\lJm,\lJms\}$ embeds in the system
immediately after it in this spectrum, in the sense of allowing a
mapping that strictly simulates reduction. Hence, strong
normalisation is inherited from $\lJmse$ all the way down to $\lJ$.
Also, each $\mathcal{L}\in\{\lJ,\lJm,\lJms\}$ has, by composition,
an embedding $g_{\mathcal{L}}$ in $\lJmse$. Let us see all this with
some detail.

%++++++++++++++++++++++++++++++++++++++++++++++++++++++++++++++++++

\subsubsection{$\lJ$-calculus.} The terms of $\lJ$ are generated by the
grammar:

$$
\begin{array}{lcrcl}
 &  & t,u,v & ::= & x\,|\,\lambda x.t\,|\,t(u,x.v)\\
\end{array}
$$

\noindent Construction $t(u,x.v)$ is called generalised application.
Following \cite{JoachimskiMatthesRTA2000}, $(u,x.v)$ is called a
generalised argument; they will be denoted by the letters $R$ and
$S$. Typing rules for $x$ and $\lambda x.t$ are as usual, and the
new rule is that of generalised application, given in
Figure~\ref{fig:lj}.
\begin{figure}\caption{Typing rules of $\lJ$}\label{fig:lj}
\[
\begin{array}{c}
\infer[Ax]{\Gamma,x:A\vdash x:A}{}\quad\quad
\infer[Intro]{\Gamma\vdash\lambda x.t:A\imp B}{\Gamma,x:A\vdash
t:B}\\
\\\infer[GApp]{\Gamma\vdash t(u,x.v):C}{\Gamma\vdash
t:A\imp B & \Gamma\vdash u:A &\Gamma,x:B\vdash v:C}
\end{array}
\]
\end{figure}
% ----------------------------

Reduction rules are as in \cite{JoachimskiMatthesRTA2000}, except
that $\pi$ is defined in the ``eager'' way:
$$
\begin{array}{rrclcrrcl}
(\beta) & (\lambda x.t)(u,y.v)& \rightarrow & [[u/x]t/y]v&\quad&(\pi) & tRS & \rightarrow & t(\app RS)\\
\end{array}
$$
where the generalised argument $R@S$ is defined by
recursion on $R$:
$$
\begin{array}{rclcrcl}
\app{(u,x.V)}S&=&(u,x.VS)&\quad&\app{(u,x.tR')}S&=&(u,x.t(\app{R'}S)),\\
\end{array}
$$
for $V$ a value, i.\,e., a variable or a
$\lambda$-abstraction. The operation $\app{}{}$ is associative,
which allows to join the critical pair of $\pi$ with itself as
before for $\lJmse$. The other critical pair stems from the
interaction of $\beta$ and $\pi$ and is joinable as well.

Strong normalisation of typable terms immediately follows from that
of $\Lambda$J in \cite{JoachimskiMatthes03}, but in the present
article, we even get an embedding into $\lb$.

\medskip
Although we won't use it, we recall the embedding
$\J:\lb\rightarrow\lJ$ just for completeness:

$$
\begin{array}{rcl}
\J(x)&=&x\\
\J(\lb x.t)&=&\lb x.\J(t)\\
\J(tu)&=&\J(t)(\J(u),x.x)
\end{array}
$$

%++++++++++++++++++++++++++++++++++++++++++++++++++++++++++++++++++
\subsubsection{$\lJm$-calculus}

We offer now a new, lighter,
presentation of the system in \cite{jesLuisTLCA03}. The
expressions of $\lJm$ are given by the grammar:
$$
\begin{array}{lrclclrcl}
\textrm{(Terms)} & t,u,v & ::= & x\,|\,\lambda
x.t\,|\,t(u,l)&&
\textrm{(Co-terms)} & l & ::= & u::l\,|\,(x)v\\
\end{array}
$$
The application $t(u,l)$ is both generalised and multiary.
Multiarity is the ability of forming a chain of arguments, as in
$t(u_1,u_2::u_3::(x)v)$. By the way, this term is written
$t(u_1,u_2::u_3::[],(x)v)$ in the syntax of \cite{jesLuisTLCA03}.
There are two kinds of sequents: $\Gamma\vdash t:A$ and
$\Gamma|l:A\vdash B$. Typing rules are given in
Figure~\ref{fig:ljm}.
\begin{figure}\caption{Typing rules of $\lJm$}\label{fig:ljm}
$$
\begin{array}{c}
\infer[Ax]{\Gamma,x:A\vdash
x:A}{}\quad\quad\infer[GMApp]{\Gamma\vdash t(u,l):C}{\Gamma\vdash
t:A\imp B & \Gamma\vdash u:A &\seql{\Gamma}{l}{B}{C}}\\
\\ \infer[LIntro]{\seql{\Gamma}{u::l}{A\imp B}{
C}}{\Gamma\vdash u:A &
\Gamma|l:B\vdash C}\quad\quad
\infer[RIntro]{\Gamma\vdash\lambda x.t:A\imp B}{\Gamma,x:A\vdash
t:B}\\
\\ \infer[Sel]{\seql{\Gamma}{(x)v}{A}{B}}{\Gamma,x:A\vdash v:B}
\end{array}
$$
\end{figure}

We re-define reduction rules of \cite{jesLuisTLCA03} in this new
syntax. Rule $\mu$ can now be defined as a relation on co-terms.
Rule $\pi$ is changed to the ``eager'' version, using letters $R$
and $S$ for generalised arguments, i.\,e., elements of the form
$(u,l)$.

$$
\begin{array}{rrcl}
(\beta_1) & (\lambda x.t)(u,(y)v)& \rightarrow &
[[u/x]t/y]v\\
(\beta_2) & (\lambda x.t)(u,v::l)& \rightarrow &
([u/x]t)(v,l)\\
(\pi) & tRS & \rightarrow & t(\app RS)\\
(\mu) &  (x)x(u,l) & \rightarrow &
u::l,\textrm{ if $x\notin u,l$}\\
\end{array}
$$

\noindent $\beta=\beta_1\cup\beta_2$. The generalised argument $\app
RS$ is defined with the auxiliary notion of the co-term $\app lS$
that is defined by recursion on $l$ by
$$
\begin{array}{rcl}
\app{(u::l)}S&=&u::(\app lS)\\
 \app{((x)V)}S&=&(x)VS,\quad\textrm{for $V$
a value}\\
\app{((x)t(u,l))}S&=&(x)t(u,\app lS)\\
\end{array}
$$
Then, define $\app RS$ by $\app{(u,l)}S=(u,\app lS)$. Since the
auxiliary operation $\app{}{}$ can be proven associative, this also
holds for the operation $\app{}{}$ on generalised arguments. Apart
from the usual self-overlapping of $\pi$ that is joinable by
associativity of $\app{}{}$, there are critical pairs between
$\beta_i$ and $\pi$ that are joinable. The last critical pair is
between $\beta_1$ and $\mu$ and needs a $\beta_2$-step to be joined.

The embedding $\m:\lJ\rightarrow\lJm$ is given by
$$
\begin{array}{rcl}
\m(x)&=&x\\
\m(\lb x.t)&=&\lb x.\m(t)\\
\m(t(u,x.v))&=&\m(t)(\m(u),(x)\m(v))
\end{array}
$$

%++++++++++++++++++++++++++++++++++++++++++++++++++++++++++++++++++
\subsubsection{$\lJms$-calculus}

The expressions of $\lJms$ are given
by:
$$
\begin{array}{lcrclclcrcl}
\textrm{(Terms)} &  & t,u,v & ::= & x\,|\,\lambda
x.t\,|\,tl&\quad&\textrm{(Co-terms)} &  & l & ::= & u::l\,|\,(x)v\\
\end{array}
$$

\noindent The construction $tl$ has a double role: either it is a
generalised and multiary application $t(u::l)$ or it is an explicit
substitution $t(x)v$. See Figure~\ref{fig:ljms} for the typing
rules.
\begin{figure}\caption{Typing rules of $\lJms$: $GMApp$ of $\lJm$ is generalized to $Cut$}\label{fig:ljms}
$$
\begin{array}{c}
\infer[Ax]{\Gamma,x:A\vdash
x:A}{}\quad\quad\infer[Cut]{\Gamma\vdash tl:B}{\Gamma\vdash t:A &
\seql{\Gamma}{l}{A}{B}}\\
\\
 \infer[LIntro]{\seql{\Gamma}{u::l}{A\imp B}{
C}}{\Gamma\vdash u:A &
\seql{\Gamma}{l}{B}{C}}\quad\quad\infer[RIntro]{\Gamma\vdash\lambda x.t:A\imp B}{\Gamma,x:A\vdash
t:B}\\
\\ \infer[Sel]{\seql{\Gamma}{(x)v}{A}{B}}{\Gamma,x:A\vdash v:B}
\end{array}
$$
\end{figure}

The reduction rules are as follows:
$$
\begin{array}{rrclcrrcl}
(\beta) & (\lambda x.t)(u::l)& \rightarrow & u((x)tl)&\quad&(\sigma) & t(x)v & \rightarrow & [t/x]v\\
(\pi) & (tl)(u::l') & \rightarrow & t\,(\app l{(u::l')})&\quad&(\mu)
&  (x)xl & \rightarrow & l,\textrm{ if $x\notin
l$}\\
\end{array}
$$
where the co-term $\app l{l'}$ is defined by
$$
\begin{array}{rcl}
\app{(u::l)}{l'}&=&u::(\app l{l'})\\
\app{((x)V)}{l'}&=&(x)Vl', \quad\textrm{for $V$ a value}\\
\app{((x)tl)}{l'}&=&(x)t\,(\app
l{l'})
\end{array}
$$
Again, $\app{}{}$ is associative and guarantees
the joinability of the critical pair of $\pi$ with itself. The
critical pairs between $\beta$ and $\pi$ and between $\sigma$ and
$\mu$ are joinable as for $\lJmse$. The overlap between $\sigma$ and
$\pi$ is bigger than in $\lJmse$ since the divergence arises for
$t((x)v)(u::l)$ with $v$ an arbitrary term whereas in $\lJmse$,
there is only a command at that place. Joinability is nevertheless
easily established.

Comparing these reduction rules with those of $\lJm$, there is only
one $\beta$-rule, whose effect is to generate a substitution. There
is a separate rule $\sigma$ for substitution execution. The
embedding $\s:\lJm\rightarrow\lJms$ is defined by
$$
\begin{array}{rclcrcl}
\s(x)&=&x&\quad&\s(u::l)&=&\s(u)::\s(l)\\
\s(\lb x.t)&=&\lb x.\s(t)&\quad&\s((x)v)&=&(x)\s(v)\\
\s(t(u,l))&=&\s(t)(\s(u)::\s(l))&&&&\\
\end{array}
$$

Finally, let us compare $\lJms$ and $\lJmse$. In the former, any
term can be in the scope of a selection $(x)$, whereas in the latter
the scope of a selection is a command. But in the latter we have a
new form of co-term $[]$. Since in $\lJmse$ we can coerce any term
$t$ to a command $t[]$, we can translate $\lJms$ into $\lJmse$, by
defining $e((x)t)=(x)e(t)[]$. In fact, one has to refine this idea
in order to get strict simulation of reduction. The embedding
$\e:\lJms\rightarrow\lJmse$ is defined as
$$
\begin{array}{rclcrcl}
\e(x)&=&x&\quad&\e(u::l)&=&\e(u)::\e(l)\\
\e(\lb x.t)&=&\lb x.\e(t)&\quad&\e((x)V)&=&(x)\e(V)[]\\
\e(tl)&=&\ctt{\e(t)\e(l)}&\quad&\e((x)tl)&=&(x)\e(t)\e(l)\\
\end{array}
$$

%-------------------------------------------------
\begin{prop}\label{prop:simulation-by-embeddings}
Each of the embeddings $m$, $s$ and $e$ preserves typability and types and strictly simulates reduction.
\end{prop}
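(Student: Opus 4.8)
The plan is to treat the three embeddings in parallel, exploiting that $\m$, $\s$ and $\e$ are all defined by structural recursion. \emph{Preservation of typability and types} is a routine simultaneous induction on typing derivations: each source typing rule is matched by a derived target rule assembled from the corresponding target rules. The only clauses that deserve comment are the two clauses of $\e$ that introduce fresh structure, $\e((x)V)=(x)\e(V)[]$ (typed by $Cut$ against an $LAx$ leaf, then $LSel$) and $\e(tl)=\ctt{\e(t)\e(l)}$ (typed by $Cut$, then $RSel$); every other clause merely transports the shape of the derivation.

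For \emph{strict simulation} --- that $T\to T'$ in the source entails $f(T)\to^{+}f(T')$ in the target, where $f$ is the relevant embedding --- I would proceed by induction on the derivation of $T\to T'$, after establishing two auxiliary lemmas, each by a direct structural induction: (i)~every embedding commutes with substitution, e.g.\ $\e([t/x]T)=[\e(t)/x]\e(T)$; and (ii)~every embedding commutes with the pertinent concatenation operator, e.g.\ $\e(\app l{l'})=\app{\e(l)}{\e(l')}$, together with the analogues for the generalised-argument resp.\ co-term concatenations of $\lJ$ and $\lJm$, where on each side the $@$ of the appropriate calculus is meant. The closure rules are then discharged by the induction hypothesis, because each embedding sends a term constructor to a target context that keeps the active subexpression(s) in reducible position and deletes no proper subexpression; so a single genuine step produced by the hypothesis already gives $\to^{+}$ for the whole expression.

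The base cases are checked rule by rule, always producing at least one target step, whence strictness. For $\m$: a $\beta$-step of $\lJ$ maps to one $\beta_{1}$-step of $\lJm$, and a $\pi$-step to one $\pi$-step of $\lJm$, using lemma~(ii) to identify the right-hand sides. For $\s$: $\pi$ and $\mu$ map to a single $\pi$- resp.\ $\mu$-step of $\lJms$, while $\beta_{1}$ maps to a $\lJms$-$\beta$-step followed by two $\sigma$-steps and $\beta_{2}$ to a $\lJms$-$\beta$-step followed by one $\sigma$-step (lemma~(i) gives the right-hand sides, and the variable conventions discard the vacuous substitutions that appear). For $\e$: $\beta$ maps to one $\beta$-step and $\mu$ to one $\mu$-step of $\lJmse$; $\pi$ maps to one $\pi$-step, again via lemma~(ii); and $\sigma$ maps to one $\sigma$-step when the body of the selection is an application, but to a $\sigma$-step followed by an $\epsilon$-step when that body is a value, using $\ctt{t[]}\to t$.

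The main obstacle is the simulation for $\e$, and its source is that $\e$ on a selection $(x)v$ is defined by a case split on whether $v$ is a value or an application $tl$, while reduction can flip that shape (an application $t(y)w$ may $\sigma$-contract to the value held in $w$). To keep the induction uniform I would introduce the auxiliary command $h(v)$ determined by $\e((x)v)=(x)h(v)$ --- so $h(V)=\e(V)[]$ and $h(tl)=\e(t)\e(l)$ --- and prove, as an extra clause of the same induction, that $v\to v'$ implies $h(v)\to^{+}h(v')$; the closure rule for selections then follows at once from the closure rule $c\to c'\Rightarrow (x)c\to(x)c'$ of $\lJmse$. In the one delicate sub-case, where an application $\sigma$-reduces to a value, one is left with a residual of the form $\ctt{c}[]$, which by the remark in Section~\ref{sec:lambdaJmse} is a $\pi$-redex with reduct $c$, so one extra $\pi$-step restores the intended image. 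Tracking this ``one-or-two extra steps'' bookkeeping, together with lemmas~(i) and~(ii), is essentially all there is to the argument; the remaining verifications are mechanical.
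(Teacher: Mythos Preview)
Your overall strategy coincides with the paper's: induction on the reduction derivation, with the crucial device for $\e$ being to carry along the extra statement that $t\to t'$ implies $\e((x)t)\to^{+}\e((x)t')$ (your auxiliary $h$ is precisely this, and the paper phrases it the same way). Lemma~(ii) is indeed an equality for all three embeddings.

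The gap is your lemma~(i) for $\e$: commutation with substitution is \emph{not} an equality there, only
\[
[\e(t)/x]\e(T)\;\to_{\pi}^{*}\;\e([t/x]T)\enspace.
\]
The reason is again the case split you already identified: $\e$ translates $(y)V$ and $(y)tl$ differently, and substitution can flip the shape of the body of a \emph{nested} selection. Concretely, for the co-term $T=(y)x$ and $t=t_{0}l_{0}$ an application one gets
\[
[\e(t)/x]\e(T)=(y)\ctt{\e(t_{0})\e(l_{0})}[]\qquad\textrm{but}\qquad \e([t/x]T)=(y)\e(t_{0})\e(l_{0}),
\]
which differ by one $\pi$-step of the form $\ctt{c}[]\to c$. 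Such residual $\pi$-steps arise at every sub-selection whose body is the variable $x$ (or contains further such selections), so your accounting of the $\sigma$ base case as ``one $\sigma$-step, possibly one $\epsilon$-step'' is incomplete: the paper's analysis is $\sigma$, then possibly $\epsilon$, then a cascade $\to_{\pi}^{*}$ coming from the weakened lemma~(i). The paper isolates this as two observations, proved together: $(y)\e(t)[]\to_{\pi}^{*}\e((y)t)$ and $[\e(t)/x]\e(u)\to_{\pi}^{*}\e([t/x]u)$.

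Your final paragraph catches one instance of this phenomenon --- the single $\ctt{c}[]$ redex at the \emph{outermost} position when the $\sigma$-contractum becomes a value --- but misses that the same mismatch recurs \emph{inside} the substituted term, so one extra step does not suffice in general. Once lemma~(i) for $\e$ is downgraded to $\to_{\pi}^{*}$, the rest of your argument goes through unchanged (the root $\sigma$-step still guarantees strictness). For $\m$ and $\s$ your equality version of lemma~(i) is correct, since neither embedding branches on the shape of a subterm.
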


\begin{proof} Preservation of typability and types is immediate by induction on typing derivations. For strict simulation, we  prove by induction
\begin{description}
\item[(i)]
$t\rightarrow t' \Longrightarrow m(t) \rightarrow^+ m(t')$, for any
$t,t'\in\lJ$
\item[(ii)]
$t\rightarrow t' \Longrightarrow s(t) \rightarrow^+ s(t')$, for any
$t,t'\in\lJm$
\item[(iii)]
$t\rightarrow t' \Longrightarrow e(t) \rightarrow^+ e(t')$ and
$e((x)t) \rightarrow^+ e((x)t')$, for any $t,t'\in\lJms$
\end{description}
which for $f\in\{s,e\}$ requires simultaneous proof of:
$l\rightarrow l' \Longrightarrow f(l) \rightarrow^+ f(l').$ We show
only some details of the proof of (iii). (The other statements have
simpler proofs.) In the cases where  $t\rightarrow_R t'$ (resp.
$l\rightarrow l'$), with $R\in\{\beta,\pi,\mu\}$, in $\lJms$, we
have $e(t) \rightarrow_R e(t')$ and $e((x)t) \rightarrow_R e((x)t')$
(resp. $e(l) \rightarrow_R e(l')$) in $\lJmse$. The proof relative
to $\pi$-steps requires commutation of the embedding with the append
operator, that is requires the identity: $e(\app
l{l'})=\app{e(l)}{e(l')}$, for any $l,l'\in\lJms$. For
$\sigma$-steps the situation is different: one $\sigma$-step in
$\lJms$ gives rise to one $\sigma$-step in $\lJmse$ but also,
possibly, to $\pi$ and $\epsilon$ steps. We consider below the base
case of $\sigma$-reduction. The following two observations are
needed:
\begin{center}
\begin{tabular}{ll}
(1)& $(y)e(t)[] \rightarrow^*_{\pi} e((y)t)$, for any $t\in\lJms$
and
any variable $y$;\\
(2)& $[e(t)/x]e(u)\rightarrow^*_{\pi}e([t/x]u)$, for any
$t,u\in\lJms$ and any variable $x$.
\end{tabular}
\end{center}
In the first observation, one can say more specifically that no
$\pi$-step is required if $t$ is a value and otherwise, if $t$ is a
command, exactly one $\pi$-step of the form $\ctt{c} []\rightarrow
c$ is needed (with $c$ a command). The second observation uses the
first and is proved simultaneously with its analogue for co-terms.

Let us then consider the case where we have the reduction
$t(x)v\rightarrow [t/x]v$ in $\lJms$ . We concentrate on the
sub-case $v=V$. (The other sub-case, where $v=t_0l_0$, is similar.)

\[
\begin{array}{rcll}
e(t(x)V)&=&\ctt{e(t)(x)e(V)[]}\\
&\rightarrow_{\sigma}&\ctt{[e(t)/x](e(V)[])}\\
&=&\ctt{[e(t)/x]e(V)\,[]}\\
&\rightarrow_{\epsilon}&[e(t)/x]e(V)\\
&\rightarrow^*_{\pi}&e([t/x]V)& \textrm{(Observation (2) above)}\\
\end{array}
\]
Now we need to prove: $e((y)t(x)v)\rightarrow^+ e((y)[t/x]v)$. We
consider the possible forms of $V$. Sub-sub-case $V=x$.
\[
\begin{array}{rcll}
e((y)t(x)x)&=&(y)e(t)(x)x[]\\
&\rightarrow_{\sigma}&(y)e(t)[]\\
&\rightarrow^*_{\pi}&e((y)t)& \textrm{(Observation (1) above)}\\
&=&e((y)[t/x]x)
\end{array}
\]
Sub-sub-case $V=z$, with $z$ a variable distinct of $x$:
\[
\begin{array}{rcll}
e((y)t(x)z)&=&(y)e(t)(x)z[]\\
&\rightarrow_{\sigma}&(y)z[]\\
&=&e((y)[t/x]z)&\\
\end{array}
\]
Sub-sub-case $V=\lambda z.u$:
\[
\begin{array}{rcll}
e((y)t(x)\lambda z.u)&=&(y)e(t)(x)e(\lambda z.u) []\\
&\rightarrow_{\sigma}&(y)[e(t)/x]e(\lambda z.u) []\\
&\rightarrow^*_{\pi}&(y)e([t/x](\lambda z.u)) []&
\textrm{(Observation (2) above)}\\
&=&e((y)[t/x](\lambda z.u))& \textrm{($[t/x](\lambda z.u)$ is a
value)}
\end{array}
\]
\end{proof}

%-------------------------------------------------

%++++++++++++++++++++++++++++++++++++++++++++++++++++++++++++++++
\subsection{Confluence}\label{subsec:confluence}

For many purposes, it should suffice to have local confluence, which
we do have for all the systems of this article, since in all of them,
the critical pairs are joinable. Hence, thanks to Newman's lemma, all
systems are confluent on typable terms since they are strongly
normalizing, as shown in the later sections. We also believe that the
usual methods that show the diamond property for properly defined
notions of parallel reduction would yield confluence of all our
systems. The aim of this section is to give indirect proofs for the
systems of the spectrum, by inheriting confluence that is already
known.

Firstly, we argue about confluence of $\lJ$ and $\lJm$. Secondly, we
define and study a mapping from  $\lJms$ to $\lJm$. Thirdly, we
apply the ``interpretation method'' to obtain confluence also of
$\lJms$. Finally, we do the same for $\lJms$ and $\lJmse$ in order
to infer confluence of $\lJmse$.

Confluence of $\lJ$ can be obtained from confluence of the original
system $\lJ$ in \cite{JoachimskiMatthesRTA2000} where $\pi$ is
\emph{lazy}. Below we call $\lpi$ the original lazy version of $\pi$,
which reduces $t(u,x.v)S$ only to $t(u,x.vS)$ (for $v$ a value $V$,
there is no difference between $\pi$ and $\lpi$). Confluence for $\rightarrow_{\beta\pi}$ is
obtained from confluence of $\to_{\beta\lpi}$ in the same way as in
\cite{jesLPintoTYPES03} confluence of $\to_{\beta\pi'}$ is obtained
from $\to_{\beta\lpi}$, where $\pi'$ is yet another variant of $\pi$.

%----------------------------------------------------------
\begin{thm}\label{theo:confluence-lambdaJ}
$\rightarrow_{\beta\pi}$ in $\lJ$ is confluent.
\end{thm}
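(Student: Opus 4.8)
The plan is to inherit confluence of $\rightarrow_{\beta\pi}$ from that of $\rightarrow_{\beta\lpi}$, the original lazy formulation of $\Lambda$J (whose confluence is already available, \cite{JoachimskiMatthesRTA2000}), by the interpretation technique that \cite{jesLPintoTYPES03} uses for the variant $\pi'$; the interpretation is $\pi$-normalisation. So I first settle the properties of the eager rule $\rightarrow_\pi$ in isolation. It is strongly normalising: every $\pi$-step is a non-empty $\lpi$-reduction (an eager step merely runs lazily down the spine of the generalised argument, so $\rightarrow_\pi\subseteq\rightarrow_{\lpi}^{+}$), and $\rightarrow_{\lpi}$, the permutative-conversion relation of $\Lambda$J, is strongly normalising (alternatively, the sum of the sizes of the heads of all applications strictly decreases under $\pi$). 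Its only critical pair is the self-overlap of $\pi$, joinable by associativity of $\app{}{}$, so $\rightarrow_\pi$ is locally confluent, hence confluent by Newman's lemma; write $\downarrow t$ for the $\pi$-normal form of $t$. Moreover $\rightarrow_\pi$ and $\rightarrow_{\lpi}$ have the same normal forms — for either one a term is normal iff no subterm has the shape $tRS$ — so $\downarrow t$ is at the same time the $\pi$- and the $\lpi$-normal form of $t$.

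On the set of $\pi$-normal terms define $a\rightarrow_{\beta/\pi}b$ to mean: $a\rightarrow_\beta c$ and $\downarrow c=b$ for some $c$. Then $\rightarrow_{\beta/\pi}\subseteq\rightarrow_{\beta\lpi}^{*}$ and $\rightarrow_{\beta/\pi}\subseteq\rightarrow_{\beta\pi}^{*}$ (unfold $a\rightarrow_\beta c\rightarrow_{\lpi}^{*}b$, resp.\ $a\rightarrow_\beta c\rightarrow_\pi^{*}b$). The crux is a projection lemma: $t\rightarrow_\beta t'$ implies $\downarrow t\rightarrow_{\beta/\pi}^{*}\downarrow t'$, whereas $\lpi$- and $\pi$-steps leave $\downarrow$ unchanged; hence every $\rightarrow_{\beta\lpi}$- and every $\rightarrow_{\beta\pi}$-step from $t$ to $t'$ yields $\downarrow t\rightarrow_{\beta/\pi}^{*}\downarrow t'$. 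Confluence of $\rightarrow_{\beta/\pi}$ on $\pi$-normal terms is then abstract: a peak $a\rightarrow_{\beta/\pi}^{*}b_1$, $a\rightarrow_{\beta/\pi}^{*}b_2$ is also a $\rightarrow_{\beta\lpi}^{*}$-peak, so by confluence of $\rightarrow_{\beta\lpi}$ there is a common reduct $d$, and the projection lemma sends $b_i\rightarrow_{\beta\lpi}^{*}d$ through $\downarrow$ (which fixes $b_1,b_2$) to give $b_1\rightarrow_{\beta/\pi}^{*}\downarrow d$ and $b_2\rightarrow_{\beta/\pi}^{*}\downarrow d$. The theorem itself then follows similarly: from $t\rightarrow_{\beta\pi}^{*}u$ and $t\rightarrow_{\beta\pi}^{*}v$ project to $\downarrow t\rightarrow_{\beta/\pi}^{*}\downarrow u$ and $\downarrow t\rightarrow_{\beta/\pi}^{*}\downarrow v$, close these by a common $\pi$-normal $w$, and combine $\rightarrow_{\beta/\pi}\subseteq\rightarrow_{\beta\pi}^{*}$ with $t'\rightarrow_\pi^{*}\downarrow t'$ to get $u\rightarrow_{\beta\pi}^{*}w$ and $v\rightarrow_{\beta\pi}^{*}w$.

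The main obstacle is the projection lemma, i.e.\ that $\pi$ can be postponed after $\beta$. Its proof uses the joinability of the $\beta/\pi$ critical pair together with two routine commutation facts, established simultaneously by induction on terms, co-terms and generalised arguments: that substitution commutes with $\app{}{}$, and that $\downarrow([u/x]t)=\downarrow([\downarrow u/x]\downarrow t)$ (since $[u/x]t\rightarrow_\pi^{*}[\downarrow u/x]\downarrow t$ and $\rightarrow_\pi$ is confluent). From these one checks that the $\downarrow$-images of a $\beta$-redex $(\lambda x.t)(u,y.v)$ and of its contractum $[[u/x]t/y]v$ are related by one $\rightarrow_{\beta/\pi}$-step. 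After renaming the variant $\pi'$ to the present $\pi$, these are precisely the computations already carried out in \cite{jesLPintoTYPES03}.
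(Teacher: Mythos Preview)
Your proposal is correct and follows essentially the same interpretation-method approach as the paper: inherit confluence from $\rightarrow_{\beta\lpi}$ via the $\pi$-normal-form map, using that $\rightarrow_\pi\subseteq\rightarrow_{\lpi}^{+}$. The paper's argument is slightly more direct in that it dispenses with your auxiliary relation $\rightarrow_{\beta/\pi}$ and simply closes the diagram at $\pi(t_3)$ using the two facts $t'\rightarrow_\pi^{*}\pi(t')$ and ($t'\rightarrow_{\beta\lpi}^{*}t''\Rightarrow\pi(t')\rightarrow_{\beta\pi}^{*}\pi(t'')$), the latter being exactly your projection lemma with the $\rightarrow_{\beta/\pi}$ layer removed.
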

%----------------------------------------------------------
\begin{proof}
Assume $t\rightarrow_{\beta\pi}^* t_1$ and
$t\rightarrow_{\beta\pi}^* t_2$. Then, also
$t\rightarrow_{\beta\lpi}^* t_1$ and $t\rightarrow_{\beta\lpi}^*
t_2$, and by confluence of $\rightarrow_{\beta\lpi}$ there exists
$t_3$ such that $t_1\rightarrow_{\beta\lpi}^* t_3$ and
$t_2\rightarrow_{\beta\lpi}^* t_3$. The facts
\begin{enumerate}
\item $t'\rightarrow_{\pi}^*\pi(t')$, for all $t'$ in $\lJ$,
\item$t'\rightarrow_{\beta\lpi}^* t''$ implies $\pi(t')\rightarrow_{\beta\pi}^*\pi(t'')$, for all $t',t''$ in $\lJ$,
\end{enumerate}
where notation $\pi(t')$ represents the $\pi$ normal form of term
$t'$ (definable by recursion on $t'$, using a very eager form of generalised application \cite{JoachimskiMatthesRTA2000}), allow to conclude that $t_1,t_2$ both $\beta\pi$-reduce to
$\pi(t_3)$.
\end{proof}

What has been said above for $\lJ$ can be recast for $\lJm$, and
confluence of $\rightarrow_{\beta\pi\mu}$ obtained from confluence
of $\rightarrow_{\beta\lpi\mu}$ \cite{jesLPintoTYPES03}. In
$\lJm$, the lazy $\pi$ rule reads $tRS\to t(\lapp{R}{S})$, where
$\lapp{(u,l)}{S}=(u,\lapp{l}{S})$, and
$\lapp{(u::l)}{S}=u::(\lapp{l}{S})$ and $\lapp{((x)t)}{S}=(x)tS$.

%----------------------------------------------------------
\begin{thm}\label{theo:confluence-lambdaJm}
$\rightarrow_{\beta\pi\mu}$ in $\lJm$ is confluent.
\end{thm}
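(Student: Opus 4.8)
The plan is to mimic the proof of Theorem~\ref{theo:confluence-lambdaJ} verbatim, replacing $\lJ$ by $\lJm$, $\beta\pi$ by $\beta\pi\mu$, and $\beta\lpi$ by $\beta\lpi\mu$, and using as external input the confluence of $\rightarrow_{\beta\lpi\mu}$ established in \cite{jesLPintoTYPES03} (transported to the lighter syntax of the present section). Concretely, suppose $t\rightarrow^*_{\beta\pi\mu}t_1$ and $t\rightarrow^*_{\beta\pi\mu}t_2$. Since every $\pi$-step factors as a sequence of $\lpi$-steps (the eager $@$ on generalised arguments is obtained by iterating the lazy $\lapp{}{}$), we have $t\rightarrow^*_{\beta\lpi\mu}t_i$ for $i=1,2$, and confluence of $\rightarrow_{\beta\lpi\mu}$ yields a common reduct $t_3$ with $t_i\rightarrow^*_{\beta\lpi\mu}t_3$.

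The remaining work is to descend from $\beta\lpi\mu$-reductions back to $\beta\pi\mu$-reductions. For this I would introduce the $\pi$-normal form operator $\pi(\cdot)$ on $\lJm$-terms (and, simultaneously, on co-terms and generalised arguments), defined by recursion using a very eager form of generalised/multiary application in the style of \cite{JoachimskiMatthesRTA2000}; here eagerness means that whenever a generalised application $t(u,l)$ is built with $t$ itself an application, the argument lists are concatenated on the spot via $@$, so that $\pi(\cdot)$ lands in $\pi$-normal form. Then I would establish the two analogues of the facts used for $\lJ$:
\begin{enumerate}
\item $t'\rightarrow^*_{\pi}\pi(t')$ for every $t'$ in $\lJm$ (and likewise for co-terms);
\item $t'\rightarrow_{\beta\lpi\mu}t''$ implies $\pi(t')\rightarrow^*_{\beta\pi\mu}\pi(t'')$ (again with the co-term version proved simultaneously).
\end{enumerate}
Granting these, from $t_i\rightarrow^*_{\beta\lpi\mu}t_3$ we get $\pi(t_i)\rightarrow^*_{\beta\pi\mu}\pi(t_3)$, and from fact~(1) we get $t_i\rightarrow^*_{\pi}\pi(t_i)$; chaining, $t_1$ and $t_2$ both $\beta\pi\mu$-reduce to $\pi(t_3)$, which is the desired common reduct.

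The main obstacle is fact~(2), and specifically the interaction of $\lpi$ (and of $\mu$) with the normal-form operator in the presence of the $@$ operation on generalised arguments. The delicate point is that a single lazy $\lpi$-step $tRS\to t(\lapp RS)$ may correspond, after applying $\pi(\cdot)$, to a genuine eager concatenation $\pi$-step; one must check that $\pi(tRS)$ and $\pi(t(\lapp RS))$ are in fact identical, which reduces to verifying that the eager application operator used to define $\pi(\cdot)$ is compatible with $@$, i.e. a commutation identity of the shape ``$\pi$ of a concatenation equals the concatenation of the $\pi$-normal forms'', proved by induction on the structure of $R$ using associativity of $@$ (already noted in Section~\ref{subsec:spectrum}). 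For the $\mu$-rule one additionally checks that $\pi(\cdot)$ commutes suitably with the linear substitution hidden in $(x)x(u,l)\to u::l$; since $\mu$ does not create new $\pi$-redexes except through this collapse, this is routine. The $\beta_1,\beta_2$ cases are handled as in \cite{JoachimskiMatthesRTA2000} via the standard substitution lemma for $\pi$-normal forms. Once the commutation identity for $@$ is in place, everything else is a straightforward simultaneous induction on terms and co-terms, exactly parallel to the $\lJ$ case, so I would not spell those cases out in detail.
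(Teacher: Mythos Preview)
Your proposal is correct and follows precisely the paper's own approach: the paper's proof simply states that the argument for $\lJ$ goes through with $\beta$ replaced by $\beta\mu$, invoking confluence of $\rightarrow_{\beta\lpi\mu}$ from \cite{jesLPintoTYPES03} and the two facts (1) $t'\rightarrow_{\pi}^*\pi(t')$ and (2) $t'\rightarrow_{\beta\lpi\mu}^* t''$ implies $\pi(t')\rightarrow_{\beta\pi\mu}^*\pi(t'')$. Your additional discussion of the commutation identities needed for fact~(2) is reasonable elaboration beyond what the paper spells out, but the strategy is the same.
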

%----------------------------------------------------------
\proof
The proof above holds if $\beta$ is replaced by $\beta\mu$. In
particular, we have
\begin{enumerate}
\item $t'\rightarrow_{\pi}^*\pi(t')$, for all $t'$ in $\lJm$,
\item$t'\rightarrow_{\beta\lpi\mu}^* t''$ implies $\pi(t')\rightarrow_{\beta\pi\mu}^*\pi(t'')$, for all $t',t''$ in
$\lJm$.\qed
\end{enumerate}

Now consider confluence of $\lJms$. In this case, we cannot rely
on a previous result of confluence for some variant of the system.
Instead, we will lift the confluence result of
\cite{jesLPintoTYPES03} to $\lJms$.
First, we define a mapping $\sm{(\_)}:\lJms\to\lJm$ in Figure~\ref{fig:def-dagger}.
\begin{figure}\caption{Translation of $\lJms$ into $\lJm$}\label{fig:def-dagger}
\begin{eqnarray*}
\sm{x}&=&x\\
\sm{(\lambda x.t)}&=&\lambda x.\sm{t}\\
\sm{(t(x)v)}&=&[\sm{t}/x]\sm{v}\\
\sm{(t(u::l))}&=&\sm{t}(\sm{u},\sm{l})\\
\sm{((x)v)}&=&(x)\sm{v}\\
\sm{(u::l)}&=&\sm{u}::\sm{l}\\
\end{eqnarray*}
\end{figure}

%--------------------------------------------------------
\begin{prop}\label{prop:properties-mapping-Jms-to-Jm}\quad
\begin{enumerate}
\item For all $t\in\lJms$, $t\to_{\sigma}^*s(\sm{t})$.
\item
If $t\to u$ in $\lJms$, then $\sm{t}\to_{\beta\lpi\mu}^*\sm{u}$ in
$\lJm$.
\end{enumerate}
\end{prop}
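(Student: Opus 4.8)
The plan is to prove both items simultaneously by induction, since item~(2) will rely on item~(1) whenever a $\sigma$-redex is contracted, and on the commutation properties of $\sm{(\_)}$ with substitution otherwise. Before the induction, I would record the auxiliary facts that make the routine cases go through. First, $\sm{(\_)}$ commutes with the append operator, i.e. $\sm{(\app l{l'})}=\app{\sm l}{\sm{l'}}$, proved by induction on $l$ using the two clauses for $@$ on co-terms in $\lJms$ and in $\lJm$. Second, substitution lemmas: $[\sm t/x]\sm u$, $[\sm t/x]\sm l$, $[\sm t/x]\sm v$ equal $\sm{([t/x]u)}$ etc., by induction on the expression being substituted into; the only non-structural clause is $\sm{(t'(x')v)}=[\sm{t'}/x']\sm v$, which forces the substitution composition/permutation lemma for $\lJm$. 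Third, for item~(1) I would separately note that $s(\_)$ maps $\lJm$ into $\lJms$ and that $s$ commutes with substitution and with $@$, so that $s(\sm{(t(x)v)})=s([\sm t/x]\sm v)=[s(\sm t)/x]s(\sm v)$; combined with the induction hypothesis $t\to_\sigma^* s(\sm t)$ and $v\to_\sigma^* s(\sm v)$ (and the analogous statements for the other syntactic classes), the $\sigma$-redex $t(x)v$ first reduces by the closure rules to $s(\sm t)(x)s(\sm v)$ and then by one $\sigma$-step to $[s(\sm t)/x]s(\sm v)=s(\sm{(t(x)v)})$, closing that clause.

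For item~(2), the induction is on the derivation of $t\to u$ (covering terms, co-terms and commands together, since $\lJms$ has two syntactic classes and the one-step relation is defined by mutual closure). The closure cases are immediate from the induction hypotheses together with the fact that $\sm{(\_)}$ is a function defined by structural recursion, \emph{except} the closure case $v\to v'\Rightarrow t(x)v\to t(x)v'$ and $t\to t'\Rightarrow t(x)v\to t'(x)v$: there $\sm{(t(x)v)}=[\sm t/x]\sm v$, so I must pass a one-step reduction through a substitution, which is where the "$\to^*$" (rather than "$\to^+$" or "$\to$") in the statement is genuinely needed — contracting a redex inside $\sm v$ may be duplicated or erased by the substitution $[\sm t/x](\_)$, and likewise a reduction in $\sm t$ gets copied once per free occurrence of $x$. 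A standard substitutivity lemma for $\to_{\beta\lpi\mu}$ in $\lJm$ (if $a\to b$ then $[c/x]a\to^* [c/x]b$ and $[a/x]c\to^*[b/x]c$) handles this. For the base cases one checks the four rules $\beta,\pi,\mu,\sigma$ of $\lJms$. The rule $\sigma$, $t(x)v\to[t/x]v$, is the interesting one: $\sm{(t(x)v)}=[\sm t/x]\sm v$ and $\sm{([t/x]v)}=[\sm t/x]\sm v$ by the substitution lemma, so \emph{zero} steps suffice. For $\beta$ (both $\beta_1$ and $\beta_2$ on the $\lJm$ side, depending on whether $l$ in $u::l$ is already of selection form), $\pi$ (using the append-commutation lemma, possibly needing the remark that the eager $\pi$ reduces to the lazy $\lpi$-reduct in at most one further step, so one lands inside $\to_{\beta\lpi\mu}^*$), and $\mu$ (a direct match, using that the side condition $x\notin u,l$ transfers under $\sm{(\_)}$), everything is a short computation.

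The main obstacle I anticipate is bookkeeping rather than conceptual: getting the substitution lemmas for $\lJm$ precisely right — in particular the clause for $\sm{(t(x)v)}=[\sm t/x]\sm v$ means that a single syntactic substitution in $\lJms$ unfolds, after applying $\sm{(\_)}$, into an \emph{iterated} substitution in $\lJm$, so the substitution-composition lemma $[a/x][b/y]c = [[a/x]b/y][a/x]c$ (for $x\notin\,$free vars introduced by $b$, after suitable renaming) must be available and applied carefully, and one must check that no variable-capture issues arise from the bound $x$ in $(x)v$. Once those lemmas are in place, both items follow by a straightforward simultaneous induction, and the only point requiring slight care in the write-up is to state the co-term companions of (1) and (2) explicitly so that the induction hypothesis is strong enough.
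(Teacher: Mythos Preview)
Your overall plan is close to the paper's, but there is one genuine gap: the claimed equality $\sm{(\app l{l'})}=\app{\sm l}{\sm{l'}}$ is false (and, as written, not even well-typed, since in $\lJm$ the append operator takes a co-term and a \emph{generalised argument} $(u,l)$, not two co-terms). The correct replacement is a \emph{reduction}, proved by simultaneous induction on $l_1$:
\[
\sm{(tl_1)}(\sm{u},\sm{l_2})\;\to_{\lpi}^*\;\sm{(t(\app{l_1}{(u::l_2)}))}
\qquad\textrm{and}\qquad
\lapp{\sm{l_1}}{(\sm{u},\sm{l_2})}\;\to_{\lpi}^*\;\sm{(\app{l_1}{(u::l_2)})}\enspace.
\]
Equality cannot hold because $\sm{(\_)}$ collapses explicit substitutions: when $l_1=(x)t'l'$ with $l'=u'::l''$, the term $\sm{(tl_1)}$ already contains the $\lJm$-application $\sm{t'}(\sm{u'},\sm{l''})$, and applying the outer argument $(\sm u,\sm{l_2})$ yields a $\pi$-redex that must actually be contracted to reach the dagger of the eager $\lJms$-append. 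This is exactly why the target relation in item~(2) is $\to_{\beta\lpi\mu}^*$ with the \emph{lazy} $\lpi$: with eager $\pi$ in $\lJm$ the statement fails, since eager $@$ would unfold \emph{past} the dagger of the $\lJms$-reduct. (Concretely, take $v_0=t_0(u_0::(x)(t_1(z)z))(u::k)$ and its $\pi$-reduct $v_1$ in $\lJms$; then $\sm{v_0}\to_{\lpi}\sm{v_1}$ but $\sm{v_0}\not\to_{\pi}\sm{v_1}$.) Your parenthetical remark that ``the eager $\pi$ reduces to the lazy $\lpi$-reduct in at most one further step'' has the direction backwards and does not rescue the argument.

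Two minor organisational points. First, items~(1) and~(2) are independent; in particular (2) does not rely on (1) for the $\sigma$ base case, as you yourself observe later (zero steps suffice, via the substitution lemma $[\sm t/x]\sm v=\sm{([t/x]v)}$). Second, your handling of the closure cases for $t(x)v$ and of the $\mu$ base case is correct; for $\mu$ note that when $l=(y)v$ one gets $\sm{((x)xl)}=\sm l$ up to $\alpha$-renaming, so the paper records this as $(x)\sm{(xl)}\to_\mu^=\sm l$.
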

%--------------------------------------------------------
\begin{proof}
1. The claim is proved together with the similar claim for
$l\in\lJms$ by simultaneous induction on $t$ and $l$.

2. The claim is proved together with the similar claim for
$l\to l'$ in $\lJms$. The proof is by simultaneous induction on
$t\to u$ and $l\to l'$. The proof uses the following facts:

(i) $(\lambda
x.\sm{t})(\sm{u},\sm{l})\to_{\beta}[\sm{u}/x]\sm{(tl)}$.

(ii)
$\sm{(tl_1)}(\sm{u},\sm{l_2})\to_{\lpi}^*\sm{(t(\app{l_1}{(u::l_2)}))}$
and
$\lapp{\sm{l_1}}{(\sm{u},\sm{l_2})}\to_{\lpi}^*\sm{(\app{l_1}{u::l_2})}$.

(iii) $[\sm{t}/x]\sm{v}=\sm{([t/x]v)}$.

(iv) $(x)\sm{(xl)}\to_{\mu}^=\sm{l}$, if $x\notin
l$.\footnote{$\to_R^=$ denotes the reflexive closure of $\to_R$.}

(i) and (iv) are proved by case analysis of $l$. (ii) is proved by
induction on $l_1$. (iii) is proved together with
$[\sm{t}/x]\sm{l}=\sm{([t/x]l)}$ by simultaneous induction on $v$
and $l$.\footnote{In $\lJm$ one has to use $\lpi$ and not $\pi$
for statement (2) to hold. Consider the $\lJms$-terms
$v_0=t_0(u_0::(x)(t_1(z)z))(u::k)$ and
$v_1=t_0(u_0::(x)t_1(z)z(u::k))$. Then $v_0\to_{\pi}v_1$ but
$\sm{v_0}\to_{\pi}\sm{v_1}$ fails.}
\end{proof}

%----------------------------------------------------------
\begin{thm}\label{theo:confluence-lambdaJms}
$\rightarrow_{\beta\pi\sigma\mu}$ in $\lJms$ is confluent.
\end{thm}
%----------------------------------------------------------
\begin{proof}
Suppose $t\to_{\beta\pi\sigma\mu}^*t_i$, $i=1,2$, in $\lJms$. By
part 2 of Proposition \ref{prop:properties-mapping-Jms-to-Jm},
$\sm{t}\to_{\beta\lpi\mu}^*\sm{t_i}$ in $\lJm$. By confluence
\cite{jesLPintoTYPES03}, there is $u\in\lJm$ such that
$\sm{t_i}\to_{\beta\lpi\mu}^*u$. By property 2 in the proof of
Theorem \ref{theo:confluence-lambdaJm}, we get
$\sm{t_i}\to_{\beta\pi\mu}^*\pi(u)$. By the properties of mapping
$s:\lJm\to\lJms$, we get
$s(\sm{t_i})\to_{\beta\pi\sigma\mu}^*s(\pi(u))$. We close the
diagram in $\lJms$ because $t_i\to_{\sigma}^*s(\sm{t_i})$.
\end{proof}

Finally we consider confluence of $\lJmse$. We will lift
confluence of $\lJms$. First, we define a mapping
$\es{(\_)}:\lJmse\to\lJms$ in Figure~\ref{fig:def-circ}
\begin{figure}\caption{Embedding of $\lJmse$ into $\lJms$}\label{fig:def-circ}
\begin{eqnarray*}
\es{x}&=&x\\
\es{(\lambda x.t)}&=&\lambda x.\es{t}\\
\es{\ctt{tl}}&=&\es{t}\es{l}\\
\es{[]}&=&(x)x\\
\es{((x)tl)}&=&(x)\es{t}\es{l}\\
\es{(u::l)}&=&\es{u}::\es{l}\\
\end{eqnarray*}
\end{figure}
whose intuitive idea
is that, in some
sense, $\lJmse$ is a subsystem of $\lJms$ -- precisely the
subsystem where selection is restricted to the cases $(x)x$ and
$(x)tl$.

%--------------------------------------------------------
\begin{prop}\label{prop:properties-mapping-Jmse-to-Jms}\quad
\begin{enumerate}
\item For all $t\in\lJmse$, $e(\es{t})\to_{\mu}^*t$.
\item
If $t\to u$ in $\lJmse$, then $\es{t}\to^+\es{u}$ in $\lJms$.
\end{enumerate}
\end{prop}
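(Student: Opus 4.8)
The plan is to prove both statements by the usual simultaneous induction over the structure of expressions, exactly as was done for the analogous maps $\sm{(\_)}$ and $s$ in Section~\ref{subsec:confluence}. For part~1, I would prove $e(\es{t})\to_{\mu}^*t$ together with the companion statements $e(\es{l})\to_{\mu}^*l$ for co-terms $l$ and $e(\es{c})\to_{\mu}^*c$ for commands $c$ (actually, since $\es{(\_)}$ is defined on terms, co-terms and commands simultaneously, one needs the statement on all three classes). The only non-routine clause is the base case $l=[]$: there $\es{[]}=(x)x$, so $e(\es{[]})=e((x)x)$; but $(x)x$ is a co-term of the shape $(x)xl$ with $l=[]$, so $e((x)x)=(x)\e(x)\e([])=(x)x[]$, and one $\mu$-step $(x)x[]\to_{\mu}[]$ closes the case (note $x\notin[]$). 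All the other clauses match the definitions of $\es{(\_)}$ and $e$ clause by clause and the induction hypotheses assemble directly; the case $l=(x)tl'$ matches because $\e((x)tl')=(x)\e(t)\e(l')$ and $\es{((x)tl')}=(x)\es{t}\es{l'}$, and similarly for $\ctt{tl}$ and $u::l$.

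For part~2 I would again prove it simultaneously with $l\to l'$ in $\lJmse$ implies $\es{l}\to^+\es{l'}$ in $\lJms$, and the corresponding statement for commands, by induction on the derivation of the reduction step. The closure rules are immediate from the induction hypotheses since $\es{(\_)}$ is a homomorphism for all the term constructors. The work is in the five base cases. For $\beta$, $\pi$ and $\mu$ one expects a single step of the same name in $\lJms$: the $\pi$-case requires the commutation identity $\es{(\app{l}{l'})}=\app{\es{l}}{\es{l'}}$, proved by a separate easy induction on $l$ (using the three clauses of $@$ in $\lJmse$ against the three clauses in $\lJms$, with the value-clause $\app{((x)V)}{l'}=(x)Vl'$ on the $\lJms$ side matching the image of $\app{((x)x\cdots)}{}$ — actually in $\lJmse$ every co-term is $[]$, $u::l$ or $(x)tl$, so only those three $@$-clauses arise). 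The $\sigma$-case $t(x)c\to[t/x]c$ needs the substitution-commutation lemma $\es{([t/x]T)}=[\es{t}/x]\es{T}$ for every expression $T$, again a routine simultaneous induction; then $\es{(t(x)c)}=\es{t}(x)\es{c}\to_{\sigma}[\es{t}/x]\es{c}=\es{([t/x]c)}$ in one step. The $\epsilon$-case is where the map genuinely does something: $\es{(\ctt{t[]})}=\es{t}\es{[]}=\es{t}(x)x$, and $\es{t}(x)x\to_{\sigma}[\es{t}/x]x=\es{t}$ is again a single step — so $\epsilon$ in $\lJmse$ is simulated by one $\sigma$ in $\lJms$, which is why the conclusion is only $\to^+$ and not $\to$. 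Collecting these, every base reduction of $\lJmse$ is simulated by a strictly positive number of reductions of $\lJms$.

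The main obstacle, such as it is, is bookkeeping rather than conceptual: one must set up the mutual induction over all three syntactic classes correctly and prove the two auxiliary commutation lemmas ($\es{(\_)}$ commutes with $@$ and with substitution) before tackling part~2, and one must be careful that the $\epsilon$ and $[]$ clauses — the only places where $\es{(\_)}$ is not literally an isomorphism onto its image — behave as claimed. No ingenuity beyond choosing the right companion statements is needed, since confluence of $\lJms$ (Theorem~\ref{theo:confluence-lambdaJms}) together with part~2 and part~1 will then yield confluence of $\lJmse$ by the same interpretation-method argument used in the proof of Theorem~\ref{theo:confluence-lambdaJms}.
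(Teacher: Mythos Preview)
Your plan follows the paper's approach and would go through, but one of your auxiliary lemmas is misstated and will fail as written. You claim the commutation \emph{identity} $\es{(\app{l}{l'})}=\app{\es{l}}{\es{l'}}$; in fact this does not hold. In the base case $l=[]$ you get $\es{[]}=(x)x$, and in $\lJms$ the append operator gives $\app{((x)x)}{l'}=(x)xl'$ (using the value clause), which is only \emph{$\mu$-reducible} to $l'=\es{(\app{[]}{l'})}$, not equal to it. The correct lemma, as the paper states, is $\app{\es l}{\es{(u_1::l_1)}}\to_\mu\es{(\app l{(u_1::l_1)})}$, proved by induction on $l$.

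Consequently your expectation that $\pi$ is simulated by ``a single step of the same name'' is also wrong: for $E=u_1::l_1$ the image $(\es t\,\es l)(\es{u_1}::\es{l_1})$ first does one $\pi$-step in $\lJms$ and then needs the extra $\mu$-step supplied by the corrected lemma; for $E=[]$ the image $(\es t\,\es l)((x)x)$ is not a $\pi$-redex at all but is handled by one $\sigma$-step. None of this threatens the conclusion $\to^+$, so your argument survives once you downgrade the equality to a reduction. A minor point: in part~1 you invoke the wrong clause of $e$ for $e((x)x)$ --- in $\lJms$ the body $x$ is a value, so the clause $e((x)V)=(x)e(V)[]$ applies; the outcome $(x)x[]$ is the same, but your appeal to ``shape $(x)xl$ with $l=[]$'' does not type-check since $[]$ is not a $\lJms$ co-term.
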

%--------------------------------------------------------
\begin{proof}
Claim 1 is proved together with the similar claim
$e(\es{l})\to_{\mu}^*l$, by simultaneous induction on $t$ and $l$.

Claim 2 for $\mu$ and $\epsilon$ is a direct verification. Since there are no
commands in $\lJms$, one would have to study always two versions of $\beta$,
$\pi$ and $\sigma$: once inside braces $\{\}$, once bound by $(y)$. However,
since all three rules have the form $t_1l_1\to t_2l_2$, it suffices to verify $\es
{t_1}\es{l_1}\to^+\es {t_2}\es{l_2}$ for them. For $\sigma$, we also need the
facts $\es{([t/x]u)}=[\es t/x]\es u$ and $\es{([t/x]l)}=[\es t/x]\es l$, and for the
non-nil case of $\pi$, the fact $\app{\es l}{\es{(u_1::l_1)}}\to_\mu\es{(\app
  l{u_1::l_1})}$ is proved by induction on $l$.
\end{proof}

The first statement of the previous proposition is an obstacle to an
immediate application of the ``interpretation method'', because the
$\mu$-reduction goes in the wrong direction. We overcome this by
observing that, as a consequence of $e(\es{t})\to_{\mu}^*t$, we have
$t\to_{\mu}^*\mu(e(\es{t}))$. (Here $\mu$ is the function that
assigns the $\mu$-normal form of an expression. Clearly, reduction
rule $\mu$ alone is terminating and locally confluent, hence
confluent.) So, in the proof of confluence (Theorem
\ref{thm:confluence-lambdaJmse} below) there will be an extra step
relying on the properties of mapping $\mu$, which is explicitly
given in Figure~\ref{fig:def-mu}.
\begin{figure}\caption{Description of $\mu$-normalisation function in $\lJmse$}\label{fig:def-mu}
\begin{eqnarray*}
\mu{x}&=&x\\
\mu{(\lambda x.t)}&=&\lambda x.\mu{t}\\
\mu{\ctt{tl}}&=&\ctt{\mu{t}\,\mu{l}}\\
\mu{[]}&=&[]\\
\mu{((x)tl)}&=&\mu{l}\,\,\textrm{(if $t=x$ and $x\notin l$)}\\
\mu{((x)tl)}&=&(x)\mu{t}\mu{l}\,\,\textrm{(otherwise)}\\
\mu{(u::l)}&=&\mu{u}::\mu{l}\\
\end{eqnarray*}
\end{figure}

%--------------------------------------------------------
\begin{prop}\label{prop:properties-mapping-mu}\quad
In $\lJmse$, if
$t\to u$, then $\mu{t}\to^*\mu{u}$.
\end{prop}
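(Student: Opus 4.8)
The plan is to prove, by a simultaneous induction on the derivation of the one‑step reduction, the three statements ``$t\to t'\Rightarrow\mu t\to^*\mu t'$'', ``$l\to l'\Rightarrow\mu l\to^*\mu l'$'' and ``$c\to c'\Rightarrow\mu c\to^*\mu c'$'' (the proposition is the first one, but the others are needed for the induction to go through). Three preparatory facts about the normalisation function of Figure~\ref{fig:def-mu} are established first. \emph{First}, $\mu$ genuinely computes a $\mu$-normal form, is idempotent, and preserves the set of free variables; consequently, for a term $t$ one has $\mu t=x$ iff $t=x$, and for a co-term $l$ one has $x\notin l$ iff $x\notin\mu l$, so the side condition ``$t=x$ and $x\notin l$'' selecting the first clause for $(x)tl$ in Figure~\ref{fig:def-mu} is unaffected by replacing $t,l$ by $\mu t,\mu l$; in particular $\mu((x)c)=\mu((x)\mu c)$. \emph{Secondly}, $\mu$ commutes with substitution: $\mu([t/x]T)=[\mu t/x]\mu T$ for every expression $T$, by induction on $T$, the clause for $(y)c$ using the first fact to see that the side condition is preserved. \emph{Thirdly}, $\mu$ commutes with append: $\mu(\app l{l'})=\app{\mu l}{\mu{l'}}$, by induction on $l$, again using the first fact in the $(x)tl$ clause. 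All three are plain equalities.

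For the base cases, rules $\mu$ and $\epsilon$ are immediate from the definition of $\mu$: the redex condition ``$(x)xl$ with $x\notin l$'' is exactly the condition of the first clause, so $\mu((x)xl)=\mu l$, and $\mu\ctt{t[]}=\ctt{\mu t\,[]}\to_\epsilon\mu t$. For $\beta$, $\pi$ and $\sigma$, the $\mu$-image of the redex is again a redex of the same rule, whose contraction reaches the $\mu$-image of the reduct up to at most one further $\mu$-step. For $\beta$: $\mu((\lambda x.t)(u::l))=(\lambda x.\mu t)(\mu u::\mu l)\to_\beta\mu u\,((x)(\mu t\,\mu l))$, and $(x)(\mu t\,\mu l)$ either already equals $\mu((x)(tl))$, or else (when $t=x$, $x\notin l$) is the $\mu$-redex $(x)(x\,\mu l)$, reducing to $\mu l=\mu((x)(tl))$. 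For $\pi$: $\mu(\ctt{tl}E)=\ctt{\mu t\,\mu l}\,\mu E\to_\pi\mu t\,(\app{\mu l}{\mu E})$, and $\app{\mu l}{\mu E}=\mu(\app lE)$ by the third fact. For $\sigma$ with $c=t_0l_0$: if $t_0=x$ and $x\notin l_0$ then $\mu(t(x)c)$ and $\mu([t/x]c)$ both equal $\mu t\,\mu l_0$ (this is the $\sigma/\mu$ critical pair, needing no step); otherwise $\mu(t(x)c)=\mu t\,((x)\mu c)\to_\sigma[\mu t/x]\mu c=\mu([t/x]c)$ by the second fact.

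For the closure rules, every case except $(x)c\to(x)c'$ follows at once from the induction hypothesis, since there $\mu$ acts as a homomorphism at the relevant position. For $(x)c\to(x)c'$ with $c\to c'$: the induction hypothesis gives $\mu c\to^*\mu c'$, hence $(x)\mu c\to^*(x)\mu c'$; since $\mu((x)c)=\mu((x)\mu c)$ (and likewise for $c'$), it remains to compare $(x)\mu c$ and $(x)\mu c'$ with their $\mu$-normal forms, which differ by at most the one $\mu$-step ``$(x)(x\,k)\to_\mu k$'' that strips the binder and applies precisely when the command has head $x$ with $x$ absent from its tail. If this condition fails for both $\mu c$ and $\mu c'$, the reductions $(x)\mu c\to^*(x)\mu c'$ already suffice; if it holds for $\mu c'$, say $\mu c'$ is $x\,k'$ with $x\notin k'$, but not for $\mu c$, then $\mu((x)c)=(x)\mu c\to^*(x)(x\,k')\to_\mu k'=\mu((x)c')$. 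The converse situation — the stripping applies to $\mu c$ but not to $\mu c'$ — is the one delicate point: then $c$ has the form $x\,l_0$ with $x\notin l_0$, and a command with a variable head can be reduced only inside $l_0$ (which keeps the head $x$ and, since reduction creates no free variables, keeps $x$ out of the tail, contradicting the hypothesis) or by a root $\sigma$-step $x(y)c_3\to[x/y]c_3$ with $x\notin c_3$; in that last case the two $\mu$-images coincide up to $\alpha$-renaming of the stripped binder, via the first two facts. I expect exactly this configuration — reconciling the case split in the definition of $\mu$ with a $\sigma$-contraction occurring at the head of a bound command — to be the only genuine obstacle; everything else is bookkeeping with the three preparatory facts.
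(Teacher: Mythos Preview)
Your approach is essentially the paper's: the same simultaneous induction, and your three preparatory facts are exactly the paper's facts (i)--(iii). The paper is extremely terse where you are explicit, so your write-up is a useful expansion rather than a different proof. One organisational difference: the paper phrases the simultaneous claim only for terms and co-terms, handling commands by unfolding $c=tl$ inside the co-term and term cases, whereas you add a separate claim for commands. Either works.

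There is one genuine gap in your case analysis for the closure rule $(x)c\to(x)c'$. You split into three cases: the stripping condition (head $x$ with $x$ absent from the tail) \emph{fails for both} $\mu c$ and $\mu c'$; it \emph{holds for $\mu c'$ but not $\mu c$}; and it \emph{holds for $\mu c$ but not $\mu c'$}. You never treat the fourth case, where it holds for \emph{both}. In that case $\mu((x)c)=k$ and $\mu((x)c')=k'$ where $\mu c=xk$, $\mu c'=xk'$, and your route via the induction hypothesis only gives $(x)xk\to^*(x)xk'$ together with the two one-step $\mu$-reductions $(x)xk\to_\mu k$ and $(x)xk'\to_\mu k'$; from this alone $k\to^*k'$ does not follow (you would be appealing to confluence, which is precisely what this proposition is being used to establish).

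The fix is already in your hands. Whenever the condition holds for $\mu c$, drop the appeal to the command-level induction hypothesis and analyse the one step $c\to c'$ directly, exactly as you do in your ``converse situation''. Since $c=xl$ with $x\notin l$, only two shapes of step are possible. If it is a closure step in $l$, then $c'=xl'$ with $l\to l'$ and $x\notin l'$, so the condition holds for $\mu c'$ too, $\mu((x)c)=\mu l$, $\mu((x)c')=\mu l'$, and the induction hypothesis on co-terms gives $\mu l\to^*\mu l'$. If it is a root $\sigma$-step $x(y)c_3\to[x/y]c_3$ with $x\notin c_3$, your $\alpha$-renaming argument shows $\mu((x)c')=\mu((y)c_3)=\mu l=\mu((x)c)$, regardless of whether the condition holds for $\mu c'$. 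So merging your Case~C with the missing Case~D under the single hypothesis ``the condition holds for $\mu c$'' closes the gap cleanly.
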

%--------------------------------------------------------
\begin{proof}
The claim is proved together with the similar claim for $l\to l'$
by simultaneous induction on $t\to u$ and $l\to l'$. The
proof makes use of the following facts: (i)
$(x)\mu(t)\mu(l)\to_{\mu}^=\mu((x)tl)$; (ii) commutation of
mapping $\mu$ with substitution; (iii) commutation of mapping
$\mu$ with append. Fact (i) is immediate from definition. Facts
(ii) and (iii) are proved by easy inductions.
\end{proof}

%----------------------------------------------------------
\begin{thm}\label{thm:confluence-lambdaJmse}
$\rightarrow_{\beta\pi\sigma\mu\epsilon}$ in $\lJmse$ is
confluent.
\end{thm}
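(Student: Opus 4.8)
The plan is to use the ``interpretation method'' à la Curien-Herbelin, exactly as was done in the proof of Theorem~\ref{theo:confluence-lambdaJms}, but now lifting confluence of $\lJms$ (Theorem~\ref{theo:confluence-lambdaJms}) along the embedding $\es{(\_)}:\lJmse\to\lJms$. The subtlety, already flagged in the text preceding Proposition~\ref{prop:properties-mapping-mu}, is that part~1 of Proposition~\ref{prop:properties-mapping-Jmse-to-Jms} gives $e(\es t)\to_\mu^*t$, i.\,e.\ the $\mu$-reduction points the ``wrong way'' for a direct closing of the diagram in $\lJmse$; the remedy is the auxiliary $\mu$-normalisation function of Figure~\ref{fig:def-mu} together with Proposition~\ref{prop:properties-mapping-mu}.

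Concretely, I would argue as follows. Suppose $t\to_{\beta\pi\sigma\mu\epsilon}^*t_i$ in $\lJmse$ for $i=1,2$. By part~2 of Proposition~\ref{prop:properties-mapping-Jmse-to-Jms}, $\es t\to^*\es{t_i}$ in $\lJms$ (each single step becomes one or more steps, so $\to^+$ iterated gives $\to^*$). By confluence of $\lJms$ (Theorem~\ref{theo:confluence-lambdaJms}), there is $w\in\lJms$ with $\es{t_i}\to^* w$ for $i=1,2$. Applying the embedding $e:\lJms\to\lJmse$, which strictly simulates reduction by Proposition~\ref{prop:simulation-by-embeddings}, we get $e(\es{t_i})\to^* e(w)$ in $\lJmse$ for $i=1,2$. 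Now by part~1 of Proposition~\ref{prop:properties-mapping-Jmse-to-Jms}, $e(\es{t_i})\to_\mu^*t_i$, and consequently $t_i\to_\mu^*\mu(e(\es{t_i}))$ (since $\mu$-reduction is confluent and terminating, any $\mu$-reduct of $e(\es{t_i})$ further $\mu$-reduces to the unique normal form $\mu(e(\es{t_i}))$, and $t_i$ is such a reduct). Finally, by Proposition~\ref{prop:properties-mapping-mu}, from $e(\es{t_i})\to^* e(w)$ we obtain $\mu(e(\es{t_i}))\to^*\mu(e(w))$. Chaining these, $t_i\to_\mu^*\mu(e(\es{t_i}))\to^*\mu(e(w))$ for both $i=1,2$, so $\mu(e(w))$ is the desired common reduct and the diagram is closed in $\lJmse$.

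The step I expect to require the most care is the passage $e(\es{t_i})\to_\mu^* t_i \;\Rightarrow\; t_i\to_\mu^*\mu(e(\es{t_i}))$: it relies on $\mu$ (as a rewrite relation) being both strongly normalising and confluent on \emph{all} terms of $\lJmse$, so that $\mu(\_)$ is a well-defined total function and any two $\mu$-reducts of a common term have a shared $\mu$-normal form. Termination of $\mu$ alone is clear (each step strictly decreases the number of selection constructs of the shape $(x)x\,l$, or more simply the size of the expression), and local confluence of $\mu$ is a routine check of its (trivial) critical pairs, so confluence follows by Newman's Lemma; but this reasoning should be stated explicitly, as it is exactly what legitimises turning the inconveniently-oriented reduction of Proposition~\ref{prop:properties-mapping-Jmse-to-Jms}(1) into a usable co-span. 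Everything else is bookkeeping: assembling the four cited results (Theorem~\ref{theo:confluence-lambdaJms}, Proposition~\ref{prop:simulation-by-embeddings}, Propositions~\ref{prop:properties-mapping-Jmse-to-Jms} and~\ref{prop:properties-mapping-mu}) in the right order.
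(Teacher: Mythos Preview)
Your proposal is correct and follows essentially the same route as the paper's proof: project along $\es{(\_)}$, use confluence of $\lJms$, come back via $e$, and repair the wrong-way $\mu$-reduction of Proposition~\ref{prop:properties-mapping-Jmse-to-Jms}(1) by post-composing with the $\mu$-normal-form function and invoking Proposition~\ref{prop:properties-mapping-mu}. The only difference is that you spell out in more detail why $\to_\mu$ is terminating and confluent (hence why $t_i\to_\mu^*\mu(e(\es{t_i}))$ follows from $e(\es{t_i})\to_\mu^*t_i$), which the paper dispatches in a single parenthetical remark before Proposition~\ref{prop:properties-mapping-mu}.
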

%----------------------------------------------------------
\begin{proof}
Suppose $t\to_{\beta\pi\sigma\mu\epsilon}^*t_i$, $i=1,2$, in
$\lJmse$. By part 2 of Proposition
\ref{prop:properties-mapping-Jmse-to-Jms},
$\es{t}\to_{\beta\pi\sigma\mu}^*\es{t_i}$ in $\lJms$. By
confluence (Theorem \ref{theo:confluence-lambdaJms}), there is
$u\in\lJms$ such that $\es{t_i}\to_{\beta\pi\sigma\mu}^*u$. By the
properties of mapping $e:\lJms\to\lJmse$, we get
$e(\es{t_i})\to_{\beta\pi\sigma\mu\epsilon}^*e(u)$. Proposition
\ref{prop:properties-mapping-mu} yields
$\mu(e(\es{t_i}))\to_{\beta\pi\sigma\mu\epsilon}^*\mu(e(u))$. We
close the diagram in $\lJmse$ because
$t_i\to_{\mu}^*\mu(e(\es{t_i}))$.
\end{proof}

Notice that we might have inferred confluence of $\lJmse$ of that of
the call-by-name $\lmmt$-calculus, presented in
Section~\ref{subsec:lambda-mu-mu-tilde}: if this calculus is
confluent, then its intuitionistic fragment is confluent as well since
it has just the same rules on a subset of terms, co-terms and commands
that is closed under reduction. Finally, its isomorphic copy $\lJmse$
would be confluent as well. However, we are not aware of a proof of
confluence of our version of call-by-name $\lmmt$-calculus: the
calculus considered in \cite{Likavec:PhD-2005} does not have the rules $\mu$
and $\mt$, has a more restrictive notion of evaluation contexts
and imposes $\sigma$-reduction immediately following applications of
$\beta$. As mentioned above, we would expect that the standard direct
proof methods would be applicable to establish confluence of all of
the systems considered in this section.

%*********************************************************************************
\section{CGPS translations}\label{sec:cgps}

In this section we define a CPS translation for $\lJmse$ into the
simply-typed $\lambda$-calculus and show how it fails to provide a
strict simulation of reduction. Next we refine the CPS translation
to a CGPS translation of $\lJmse$ and show that strict simulation
of reduction is obtained. Strong normalisation for $\lJmse$
follows. Finally, we adapt the CGPS translation to the subsystems
of $\lJmse$.

%+++++++++++++++++++++++++++++++++++++++++++++++++++++++++++
\subsection{CPS translation for $\lJmse$}\label{subsec:cps}

We assume the reader is familiar with simply-typed lambda-calculus
(we write $A\imp B$ for the function type $A\to B$ and $\to_\beta$
for the one-step reduction relation). Fix a ground type (some type
variable) $\bot$. Then, $\neg A:=A\imp\bot$, as usual in
intuitionistic logic. While our calculus is strictly intuitionistic
in nature, a double-negation translation nevertheless proves useful
for the purposes of establishing strong normalisation, as has been
shown by de Groote \cite{Groote02} for disjunction with its
commuting conversions. A type $A$ will be translated to
$\ov{A}=\neg\neg A^*,$ with the type $A^*$ defined by recursion on
$A$ (where the definition of $\ov A$ is used as an abbreviation):
\[
\begin{array}{rcl}
 X^*&=&X\\
 (A\imp B)^*&=&\neg\ov{B}\imp\neg\ov{A}\\
 \end{array}
\]

\noindent We thus obtain
\[
\begin{array}{rcl}
  \ov X&=&\neg\neg X\\
  \ov{A \imp B}&=&\neg\neg(\neg\ov{B}\imp\neg\ov{A})
\end{array}
\]

The symmetrically-looking definition of $(A\imp B)^*$ is logically
equivalent to $\ov{A}\imp\neg\neg\ov{B}$. The additional double
negation of $\ov{B}$ is needed even for weak simulation to hold.
See Subsection~\ref{subsec:simpler-cps} for a discussion of this
issue.

The translation of all syntactic elements $T$ will be presented in
Plotkin's \cite{PlotkinTLCA1975} colon notation $(T:K)$ for some
term $K$ taken from simply-typed $\lb$-calculus. A term $t$ of
$\lJmse$ will then be translated into the simply-typed $\lb$-term
$$\ov{t}=\lambda k.(t:k)$$ with a ``fresh'' variable $k$ (one that is not free in $t$). The
definition of $(T:K)$ in Figure~\ref{fig:cps} uses the definition of $\ov{t}$ as an
abbreviation (the variables $m,w$ are supposed to be ``fresh'', in the obvious sense).
\begin{figure}\caption{CPS translation of $\lJmse$}\label{fig:cps}
\[
 \begin{array}{rcl@{\qquad\qquad}rcl}
 (x:K)&=&xK&([]:K)&=&\lambda w.wK\\
 (\lambda x.t:K)&=&K(\lambda wx.w\ov{t})&(u::l:K)&=&\lambda w.w(\lambda m.m\,(l:K)\,\ov{u})\\
 (\ctt c:K)&=&(c:K)&((x)c:K)&=&\lambda x.(c:K)\\[1ex]
 &&&(t[]:K)&=&(t:K)\\
 &&&(t(u::l):K)&=&(t:\lambda m.m\,(l:K)\,\ov{u})\\
 &&&(t(x)c:K)&=&((x)c:K)\ov{t}
 \end{array}
\]
\end{figure}
The translation admits the typing rules of
Figure~\ref{fig:cpstypes}.\footnote{Regrettably, the contexts
$\Gamma'$ observed in these rules, as well as those observable
below in the rules of Fig.~\ref{fig:cgpstypes}, were missing in
\cite{ESMP:TLCA07}.}
%-----------------------------------------------------------------
\begin{figure}\caption{Admissible typing rules for CPS translation of $\lJmse$}\label{fig:cpstypes}
$$
\begin{array}{c}
\infer{\ov{\Gamma},\Gamma'\vdash (t:K):\bot}{\Gamma\vdash
t:A&\ov{\Gamma},\Gamma'\vdash K:\neg
A^*}\quad\quad\infer{\ov{\Gamma},\Gamma'\vdash
(c:K):\bot}{\seqc\Gamma cA&\ov{\Gamma},\Gamma'\vdash K:\neg A^*}\\[2ex]
\infer{\ov{\Gamma},\Gamma'\vdash
(l:K):\neg\ov{A}}{\seql{\Gamma}{l}{A}{B}&\ov{\Gamma},\Gamma'\vdash
K:\neg B^*}
\end{array}
$$
\end{figure}
%-----------------------------------------------------------------
Only the first premise in these three rules refers to $\lJmse$, the
other ones to simply-typed $\lambda$-calculus. $\ov{\Gamma}$ is
derived from $\Gamma$ by replacing every $x:C$ in $\Gamma$ by
$x:\ov{C}$. As a direct consequence (to be established during the
proof of the above typings), type soundness of the CPS translation
follows:
$$\Gamma\vdash_{\lJmse} t:A\Longrightarrow\ov{\Gamma}\vdash_{\lb}\ov{t}:\ov{A}$$

This CPS translation is also sound for reduction, in the sense that
each reduction step in $\lJmse$ translates to zero or more
$\beta$-steps in $\lambda$-calculus. Because of the collapsing of
some reductions, this result does not guarantee yet strong
normalisation of $\lJmse$.

%-----------------------------------------
\begin{prop}
\label{proposition:soundness-cps}
 If $t\rightarrow u$ in $\lJmse$,
then $\ov{t}\rightarrow^*_{\beta} \ov{u}$ in the
$\lambda$-calculus.
\end{prop}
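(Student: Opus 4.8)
The plan is to prove the stronger, more refined statement that describes exactly how each base reduction rule behaves under the colon translation, and then to lift it through the congruence closure by a routine induction. Concretely, I would first establish a substitution lemma: for all relevant syntactic classes, $(\,[t/x]T : K) = [\ov t/x](T:K)$ and, as an abbreviation-level consequence, $\ov{[t/x]u} = [\ov t/x]\ov u$, provided $x$ is chosen fresh for $K$; this is proved by straightforward simultaneous induction on $T$, the only slightly delicate point being the clauses for $[]$, $u::l$ and $(x)c$ where freshness of the auxiliary variables $w$, $m$, $k$ must be maintained. I would also record the obvious fact that the colon translation is monotone under $\to_\beta$ in its context argument: if $K \to_\beta K'$ then $(T:K) \to^*_\beta (T:K')$, again by induction on $T$ (here one genuinely needs $\to^*_\beta$ rather than a single step, because $K$ may occur several times, e.g. in $(u::l:K)$).

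Next I would treat the base cases, i.e.\ the six rules $\beta$, $\pi$, $\sigma$, $\mu$, $\epsilon$ applied at the root, each time computing $(\,\text{lhs} : K)$ and $(\,\text{rhs} : K)$ and checking that the former $\beta$-reduces to the latter (possibly in zero steps). For $\beta$, starting from $((\lambda x.t)(u::l) : K)$ one unfolds to $(\lambda x.t : \lambda m.m\,(l:K)\,\ov u)$, then fires the outer application, contracting $(\lambda m.m\,(l:K)\,\ov u)(\lambda wx.w\ov t)$, and compares with $(u((x)tl):K) = ((x)tl : K)\ov u = (\lambda x.(tl:K))\ov u$; a $\beta$-step here substitutes $\ov u$ for $x$, and one checks the two sides agree using the substitution lemma and the clause $(tl:K)$. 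For $\pi$, $(\ctt{tl}E : K) = (tl : K)$ on the one hand while $(t(\app lE):K)$ must be shown equal to it; this is the clause-driven identity $(\app lE : K) = (l : (E:K))$ composed appropriately, which I would isolate as a small lemma on $\app{}{}$ proved by induction on $l$. For $\sigma$, $(t(x)c : K) = ((x)c:K)\ov t = (\lambda x.(c:K))\ov t \to_\beta [\ov t/x](c:K) = ([t/x]c:K)$ by the substitution lemma. For $\mu$, from $((x)xl : K) = \lambda x.(xl:K)$ one observes $(xl:K) = (l:(x:\cdot))$-style expansion collapses, using $x\notin l$, to $(l:K')$ with the bound $x$ vanishing; for $\epsilon$, $(\ctt{t[]}:K) = (t[]:K) = (t:K) = \ov t$ applied, which matches directly.

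Finally I would close the induction on the derivation of $t \to u$ through the congruence rules: for each closure rule one uses the induction hypothesis together with either the monotonicity-in-$K$ lemma (for the cases where the reduced subterm sits in context position, e.g.\ $tl\to tl'$ forces looking at how $l'$ enters as part of the continuation) or a direct congruence of $\to_\beta$ in the $\lambda$-calculus (for the cases where the reduced subterm maps to a subterm of the translation, e.g.\ $\lambda x.t\to\lambda x.t'$, $\ctt c\to\ctt{c'}$). I expect the main obstacle to be bookkeeping rather than conceptual: keeping the freshness side-conditions consistent across the simultaneous substitution lemma and verifying that in the $\pi$ and $\mu$ base cases the context-duplication in the clauses for $u::l$ does not force more than the claimed $\to^*_\beta$; these are exactly the places where a naive $\to_\beta$ (single step) claim would fail and where, in the next subsection, strictness is seen to break down.
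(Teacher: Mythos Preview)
Your plan is the paper's: prove $T\to T' \Longrightarrow (T:K)\to^*_\beta(T':K)$ simultaneously for terms, co-terms and commands, treat the base rules via auxiliary lemmata, and then propagate through the closure rules. The paper only sketches this, noting that $\epsilon$- and $\pi$-steps are \emph{identified} (zero steps), $\mu$ gives zero or more, $\beta$ and $\sigma$ at least one, and pointing forward to the CGPS lemma for the supporting machinery.

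Two formulations in your sketch need repair, though the intended argument survives. First, your substitution lemma cannot be an equality: already for $T=x$ one has $[\ov t/x](x:K)=\ov t\,K=(\lambda k.(t:k))K$, which only $\beta$-reduces to $([t/x]x:K)=(t:K)$. The correct statement is $[\ov t/x](T:K)\to^*_\beta([t/x]T:K)$ (for $x\notin K$); this direction is precisely what the $\sigma$ and $\beta$ base cases need, so nothing is lost. Second, your append identity $(\app lE:K)=(l:(E:K))$ is ill-typed: if $E:A\vdash B$ and $K:\neg B^*$ then $(E:K):\neg\ov A$, whereas the continuation slot in $(l:\,\cdot\,)$ for $l:A'\vdash A$ demands type $\neg A^*$. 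The working identity (for $E=u::l'$) is $(l:\lambda m.m\,(l':K)\,\ov u)=(\app l{(u::l')}:K)$, proved by induction on $l$ together with its command-level companion; this is the CPS analogue of the CGPS append lemma with the garbage stripped out and $\to_\beta^+$ collapsed to equality, which is exactly why $\pi$ is identified rather than strictly simulated.
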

%------------------------------------------

\begin{proof}
Simultaneously we prove
\[
T\rightarrow T' \Longrightarrow (T:K) \rightarrow^*_{\beta} (T':K)
\]
for $T,T'$ terms, co-terms or commands. More specifically, at the
base cases, the CPS translation does the following: it identifies
$\epsilon$ and $\pi$-steps, sends one $\mu$-step into zero or more
$\beta$-steps in $\lambda$-calculus and sends one $\beta$ or
$\sigma$-step into one or more $\beta$-steps in $\lambda$-calculus.
Some comments on lemmata used in this proof can be found in
the next section.
\end{proof}

%*********************************************************************************
\subsection{CGPS translation for $\lJmse$}\label{subsec:cgps}

This is the central mathematical finding of the present article. It
is very much inspired from a ``continuation and garbage passing
style'' translation for Parigot's $\lambda\mu$-calculus, proposed by
Ikeda and Nakazawa \cite{IkedaNakazawa06}. While they use garbage to
overcome the problems of earlier CPS translations that did not carry
$\beta$-steps to at least one $\beta$-step if they were under a
vacuous $\mu$-binding, as reported in \cite{NakazawaTatsuta03}, we
ensure strict simulation of $\epsilon$, $\pi$ and $\mu$. Therefore,
we can avoid the separate proof of strong normalisation of
permutation steps alone that is used in addition to the CPS in
\cite{Groote02} (there in order to treat disjunction and not for
sequent calculi as we do).

Our CGPS translation passes ``garbage'', in addition to
continuations. We mean by ``garbage'' $\lambda$-terms, denoted
$G$, that are carried around for their operational properties, not
for denotational purposes. They inhabit a type $\top$, of which we
only require that there is a term $\mathsf{s}:\top\imp\top$ such
that $\mathsf{s}\, G\to^+_\beta G$. This can of course be realized
by any type, with $\mathsf{s}:=\lambda x.x$, but it is useful, in
view of a comparison with \cite{IkedaNakazawa06}, to have in mind
another realization, namely $\top:=\perp\imp\perp$ and
$\mathsf{s}:=\lambda x.[x;\lambda z.z]$. Here we are using the
abbreviation $[t;u]:=(\lb x.t)u$ for some $x\notin t$. Then,
$[t;u]\to_\beta t$, and $\Gamma\vdash t:A$ and $\Gamma\vdash u:B$
together imply $\Gamma\vdash[t;u]:A$ (as a derived typing rule of
simply-typed $\lb$-calculus). This is a form of ``deliberate
garbage'' that is used in \cite{IkedaNakazawa06}. Instead of
$\mathsf{s}\,G$, we will write $\suc G$. We will also speak about
``units of garbage''. This is so because, in our translation,
garbage will always have one of the forms $g$ (a variable),
$\suc{g}$, $\suc{\suc{g}}$, etc. We say that $\suc{G}$ has one
more ``unit of garbage'' than $G$, or that, in $\suc{G}$, $G$ is
``incremented''. In the particular realization
$\top:=\perp\imp\perp$, $\suc{G}=_{\beta}[G;\lambda z.z]$; we may
regard $\lambda z.z$ (which lives in $\top$) as the ``unit'' that
is added to $G$. In \cite{IkedaNakazawa06}, garbage is built by
``adding'' a continuation $K$ to $G$, as in $[G;K]$.

The only change w.\,r.\,t. the type translation in CPS is that, now,
 $$\ov{A}=\top\imp\neg\neg A^*$$
is used throughout, hence, again, $X^*=X$ and $(A\imp
B)^*=\neg\ov{B}\imp\neg\ov{A}$.

We define the simply-typed $\lb$-term $(T:G,K)$ for every syntactic
construct $T$ of $\lJmse$ and simply-typed $\lb$-terms $G$ and $K$.
Then, the translation of term $t$ is defined to be
$$\ov{t}=\lambda gk.(t:g,k)$$
with ``new'' variables $g,k$, that is again used as an
abbreviation inside the recursive definition of $(T:G,K)$ in
Figure~\ref{fig:cgps} (the variables $m,w$ are again
``fresh'').\footnote{\label{foot:new-CGPS}There is a slight, but
important, difference between the \emph{definition} of the CGPS
translation presented here and that presented in \cite{ESMP:TLCA07}. In
\cite{ESMP:TLCA07}, several clauses in the definition of $(l:G,K)$
or $(c:G,K)$ contained garbage increment $\suc{G}$, whereas in the
present definition those increments are, so to speak, concentrated
in the clause for $(\ctt{c}:G,K)$. The importance of this
re-definition is that it makes the purpose of those increments
more perspicuous - see the discussion around the simulation
theorem below. For the sake of a precise connection between the
two definitions, let us write the translation of
\cite{ESMP:TLCA07} as $[T:G,K]$ and $\ov{\ov{t}}$. Then, by an
easy induction, one obtains $(t:G,K)=[t:G,K]$,
$(l:\suc{G},K)=[l:G,K]$, and $(c:\suc{G},K)=[c:G,K]$. Hence
$\ov{t}=\ov{\ov{t}}$.}

%------------------------------------------------------------
\begin{figure}\caption{CGPS translation of $\lJmse$}\label{fig:cgps}
 $$
 \begin{array}{rcl}
(x:G,K)&=&x\,\suc{G}K\\
(\lambda x.t:G,K)&=&[K(\lambda wx.w\ov{t});G]\\
(\ctt c:G,K)&=&(c:\suc{G},K)\\[1ex]
([]:G,K)&=&\lambda w.w\,GK\\
(u::l:G,K)&=&\lambda w.w\,G(\lambda m.m\,(l:G,K)\,\ov{u})\\
((x)c:G,K)&=&\lambda x.(c:G,K)\\[1ex]
 (t[]:G,K)&=&(t:G,K)\\
 (t(u::l):G,K)&=&(t:G,\lambda m.m\,(l:G,K)\,\ov{u})\\
 (t(x)c:G,K)&=&((x)c:G,K)\ov{t}
 \end{array}
 $$
\end{figure}
%------------------------------------------------------------
If one removes the garbage argument, one precisely obtains the CPS
translation. The translation admits the typing rules of
Figure~\ref{fig:cgpstypes}.

%------------------------------------------------------------
\begin{figure}\caption{Admissible typing rules for CGPS translation of $\lJmse$}\label{fig:cgpstypes}
$$
\begin{array}{c@{\quad}c}
\infer{\ov{\Gamma},\Gamma'\vdash (t:G,K):\bot}{\Gamma\vdash
t:A&\ov{\Gamma},\Gamma'\vdash G:\top&\ov{\Gamma},\Gamma'\vdash
K:\neg A^*}& \infer{\ov{\Gamma},\Gamma'\vdash
(l:G,K):\neg\ov{A}}{\Gamma|l:A\vdash B&\ov{\Gamma},\Gamma'\vdash
G:\top&\ov{\Gamma},\Gamma'\vdash K:\neg
B^*}\\[2ex]
\multicolumn{2}{c}{\infer{\ov{\Gamma},\Gamma'\vdash
(c:G,K):\bot}{\seqc{\Gamma}{c}{A}&\ov{\Gamma},\Gamma'\vdash
G:\top&\ov{\Gamma},\Gamma'\vdash K:\neg A^*}}
\end{array}
$$
\end{figure}
%------------------------------------------------------------
For $\ov{\Gamma}$ see the previous section. Therefore (and to be proven simultaneously), the CGPS translation satisfies type soundness, i.\,e.,
$\Gamma\vdash t:A$ implies $\ov{\Gamma}\vdash\ov{t}:\ov{A}$.

%------------------------------------------------
\begin{lem}\label{lemma:simulation}In $\lJmse$ the following
holds:
\begin{enumerate}
\item $[\ov{t}/x](T:G,K)\rightarrow^*_{\beta}
  ([t/x]T:[\ov{t}/x]G,[\ov{t}/x]K) $ for $T$ any $u$, $l$ or $c$, and, in particular,
  $[\ov{t}/x]\ov{u}\to_{\beta}^*\ov{[t/x]u}$.
\label{simsubst}
\item $[t/x](T:G,K)=(T:[t/x]G,[t/x]K) $ for $T$
any $u$, $l$ or $c$ such that $x\notin T$.\label{simsubst0}
\item $G$ and $K$ are subterms of $(T:G,K)$ for $T$
any
$u$, $l$ or $c$.\label{simcontains}
\item $(t:\suc{G},K)\to_{\beta}^+(t:G,K)$.\label{simgarbagereduction}
\item $(l:G,K)\ov{t}\rightarrow^*_{\beta}
(tl:G,K)$\label{simapply}
\item $\lambda x. (xl:G,K)\rightarrow^+_{\beta} (l:G,K)$ if $x\notin l,G,K$.
\label{simmu}
\item\label{simappend}
\begin{enumerate}
\item $(tl:\suc{G},\lambda m.m (l':G,K)\ov{u})
\rightarrow^+_{\beta}(t\,(\app l{(u::l')}):G,K)$
\item $(l:\suc{G},\lambda m.m (l':G,K)\ov{u})\rightarrow^+_{\beta}(\app{l}{(u::l')}:G,K)$
\end{enumerate}

\end{enumerate}
\end{lem}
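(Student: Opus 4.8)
The plan is to prove all eight statements by a single mutual induction on the syntactic construct $T$ (and, where the statement ranges over all of $u$, $l$, $c$, by simultaneous induction across the three classes), mirroring the recursive definition of $(T:G,K)$ in Figure~\ref{fig:cgps}. Statements~\ref{simsubst0} and~\ref{simcontains} are purely structural and should be dispatched first, since later clauses lean on them: \ref{simcontains} follows immediately by unfolding each clause and observing that both $G$ and $K$ reappear literally as subterms of the right-hand side (the only mild point is the clause for $\ctt c$, where one uses the induction hypothesis at $\suc G$ together with the fact that $G$ is a subterm of $\suc G=\mathsf{s}\,G$); \ref{simsubst0} is a routine commutation of a capture-avoiding substitution past a recursive definition, using that $x$ does not occur in $T$ to leave the bound $w,m,x$ untouched. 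Statement~\ref{simgarbagereduction} is the core ``garbage'' fact: from the assumed $\suc G\to^+_\beta G$ and \ref{simcontains} (so that the redex in $\suc G$ actually sits inside $(t:\suc G,K)$ at a reducible position), monotonicity of $\to^+_\beta$ under term formation gives $(t:\suc G,K)\to^+_\beta(t:G,K)$.

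Next I would prove \ref{simsubst}, the substitution-simulation lemma, by induction on $T$. Most clauses push the substitution inside and invoke the induction hypothesis; the genuinely interesting clause is the variable case $(x:G,K)=x\,\suc G K$, where substituting $\ov t=\lambda gk.(t:g,k)$ produces $(\lambda gk.(t:g,k))\,(\suc{[\ov t/x]G})\,([\ov t/x]K)$, which $\beta$-reduces in two steps to $(t:\suc{[\ov t/x]G},[\ov t/x]K)$, and then one $\beta$-step (by the $\mathsf{s}$-reduction, again via \ref{simgarbagereduction}) reaches $(t:[\ov t/x]G,[\ov t/x]K)=([t/x]x:\ldots)$; for a variable $y\neq x$ one just uses \ref{simsubst0}. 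The ``in particular'' clause $[\ov t/x]\ov u\to^*_\beta\ov{[t/x]u}$ then follows by applying the main statement underneath the binders $\lambda gk$. Statement~\ref{simapply} is a short case analysis on $l$: for $l=[]$ one has $([]:G,K)\,\ov t=(\lambda w.w\,GK)\,\ov t\to_\beta\ov t\,GK\to^*_\beta(t:G,K)=(t[]:G,K)$; for $l=u::l'$ and $l=(x)c$ the right-hand side $(tl:G,K)$ is \emph{defined} to be exactly $(l:G,K)\ov t$, so the reduction is reflexive (zero steps). Statement~\ref{simmu} (the $\mu$-case): analyze $l$ — for $l=[]$, $\lambda x.(x[]:G,K)=\lambda x.x\,\suc GK$, and since $x\notin G,K$ this $\eta$-expands... no: rather one computes $(x[]:G,K)=(x:G,K)=x\,\suc GK$ and then $\lambda x.x\,\suc GK\,\to_\beta\ldots$ — here the point is that $(x[]:G,K)=x\,\suc G\,K$ and $\mathsf{s}\,G\to^+_\beta G$ so $\lambda x.x\,(\mathsf{s}\,G)\,K$ is not literally $(\lambda x.x\,G\,K)=\ldots$; instead one observes $\lambda x.(x\,l:G,K)$ with $l$ non-empty is of the shape $\lambda x.(l:G,K)\,x$ by \ref{simapply}, and since $x\notin(l:G,K)$ (using \ref{simsubst0}/\ref{simcontains}) this $\beta$-reduces — wait, it is an $\eta$-redex, not $\beta$. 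The resolution is that for $l=u::l'$ or $l=(x')c$ the body $(x\,l:G,K)=(l:G,K)\,\ov{x}$... one must be slightly careful that it is $\ov{x}=\lambda gk.x\,\suc gk$, not $x$; so $\lambda x.(l:G,K)\,\ov x$ and one needs $(l:G,K)\,\ov x\to^+_\beta(l:G,K)\,x$-then-$\eta$, or better: directly unfold $(l:G,K)$ for each shape of $l$, substitute $\ov x$ for the relevant occurrence, and $\beta$-reduce the resulting administrative redexes to recover $(l:G,K)$ itself; this is where \ref{simsubst} (with $t:=x$, noting $\ov x$) and the $+$ in $\to^+_\beta$ get used. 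Finally, \ref{simappend}(a) and (b) are proved together by induction on $l$, using the definition of $\app l{l'}$, the clause-by-clause definition of $(\_:G,K)$, statement~\ref{simapply} to fold $(l:G,K)\ov u$ into a command translation, statement~\ref{simgarbagereduction} to absorb the $\suc G$ produced on the left, and the induction hypothesis in the cons-case and the $(x)tl$-case; the $(x)V$-case is where the single extra unit of garbage on the left exactly pays for one administrative $\beta$-step.

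The main obstacle I anticipate is the bookkeeping of garbage units across clauses: because the $\suc{(\cdot)}$ increments are concentrated in the $(\ctt c:G,K)$ clause (as emphasized in footnote~\ref{foot:new-CGPS}), statements~\ref{simgarbagereduction}, \ref{simmu} and~\ref{simappend} must be stated with precisely the right number of $\mathsf{s}$'s — one too few and the targeted $\to^+_\beta$ (strictly positive number of steps) degrades to $\to^*_\beta$, breaking strict simulation downstream; one too many and the equations no longer typecheck against the clauses. So the delicate part is not any single reduction but getting the $\suc{(\cdot)}$ placement in the statements of~\ref{simgarbagereduction} and~\ref{simappend} to match the definition exactly, and then threading the ``$\mathsf{s}\,G\to^+_\beta G$'' fact through at just those spots — in particular checking that in~\ref{simmu} the body $(xl:G,K)$, after plugging in $\ov x$ and doing the administrative reductions, really returns to $(l:G,K)$ and does so in at least one step, which is what licenses dropping the outer $\lambda x$.
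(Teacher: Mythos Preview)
Your overall plan matches the paper's: items~1--3 by simultaneous induction on $T$, item~4 via items~2 and~3, items~5 and~6 by case analysis on $l$, and item~7(a),(b) by simultaneous induction on $l$. Two points need correction.

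First, a minor slip in item~\ref{simapply}: for $l=u::l'$ the two sides are \emph{not} definitionally equal. One has $(u::l':G,K)\,\ov t=(\lambda w.wG(\lambda m.m(l':G,K)\ov u))\,\ov t$, which needs three $\beta$-steps (one for $w$, two to unfold $\ov t=\lambda gk.(t:g,k)$) to reach $(t:G,\lambda m.m(l':G,K)\ov u)=(t(u::l'):G,K)$. Since the statement only asks for $\to_\beta^*$ this is harmless, but your ``zero steps'' claim is wrong; only the case $l=(x)c$ is a definitional equality.

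Second, and more seriously, your treatment of item~\ref{simmu} is tangled: you repeatedly reach for $\eta$, or try to invoke item~\ref{simapply} in the wrong direction (it gives $(l:G,K)\ov x\to^*(xl:G,K)$, not the converse), and you never land cleanly. The paper's argument for $l=[]$ and $l=u::l'$ is direct and uses neither item~\ref{simapply} nor $\eta$: unfold $(xl:G,K)$ \emph{via the command clauses} of Figure~\ref{fig:cgps} with $t:=x$, obtaining in both cases a term of the shape $x\,\suc G\,K'$ (with $K'=K$ for $l=[]$ and $K'=\lambda m.m(l':G,K)\ov u$ for $l=u::l'$). Then
\[
\lambda x.\,x\,\suc G\,K'\;\to_\beta^+\;\lambda x.\,x\,G\,K'
\]
by reducing the garbage, and this last term \emph{is} $(l:G,K)$ up to $\alpha$-renaming of the bound variable (recall $(l:G,K)=\lambda w.wGK'$). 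So the outer $\lambda x$ is never ``dropped''---it becomes the $\lambda w$ of the definition. The strictly positive number of steps comes precisely from the garbage increment in the clause $(x:G,K)=x\,\suc G\,K$. Only the case $l=(y)c$ requires items~\ref{simsubst} and~\ref{simsubst0}: after one $\beta$-step the body is $[\ov x/y](c:G,K)$, and one $\alpha$-renames the bound $x$ to $y$, pushes $[y/x]$ inside using item~\ref{simsubst0} (since $x\notin c,G,K$), and finishes with item~\ref{simsubst}.

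A small remark on item~\ref{simappend}: there is no ``$(x)V$-case'' among $\lJmse$ co-terms; the third co-term former is $(x)c$ with $c=t_0l_0$ a command, and there part~(b) follows from the induction hypothesis~(a) for $l_0$, after which~(a) is immediate from~(b).
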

%-----------------------------------------------
\begin{proof} \ref{simsubst}./\ref{simsubst0}./\ref{simcontains}. Each one by simultaneous induction on
terms, co-terms and commands. Notice that the second statement has
to be proven simultaneously, but that it follows immediately from
the particular case $T=u$ of the first statement.

\ref{simgarbagereduction}.

$$
\begin{array}{rcll}
(t:\suc{G},K) & = & [\suc{G}/g](t:g,K)&\textrm{(by \ref{simsubst0}.)}\\
&\to_{\beta}^+& [G/g](t:g,K)&(*)\\
&=&(t:G,K)&\textrm{(by \ref{simsubst0}.)}\\
\end{array}
$$
\noindent where $(*)$ is justified by the fact that $g$ occurs in
$(t:g,K)$, as guaranteed by \ref{simcontains}.

\ref{simapply}. Straightforward case analysis on $l$.

\ref{simmu}. Case analysis on $l$.

Case $l=[]$.

$$
\begin{array}{rcll}
\lb x.(x[]:G,K)&=&\lb x.(x:G,K)&\\
&=&\lb x.x\,\suc{G}K&\\
&\to_{\beta}^+&\lb x.xGK&\\
&=&([]:G,K)&\textrm{(as $x\notin G,K$)}
\end{array}
$$

Case $l=u::l'$.

$$
\begin{array}{rcll}
\lb x.(x(u::l'):G,K)&=&\lb x.(x:G,\lb m.m(l':G,K)\ov{u})&\\
&=&\lb x.x\,\suc{G}(\lb m.m(l':G,K)\ov{u})&\\
&\to_{\beta}^+&\lb x.xG(\lb m.m(l':G,K)\ov{u})&\\
&=&(u::l':G,K)&\textrm{(as $x\notin u,l',G,K$)}
\end{array}
$$

Case $l=(y)c$.

$$
\begin{array}{rcll}
\lb x.(x(y)c:G,K)&=&\lb x.(\lb y.(c:G,K))\ov{x}&\\
&\to_{\beta}&\lb x.[\ov{x}/y](c:G,K)&\\
&=&\lb y.[y/x][\ov{x}/y](c:G,K)&\\
&=&\lb y.[[y/x]\ov{x}/y](c:G,K)&\textrm{(as $x\notin c,G,K$, and by \ref{simsubst0}.)}\\
&=&\lb y.[\ov{y}/y](c:G,K)&\\
&\to_{\beta}^*&\lb y.([y/y]c:G,K)&\textrm{(by \ref{simsubst}.)}\\
&=&((y)c:G,K)&
\end{array}
$$

\ref{simappend}. (a) and (b) are proved simultaneously by
induction on $l$.

Case $l=[]$.

$$
\begin{array}{rcll}
(t[]:\suc{G},\lambda m.m (l':G,K)\ov{u})
&=&(t:\suc{G},\lambda m.m (l':G,K)\ov{u})&\\
&\to_{\beta}^+&(t:G,\lambda m.m (l':G,K)\ov{u})&\textrm{(by \ref{simgarbagereduction}.)}\\
&=&(t\,(\app {[]}{(u::l')}):G,K)&
\end{array}
$$

$$
\begin{array}{rcll}
([]:\suc{G},\lambda m.m (l':G,K)\ov{u})
&=&\lb w.w\,\suc{G}(\lambda m.m (l':G,K)\ov{u})&\\
&\to_{\beta}^+&\lb w.wG(\lambda m.m (l':G,K)\ov{u})&\\
&=&(\app{[]}{(u::l')}:G,K)&
\end{array}
$$

Case $l=u_0::l_0$.

$$
\begin{array}{rcll}
&&(t(u_0::l_0):\suc{G},\lambda m.m (l':G,K)\ov{u})&\\
&=&(t:\suc{G},\lb n.n(l_0:\suc{G},\lb m.m(l':G,K)\ov{u})\ov{u_0})&\\
&\to_{\beta}^+&(t:\suc{G},\lb n.n(\app{l_0}{(u::l')}:G,K)\ov{u_0})&\textrm{(by IH (b))}\\
&\to_{\beta}^+&(t:G,\lb n.n(\app{l_0}{(u::l')}:G,K)\ov{u_0})&\textrm{(by \ref{simgarbagereduction}.)}\\
&=&(t\,(\app {(u_0::l_0)}{(u::l')}):G,K)&
\end{array}
$$

$$
\begin{array}{rcll}
&&(u_0::l_0:\suc{G},\lambda m.m (l':G,K)\ov{u})&\\
&=&\lb w.w\,\suc{G}(\lb n.n(l_0:\suc{G},\lb m.m(l':G,K)\ov{u})\ov{u_0})&\\
&\to_{\beta}^+&\lb w.w\,\suc{G}(\lb n.n(\app{l_0}{(u::l')}:G,K)\ov{u_0})&\textrm{(by IH (b))}\\
&\to_{\beta}^+&\lb w.wG(\lb n.n(\app{l_0}{(u::l')}:G,K)\ov{u_0})&\\
&=&(\app{(u_0::l_0)}{(u::l')}:G,K)&
\end{array}
$$

Case $l=(x)v_0l_0$. Part (b) follows from the induction hypothesis
(a) for $l_0$, and part (a) is an immediate consequence of (b).
\end{proof}

%---------------------------------------------
\begin{thm}[Simulation]
\label{theorem:simulation-cgps} If $t\rightarrow u$ in $\lJmse$,
then $\ov{t}\rightarrow^+_{\beta} \ov{u}$ in the
$\lambda$-calculus.
\end{thm}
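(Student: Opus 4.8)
The plan is to prove the statement by induction on the derivation of $t\rightarrow u$ in $\lJmse$, proving simultaneously the analogous statements for co-terms and commands, namely: if $l\rightarrow l'$ then $(l:G,K)\rightarrow^+_\beta(l':G,K)$ for all $G,K$, and if $c\rightarrow c'$ then $(c:G,K)\rightarrow^+_\beta(c':G,K)$ for all $G,K$; together with the term-level statement $t\rightarrow u\Longrightarrow(t:G,K)\rightarrow^+_\beta(u:G,K)$ for all $G,K$. The theorem then follows immediately by instantiating $G:=g$, $K:=k$ and prefixing $\lambda gk$, since $\ov t=\lambda gk.(t:g,k)$. Note it is essential to carry the statement for \emph{arbitrary} $G,K$ through the induction, because the closure cases push the reduction under the various context-building operations, which change the continuation and garbage arguments passed to the recursive calls.

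First I would handle the base cases, one reduction rule at a time, each being essentially a direct computation using Lemma~\ref{lemma:simulation}. For $\epsilon$, i.e. $\ctt{t[]}\rightarrow t$: we have $(\ctt{t[]}:G,K)=(t[]:\suc G,K)=(t:\suc G,K)\rightarrow^+_\beta(t:G,K)$ by clause \ref{simgarbagereduction} of the Lemma — this is exactly the point of concentrating the increments in the $\ctt{\_}$ clause. For $\pi$, i.e. $\ctt{tl}E\rightarrow t(\app lE)$ with $E$ an evaluation context (so $E=[]$ or $E=u::l'$): the case $E=[]$ reduces via $\ref{simgarbagereduction}$, and the case $E=u::l'$ is precisely clause \ref{simappend}(a), again exploiting that $(\ctt{tl}(u::l'):G,K)=(tl:\suc G,\lambda m.m(l':G,K)\ov u)$. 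For $\beta$, i.e. $(\lambda x.t)(u::l)\rightarrow u((x)tl)$: unfold both sides using the clauses for application and abstraction, fire one administrative $\beta$-redex $K(\lambda wx.w\ov t)$ with $K=\lambda m.m(l:G,K')\ov u$, and use $\ref{simsubst}$/$\ref{simapply}$ to match; the deliberate-garbage wrapper $[\,\cdot\,;G]$ is discarded by a $\beta$-step, contributing the required strictness. For $\sigma$, i.e. $t(x)c\rightarrow[t/x]c$: $(t(x)c:G,K)=((x)c:G,K)\ov t=(\lambda x.(c:G,K))\ov t\rightarrow_\beta[\ov t/x](c:G,K)\rightarrow^*_\beta([t/x]c:G,K)$ by $\ref{simsubst}$. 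For $\mu$, i.e. $(x)xl\rightarrow l$ with $x\notin l$: $((x)xl:G,K)=\lambda x.(xl:G,K)\rightarrow^+_\beta(l:G,K)$, which is exactly clause \ref{simmu} — the side condition $x\notin l$ in the rule supplies the freshness hypothesis $x\notin l$ needed there (and $x\notin G,K$ holds since $x$ is bound).

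Next I would dispatch the closure (congruence) cases, which are uniform and easy: each clause of the CGPS translation is built from the recursive calls by applying $\lambda$-abstraction, application, or the garbage operations, all of which are monotone for $\rightarrow^+_\beta$ in the relevant argument. For instance, if $t\rightarrow t'$ then $(\lambda x.t:G,K)=[K(\lambda wx.w\ov t);G]\rightarrow^+_\beta[K(\lambda wx.w\ov{t'});G]=(\lambda x.t':G,K)$ using $\ov t=\lambda gk.(t:g,k)\rightarrow^+_\beta\lambda gk.(t':g,k)=\ov{t'}$ from the IH; if $l\rightarrow l'$ then $(u::l:G,K)=\lambda w.w\,G(\lambda m.m(l:G,K)\ov u)\rightarrow^+_\beta\lambda w.w\,G(\lambda m.m(l':G,K)\ov u)$; if $c\rightarrow c'$ then $(\ctt c:G,K)=(c:\suc G,K)\rightarrow^+_\beta(c':\suc G,K)=(\ctt{c'}:G,K)$ by IH instantiated at garbage $\suc G$; and similarly for $(x)c$, for $tl$ reducing in either component, and for $t::l$. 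In every case the IH is applied at appropriately modified $G,K$, which is why the generalized statement was needed.

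The main obstacle — really the only place where any thought beyond bookkeeping is required — is ensuring \emph{strictness}, i.e. that at least one $\beta$-step is produced in each base case, in particular for $\pi$ and $\epsilon$, which classical CPS translations collapse. Here strictness is bought precisely by the garbage: the clause $(\ctt c:G,K)=(c:\suc G,K)$ inserts one genuine unit of garbage, and Lemma~\ref{lemma:simulation}(\ref{simgarbagereduction}) guarantees that reducing $\suc G$ back down to $G$ costs a nonempty sequence of $\beta$-steps (because $g$ genuinely occurs in $(t:g,K)$, by \ref{simcontains}), so the identifications of the underlying CPS become strict reductions once garbage is threaded through. The remaining subtlety is purely one of careful substitution lemmas — one must apply $\ref{simsubst}$ and $\ref{simsubst0}$ with the right freshness side conditions so that $[\ov t/x](c:G,K)$ really reaches $([t/x]c:G,K)$ and not merely something $\beta$-equal — but this is exactly what parts \ref{simsubst}--\ref{simsubst0} of the Lemma were designed to supply, so no further difficulty arises.
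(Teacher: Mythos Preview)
Your proposal is correct and matches the paper's proof: a simultaneous induction establishing $(T:G,K)\rightarrow^+_\beta(T':G,K)$ for all $G,K$, with the base cases dispatched via exactly the items of Lemma~\ref{lemma:simulation} you cite (in particular \ref{simgarbagereduction} for $\epsilon$ and the $E=[]$ sub-case of $\pi$, \ref{simappend} for the $E=u::l'$ sub-case, \ref{simmu} for $\mu$, \ref{simsubst} for $\sigma$, and \ref{simapply} for $\beta$). The only place the paper is a touch more careful than your write-up is in noting that the closure rules $t\to t'\Longrightarrow tl\to t'l$ and $l\to l'\Longrightarrow tl\to tl'$ are not quite as ``uniform'' as the others, because $(tl:G,K)$ is defined by a three-way case split on the shape of $l$; accordingly these require a case analysis on $l$ (resp.\ on $l\to l'$), and the sub-case where $l\to l'$ is a root $\mu$-step --- so that $l'$ may change outer shape --- uses Lemma~\ref{lemma:simulation}.\ref{simapply} to reconnect $(l':G,K)\ov t$ with $(tl':G,K)$. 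This is routine once noticed, so your plan goes through unchanged.
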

%---------------------------------------------

\begin{proof}
Simultaneously we prove:
$
T\rightarrow T' \Longrightarrow (T:G,K) \rightarrow^+_{\beta} (T':G,K)
$
for $T,T'$ terms, co-terms or commands.  We illustrate the cases
of the base rules.

Case $\beta$: $(\lambda x.t)(u::l)\rightarrow u(x)tl$.
\[
\begin{array}{rcll}
((\lambda x.t)(u::l):G,K)&=&(\lambda x.t:G,\lambda m. m
(l:G,K)\ov{u})&\\
&=&[(\lambda m. m (l:G,K)\ov{u})(\lambda
wx.w\ov{t});G]&\\
&\rightarrow^3_{\beta}&(\lambda x.
(l:G,K)\ov{t})\ov{u}\\
&\rightarrow^*_{\beta}&(\lambda x.
(tl:G,K))\ov{u}&\textrm{ (Lemma \ref{lemma:simulation}.\ref{simapply})}\\
&=&(u(x)tl:G,K)
\end{array}
\]

Case $\pi$: $\ctt{tl}E \rightarrow t\,(\app lE)$. Sub-case $E=[]$.
$$
\begin{array}{rcll}
(\ctt{tl}[]:G,K)&=&(\ctt{tl}:G,K)&\\
&=&(tl:\suc{G},K)&\\
&\rightarrow_{\beta}^+&(tl:G,K)&\textrm{(Lemma \ref{lemma:simulation}.\ref{simgarbagereduction})}\\
&=&(t\,(\app l{[]}):G,K).
\end{array}
$$

Sub-case $E=u::l'$.
$$
\begin{array}{rcll}
(\ctt{tl}(u::l'):G,K)&=&(\ctt{tl}:G,\lambda m.m (l':G,K)\ov{u})&\\
&=&(tl:\suc{G},\lambda m.m (l':G,K)\ov{u})&\\
&\rightarrow^+_{\beta}&(t\,(\app l{(u::l')}):G,K)&\textrm{(Lemma \ref{lemma:simulation}.\ref{simappend})}\\
\end{array}
$$

Case $\sigma$: $t(x)c\rightarrow [t/x]c$.
\[
\begin{array}{rcll}
(t(x)c:G,K)&=&(\lambda x.(c:G,K))\ov{t}&\\
&\rightarrow_{\beta}&
[\ov{t}/x](c:G,K)&\\
&\rightarrow^*_{\beta}&([t/x]c:G,K)&\textrm{(Lemma
\ref{lemma:simulation}.\ref{simsubst})}
\end{array}
\]

Case $\mu$: $(x)xl \rightarrow l$, if $x\notin l$.
\[((x)xl:G,K)=\lambda x. (xl:G,K)\rightarrow^+_{\beta} (l:G,K)\quad\textrm{(Lemma \ref{lemma:simulation}.\ref{simmu})}\]

Case $\epsilon$: $\quad\ctt{t[]}\rightarrow t$.
\[
\begin{array}{rcll}
(\ctt{t[]}:G,K)&=&(t[]:\suc{G},K)&\\
&=&(t:\suc{G},K)&\\
&\rightarrow_{\beta}^+&(t:G,K)&\textrm{(Lemma
\ref{lemma:simulation}.\ref{simgarbagereduction})}
\end{array}
\]

The cases corresponding to the closure rule $t\rightarrow
t'\Longrightarrow tl\rightarrow t'l$ (resp.  $l\rightarrow
l'\Longrightarrow tl\rightarrow tl'$) can be proved by case analysis
on $l$ (resp. $l\rightarrow l'$). The cases corresponding to the
other closure rules follow by routine induction.
\end{proof}
\begin{rem}\rm
  Unlike the failed strict simulation by CPS reported in
  \cite{NakazawaTatsuta03} that only occurred with the closure rules,
  the need for garbage in our translation is already clearly visible
  in the subcase $E=[]$ for $\pi$ and the case $\epsilon$. But the
  garbage is also effective for the closure rules, where the most
  delicate rule is the translation of $t(u::l)$ that mentions $l$ and
  $u$ only in the continuation argument $K$ to $t$'s translation.
  Lemma \ref{lemma:simulation}.\ref{simcontains} is responsible for
  propagation of strict simulation. The structure of our garbage --
  essentially just ``units of garbage'' -- can thus be easier than in
  the CGPS in \cite{IkedaNakazawa06} for $\lb\mu$-calculus since
  there, $K$ cannot be guaranteed to be a subterm of $(T:G,K)$, again
  because of the problem with void $\mu$-abstractions. The solution of
  \cite {IkedaNakazawa06} for the most delicate case of application is
  to copy the $K$ argument into the garbage. We do not need this in
  our intuitionistic calculi. However, since we need garbage for some
  base cases, we also had to make sure that reductions in garbage
  arguments are not lost during propagation through the closure rules.
\end{rem}

Let us compare the CPS and CGPS translations in order to
understand how garbage-passing ensures strict simulation. The analogue
to Lemma \ref{lemma:simulation} for the CPS translation is
obtained by erasing garbage throughout, and replacing
$\to_{\beta}^+$ by equality in items \ref{simgarbagereduction}
and \ref{simappend}, and by $\to_{\beta}^*$ in item \ref{simmu}.
So, the properties of the CGPS translation are as ``good'' as
those of the CPS translation, and at least a weak simulation
could be expected.

An inspection of the proofs of Lemma \ref{lemma:simulation} and
Theorem \ref{theorem:simulation-cgps} shows that the CGPS
translation generates reduction sequences which, so to speak,
differ from those generated by CPS translation by the insertion of
sequences of the form $\suc{G}\to_{\beta}^+G$. The point is that
the CGPS translation does such insertion at all points where the
CPS does an undesired identification (although it also does at
other points where such insertion is unnecessary).

The ultimate cause for the existence of such dynamic \emph{garbage
decrement steps} is the static garbage increment contained in the
clauses defining $(x:G,K)$ and $(\ctt{c}:G,K)$. Moreover, it can
be argued that the clause for $(x:G,K)$ is responsible for strict
simulation of $\mu$-steps, whereas the clause for $(\ctt{c}:G,K)$
is the cause for strict simulation of $\pi$- or $\epsilon$-steps.

The key for strict simulation of $\mu$-steps is Lemma
\ref{lemma:simulation}.\ref{simmu}. An inspection of the proof
shows that garbage plays no role in the case $l=(y)c$ (which
already generated reduction steps through the CPS translation),
and that, had $(x:G,K)$ been defined as $xGK$, the same
identifications obtained before with the CPS translation would
have arisen again in the cases $l=[]$ and $l=u::l'$. The
definition of $(x:G,K)$ causes many garbage decrement steps, which
are useless most of the time (typically adding to the
administrative steps, already generated in the case of the CPS
translations, that mediate between $[\ov{t}/x]\ov{u}$ and
$\ov{[t/x]u}$), but not so in the particular situations described
in the cases $l=[]$ and $l=u::l'$ of Lemma
\ref{lemma:simulation}.\ref{simmu}.

The role of clause $(\ctt{c}:G,K)$ is plain for $\epsilon$ and the
case $E=[]$ of $\pi$. As to the case $E=u::l'$, it suffices to
observe that $(tl:G,\lambda m.m (l':G,K)\ov{u})=(t\,(\app
l{(u::l')}):G,K)$ (as an inspection of the proof of Lemma
\ref{lemma:simulation}.\ref{simappend} easily shows). So, again, had
$(\ctt{c}:G,K)$ been defined as $(c:G,K)$, the same identifications
of $\pi$- or $\epsilon$-steps obtained before with the CPS
translation would have arisen. The definition of $(\ctt{c}:G,K)$
means that garbage-passing does, among other things, some form of
counting braces. The braces decrement observed in $\pi$- or
$\epsilon$-steps in the source is reflected by garbage decrement
steps in the target.

%---------------------------------
\begin{cor}\label{cor:SN-for-LambdaJmse}
The typable terms of $\lJmse$ are strongly normalising.
\end{cor}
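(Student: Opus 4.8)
The plan is to derive strong normalisation of typable $\lJmse$-terms directly from the Simulation Theorem (Theorem~\ref{theorem:simulation-cgps}) together with type soundness of the CGPS translation. The key observation is that Theorem~\ref{theorem:simulation-cgps} gives a \emph{strict} simulation: every one-step reduction $t\to u$ in $\lJmse$ is mapped to a reduction $\ov{t}\to^+_\beta\ov{u}$ of \emph{at least one} $\beta$-step in the simply-typed $\lambda$-calculus. This strictness is exactly what is needed to transport well-foundedness backwards along the translation.

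First I would fix a typable term $t$ of $\lJmse$, say $\Gamma\vdash t:A$. By type soundness of the CGPS translation (established simultaneously with the admissible typing rules of Figure~\ref{fig:cgpstypes}), we have $\ov{\Gamma}\vdash_{\lb}\ov{t}:\ov{A}$; in particular $\ov{t}$ is a typable term of the simply-typed $\lambda$-calculus. Next I would invoke the (classical, assumed) strong normalisation theorem for the simply-typed $\lambda$-calculus: $\ov{t}$ is $\beta$-strongly normalising, i.e.\ there is no infinite $\to_\beta$-sequence starting from $\ov{t}$.

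Now suppose, for contradiction, that $t$ admits an infinite reduction sequence $t=t_0\to t_1\to t_2\to\cdots$ in $\lJmse$. Applying Theorem~\ref{theorem:simulation-cgps} to each step yields $\ov{t_0}\to^+_\beta\ov{t_1}\to^+_\beta\ov{t_2}\to^+_\beta\cdots$, and since each $\to^+_\beta$ contains at least one actual $\beta$-step, concatenating these segments produces an infinite $\to_\beta$-sequence starting from $\ov{t_0}=\ov{t}$. This contradicts strong normalisation of $\ov{t}$ in the simply-typed $\lambda$-calculus. Hence no infinite reduction sequence from $t$ exists, i.e.\ $t$ is strongly normalising.

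The only subtlety — and the one point deserving care rather than being a genuine obstacle — is the use of subject reduction: to apply Theorem~\ref{theorem:simulation-cgps} along the whole infinite sequence one wants every $t_i$ to be typable, but in fact the Simulation Theorem as stated does not require typability of its input, so the argument goes through for the untyped reduction relation and only the starting term $t$ needs to be typable to get $\ov{t}$ typable. Thus the proof is essentially a two-line consequence of strict simulation plus type soundness, with SN of simply-typed $\lambda$-calculus as the imported black box; I would not expect any real difficulty here, the hard work having already been done in Lemma~\ref{lemma:simulation} and Theorem~\ref{theorem:simulation-cgps}.
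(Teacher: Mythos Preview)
Your proposal is correct and is exactly the argument the paper has in mind: the corollary is stated without proof precisely because it follows immediately from strict simulation (Theorem~\ref{theorem:simulation-cgps}), type soundness of the CGPS translation, and strong normalisation of the simply-typed $\lambda$-calculus. Your observation that the Simulation Theorem does not need typability of its input is a nice sharpening, though subject reduction for $\lJmse$ is available anyway.
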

%---------------------------------

Recalling our discussion in Section~\ref{sec:lambdaJmse}, we already
could have inferred strong normalisation of $\lJmse$ from that of
$\lmmt$, which has been shown directly by Polonovski
\cite{PolonovskiFOSSACS2004} using reducibility candidates and
before by Lengrand's \cite{LengrandWRS2003} embedding into a
calculus by Urban that also has been proven strongly normalizing by
the candidate method. Our proof is just by a syntactic
transformation to simply-typed $\lb$-calculus.

%--------------------------------------------------------------------

Since each of $m$, $s$ and $e$ preserves typability and strictly
simulates reduction (Proposition~\ref{prop:simulation-by-embeddings}),
it follows from Corollary \ref{cor:SN-for-LambdaJmse} that:

%---------------------------------
\begin{cor}\label{cor:SN-for-lambdaJ(m)(s)}
The typable terms of $\lJms$, $\lJm$ and $\lJ$ are strongly
normalising.
\end{cor}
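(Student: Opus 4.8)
The proof of Corollary~\ref{cor:SN-for-lambdaJ(m)(s)} is almost immediate from the machinery already in place, so the plan is essentially to chain together the embeddings with the strong normalisation result for $\lJmse$. First I would recall that Corollary~\ref{cor:SN-for-LambdaJmse} gives strong normalisation of all typable terms of $\lJmse$, and that Proposition~\ref{prop:simulation-by-embeddings} tells us that each of $\e\colon\lJms\to\lJmse$, $\s\colon\lJm\to\lJms$ and $\m\colon\lJ\to\lJm$ preserves typability and types and strictly simulates reduction, i.\,e.\ a single reduction step in the source is mapped to at least one reduction step in the target. The point is the standard fact that a reduction-preserving (strict) embedding into a strongly normalising system transfers strong normalisation backwards: if $t$ were a typable term of, say, $\lJms$ admitting an infinite reduction sequence $t\to t_1\to t_2\to\cdots$, then applying $\e$ would yield $\e(t)\to^+\e(t_1)\to^+\e(t_2)\to^+\cdots$, an infinite reduction sequence from the typable term $\e(t)$ of $\lJmse$, contradicting Corollary~\ref{cor:SN-for-LambdaJmse}.

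Concretely, I would argue for the three systems in descending order. For $\lJms$: any typable $t\in\lJms$ is mapped by $\e$ to a typable $\e(t)\in\lJmse$ (typability preservation), and strict simulation forbids an infinite $\lJms$-reduction from $t$, so $t$ is strongly normalising. For $\lJm$: given typable $t\in\lJm$, the composite $\e\circ\s$ maps it to a typable term of $\lJmse$ and still strictly simulates reduction (composition of strict simulations is strict), so again strong normalisation follows; alternatively one first concludes strong normalisation of $\lJms$ as above and then uses $\s$ alone into the already-established strongly normalising $\lJms$. For $\lJ$: use $\e\circ\s\circ\m$ into $\lJmse$, or equivalently $\m$ into the now strongly normalising $\lJm$. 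In either presentation the three cases are uniform instances of the same backward-transfer lemma.

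There is essentially no obstacle here; the only thing worth spelling out once is the backward-transfer principle itself, namely that if $f$ is a map on typable terms that preserves typability and satisfies $t\to t'\Rightarrow f(t)\to^+ f(t')$, then strong normalisation of the (typable terms of the) target implies strong normalisation of the (typable terms of the) source — which is proved by the contrapositive as above, turning an infinite source reduction into an infinite target reduction via $\to^+$. The statement that composites of such maps are again such maps is equally routine: typability preservation composes, and $t\to t'\Rightarrow g(f(t))\to^+ g(f(t'))$ follows because $f(t)\to^+ f(t')$ and $g$ sends each step to $\to^+$, hence sends a non-empty sequence of steps to a non-empty sequence. Thus the proof is just: invoke Corollary~\ref{cor:SN-for-LambdaJmse}, invoke Proposition~\ref{prop:simulation-by-embeddings}, and apply the backward-transfer principle along the spectrum $\lJ\to\lJm\to\lJms\to\lJmse$.
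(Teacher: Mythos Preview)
Your proposal is correct and is exactly the argument the paper gives: the sentence preceding the corollary already states that since each of $\m$, $\s$ and $\e$ preserves typability and strictly simulates reduction (Proposition~\ref{prop:simulation-by-embeddings}), the result follows from Corollary~\ref{cor:SN-for-LambdaJmse}. You have merely spelled out the backward-transfer principle in more detail than the paper bothers to, which is fine.
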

%---------------------------------

%+++++++++++++++++++++++++++++++++++++++++++++++++++++++++++
\subsection{CGPS translations for subsystems}\label{subsec:cgps-for-subsystems}

We define CGPS translations for $\lJms$, $\lJm$ and $\lJ$. The
translation of types is unchanged. In each translation, we just
show the clauses that are new.

\begin{enumerate}

\item For $\lJms$:
$$
\begin{array}{rcl}
(tl:G,K)&=&(tl;\suc{G},K)\\[1ex]
((x)V:G,K)&=&\lambda x.(V:G,K)\textrm{ ($V$ a value)}\\
((x)tl:G,K)&=&\lambda x.(tl;G,K)\\[1ex]
(t(x)v;G,K)&=&((x)v:G,K)\ov{t}\\
(t(u::l);G,K)&=&(t:G,\lambda m.m\,(l:G,K)\,\ov{u})\\
\end{array}
$$

\item For $\lJm$: there is no
auxiliary operator $(tl;G,K)$.
$$
\begin{array}{rcl}
(t(u,l):G,K)&=&(t:\suc{G},\lambda m.m\,(l:\suc{G},K)\,\ov{u})\\
((x)t(u,l):G,K)&=&\lb x.(t:G,\lambda m.m\,(l:G,K)\,\ov{u})\\
\end{array}
$$

\item Finally, for $\lJ$:
$$
\begin{array}{rcll}
(t(u,x.V):G,K)&=&(t:\suc{G},\lambda m.m\,(\lambda
x.(V:\suc{G},K))\,\ov{u})&\textrm{($V$ a value)}\\
(t(u,x.v):G,K)&=&(t:\suc{G},\lambda m.m\,(\lambda
x.(v:G,K))\,\ov{u})&\textrm{($v$ an application)}
\end{array}
$$

\end{enumerate}

\noindent In the case of $\lJms$, the distinction between
$(tl:G,K)$ and $(tl;G,K)$ is consistent with the distinction, in
$\lJmse$, between $(\ctt{c}:G,K)$ and $(c:G,K)$.\footnote{We take
the opportunity to correct a mistake in the CGPS translations for
the subsystems of $\lJmse$ given in \cite{ESMP:TLCA07}. The
mistake was that some clauses in the definition of those
translations lacked a needed case analysis. We repair the mistake
now, using in this footnote the notations $(T:G,K)$ and $\ov{t}$
with their meanings in \cite{ESMP:TLCA07}. For $\lJms$:
$((x)v:G,K)=\lambda x.(v:G',K)$, where $G'$ is either $\suc{G}$,
if $v$ is a value; or $G$, otherwise. For $\lJm$, it should be
understood that the clause for $((x)v:G,K)$ just given is
inherited. For $\lJ$: $(t(u,x.v):G,K)=(t:\suc{G},\lambda
m.m\,(\lambda x.(v:G',K))\,\ov{u})$, where $G'$ is $\suc{G}$, if
$v$ is a value; or $G$, otherwise.}

These translations  are coherent with the CGPS translation for
$\lJmse$:

%----------------------------------------------------------
\begin{prop}
\label{proposition:coherence-e} Let
$\mathcal{L}\in\{\lJms,\lJm,\lJ\}$. Let $f_{\mathcal{L}}$ be the
embedding of $\mathcal{L}$ in the immediate extension of
$\mathcal{L}$ in the spectrum of Fig.~\ref{fig:spectrum}, and let
$g_{\mathcal{L}}$ be the embedding of $\mathcal{L}$ in $\lJmse$.
Then, for all $t\in \mathcal{L}$, $\ov{t}=\ov{f_{\mathcal{L}}(t)}$.
Hence, for all $t\in \mathcal{L}$, $\ov{t}=\ov{g_{\mathcal{L}}(t)}$.
\end{prop}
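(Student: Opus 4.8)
The plan is to establish, for each single step of the spectrum of Fig.~\ref{fig:spectrum}, the equation $\ov{t}=\ov{f_{\mathcal{L}}(t)}$, and then to obtain the version for $g_{\mathcal{L}}$ by composing the three steps. Since $\ov{t}=\lambda gk.(t:g,k)$ in every calculus, it suffices in each case to prove the sharper identity
$$(T:G,K)=(f_{\mathcal{L}}(T):G,K)$$
for every syntactic object $T$ of the source calculus and all simply-typed $\lambda$-terms $G,K$, where the left-hand side uses the CGPS clauses of the source calculus and the right-hand side those of the target. I would prove this by induction on the structure of $T$, carried out simultaneously for terms and co-terms (and, for $\lJms$, also for the auxiliary notation $(tl;G,K)$, for which the statement reads $(tl;G,K)=(\e(t)\e(l):G,K)$ with the right-hand side the $\lJmse$-translation of the \emph{command} $\e(t)\e(l)$). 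In all three inductions the cases $T=x$ and $T=\lambda x.t$ are immediate: the clauses are literally inherited and the induction hypothesis supplies $\ov{t}=\ov{f_{\mathcal{L}}(t)}$.

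For $f_{\lJms}=\e$ the one subtle point is that the $\lJms$-term $tl$ is mapped to the $\lJmse$-term $\ctt{\e(t)\e(l)}$, so the clause $(tl:G,K)$ for $tl$ as a term and the clause $((x)tl:G,K)$ for $(x)tl$ as a co-term must be reconciled with the brace-counting of $\lJmse$. The auxiliary statement $(tl;G,K)=(\e(t)\e(l):G,K)$ does exactly this: the increment hidden in $(\ctt{c}:G,K)=(c:\suc{G},K)$ matches the increment in $(tl:G,K)=(tl;\suc{G},K)$, while in the co-term case both $((x)tl:G,K)=\lambda x.(tl;G,K)$ and $((x)\e(t)\e(l):G,K)=\lambda x.(\e(t)\e(l):G,K)$ use no increment (which is why the auxiliary statement must be parametric in $G$); and the three shapes $\e(u)::\e(l')$, $(x)\e(V)[]$, $(x)\e(t_0)\e(l_0)$ of $\e(l)$ line up one-to-one with the $\lJms$-clauses for $(t(u::l');G,K)$, $((x)V:G,K)$ and $((x)t_0l_0:G,K)$ respectively, using $(t'[]:G,K)=(t':G,K)$ in the middle case.

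For $f_{\lJm}=\s$ the induction is almost transparent, because $\s$ only rewrites $t(u,l)$ as $\s(t)(\s(u)::\s(l))$, and after this rewriting the $\lJm$-clauses for $(t(u,l):G,K)$ and $((x)t(u,l):G,K)$ become literally the $\lJms$-clauses for $(tl:G,K)=(tl;\suc{G},K)$ and $((x)tl:G,K)=\lambda x.(tl;G,K)$. For $f_{\lJ}=\m$, the generalised argument $(u,x.v)$ of $\lJ$ becomes the $\lJm$-co-term $(x)\m(v)$; when $v$ (hence $\m(v)$) is a value the value-clauses match directly, and when $v$ (hence $\m(v)$) is an application one uses the small identity, valid in $\lJm$ for $r$ an application, $((x)r:\suc{G},K)=\lambda x.(r:G,K)$ --- both sides unfold to $\lambda x.(t:\suc{G},\lambda m.m\,(l:\suc{G},K)\,\ov{u})$ when $r=t(u,l)$ --- which absorbs the $\suc{G}$-versus-$G$ mismatch between the $\lJ$-clause $(t(u,x.v):G,K)=(t:\suc{G},\lambda m.m\,(\lambda x.(v:G,K))\,\ov{u})$ and the $\lJm$-clauses.

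Having the first part for each step, the ``Hence'' part follows by chaining along the spectrum: for $t\in\lJ$, $\ov{t}=\ov{\m(t)}=\ov{\s(\m(t))}=\ov{\e(\s(\m(t)))}=\ov{g_{\lJ}(t)}$; for $t\in\lJm$, $\ov{t}=\ov{\s(t)}=\ov{\e(\s(t))}=\ov{g_{\lJm}(t)}$; and $g_{\lJms}=\e=f_{\lJms}$ already. I expect the main obstacle to be precisely the garbage bookkeeping: because the subsystems' CGPS translations concentrate the $\suc{\cdot}$-increments differently from the $\lJmse$-translation (cf.\ footnote~\ref{foot:new-CGPS}), these equalities must hold \emph{on the nose}, so one must verify that each increment lands in exactly the right argument position --- the two places where this is least evident being the brace/increment matching for $\e$ and the identity $((x)r:\suc{G},K)=\lambda x.(r:G,K)$ for $\m$.
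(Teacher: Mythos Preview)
Your proposal is correct and follows essentially the same route as the paper: for each of $\e$, $\s$, $\m$ you prove $(T:G,K)=(f_{\mathcal L}(T):G,K)$ by (simultaneous) structural induction, with the auxiliary clause $(tl;G,K)=(\e(t)\e(l):G,K)$ in the $\lJms$ case, and then chain the three equalities for the ``Hence'' part. The only cosmetic difference is that the paper makes the dependence on the outer term explicit by writing the auxiliary clause as $\forall t\,(P(t)\Rightarrow(\e(t)\e(l):G,K)=(tl;G,K))$, whereas you leave the quantification over $t$ implicit; your observation $((x)r:\suc{G},K)=\lambda x.(r:G,K)$ for $r$ a $\lJm$-application is exactly the calculation hidden behind the paper's one-line ``by induction on $t$'' for the $\lJ$ case.
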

%----------------------------------------------------------

\begin{proof}

For $\lJms$, let $P(t):=\forall G,K((t:G,K)=(e(t):G,K))$, for
every $t\in\lJms$. Then, one proves

\begin{tabular}{l}
(i) $P(t)$; and\\
(ii) $(l:G,K)=(e(l):G,K)$ and $\forall
t\in\lJms(P(t)\Longrightarrow (e(t)e(l):G,K)=(tl;G,K))$
\end{tabular}

\noindent  by simultaneous induction on $t$ and $l$.

For $\lJm$, on proves $(t:G,K)=(s(t):G,K)$ and
$(l:G,K)=(s(l):G,K)$ by simultaneous induction on $t$ and $l$.

For $\lJ$, one proves $(t:G,K)=(m(t):G,K)$ by induction on $t$.
\end{proof}

Therefore, since each of $m$, $s$ and $e$, as well as the CGPS
translation of $\lJmse$, preserves typability, so does each CGPS
translation of the subsystems.

%-------------------------------------------
\begin{thm}[Simulation] Let $\mathcal{L}\in\{\lJms,\lJm,\lJ\}$.
If $t\rightarrow u$ in $\mathcal{L}$, then
$\ov{t}\rightarrow^+_{\beta} \ov{u}$ in the $\lambda$-calculus.
\end{thm}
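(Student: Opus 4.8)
The plan is to deduce this from the already-established simulation theorem for $\lJmse$ (Theorem~\ref{theorem:simulation-cgps}), using the embeddings of the subsystems into $\lJmse$ together with the coherence of the CGPS translations. Recall from Section~\ref{subsec:spectrum} that each $\mathcal{L}\in\{\lJms,\lJm,\lJ\}$ comes equipped with an embedding $g_{\mathcal{L}}$ into $\lJmse$, obtained by composing the maps $e$, $s$, $m$ along the spectrum, and that by Proposition~\ref{prop:simulation-by-embeddings} each of $e$, $s$, $m$ preserves typability and strictly simulates reduction. So all the ingredients are in place; what remains is to chain them correctly.

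First I would observe that $g_{\mathcal{L}}$ itself strictly simulates reduction: if $t\to t'$ in $\mathcal{L}$, then the relevant embedding yields a reduction sequence of length at least one in the next system of the spectrum, and applying the remaining embeddings to that sequence, step by step, keeps each step a nonempty reduction; since $\to^+$ is transitive, one obtains $g_{\mathcal{L}}(t)\to^+ g_{\mathcal{L}}(t')$ in $\lJmse$. Next, iterating Theorem~\ref{theorem:simulation-cgps} along this reduction sequence gives $\ov{g_{\mathcal{L}}(t)}\to^+_{\beta}\ov{g_{\mathcal{L}}(t')}$ in the simply-typed $\lambda$-calculus (the composition of finitely many nonempty $\beta$-reduction sequences being again nonempty). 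Finally, by Proposition~\ref{proposition:coherence-e} the CGPS translation of $\mathcal{L}$ agrees with the one for $\lJmse$ precomposed with $g_{\mathcal{L}}$, i.\,e.\ $\ov{s}=\ov{g_{\mathcal{L}}(s)}$ for every $s\in\mathcal{L}$; taking $s=t$ and $s=u$ turns the displayed $\beta$-reduction into $\ov{t}\to^+_{\beta}\ov{u}$, as required.

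There is no real obstacle: all the work has been done in Proposition~\ref{prop:simulation-by-embeddings}, Theorem~\ref{theorem:simulation-cgps} and Proposition~\ref{proposition:coherence-e}. The only points that require a moment's care are that strict simulation is closed under composition of embeddings --- which rests solely on the transitivity of $\to^+$ and on each individual embedding carrying a single step to a \emph{nonempty} sequence --- and that Proposition~\ref{proposition:coherence-e} is indeed available for $g_{\mathcal{L}}$ and not merely for the one-step embedding $f_{\mathcal{L}}$; its last sentence supplies exactly this. An alternative, equally short route avoids the composite $g_{\mathcal{L}}$ altogether: one argues by downward induction along the spectrum $\lJmse\leftarrow\lJms\leftarrow\lJm\leftarrow\lJ$, using at each stage that $\ov{t}=\ov{f_{\mathcal{L}}(t)}$ (the first half of Proposition~\ref{proposition:coherence-e}) and that $f_{\mathcal{L}}$ strictly simulates reduction into the immediate successor system, whose CGPS simulation property is the induction hypothesis; the base case is Theorem~\ref{theorem:simulation-cgps}.
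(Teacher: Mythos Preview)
Your proposal is correct and follows essentially the same route as the paper's own proof, which simply cites Propositions~\ref{prop:simulation-by-embeddings} and~\ref{proposition:coherence-e} together with Theorem~\ref{theorem:simulation-cgps}. You have merely spelled out the chaining of these three results in more detail than the paper does; the alternative downward-induction argument you sketch is a minor variant of the same idea.
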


%-------------------------------------------
\begin{proof}
By Propositions \ref{prop:simulation-by-embeddings} and
\ref{proposition:coherence-e} and Theorem
\ref{theorem:simulation-cgps}.
\end{proof}

Since the various CGPS translations preserve typability, Corollary
\ref{cor:SN-for-lambdaJ(m)(s)} follows also from the previous
theorem (and strong normalisation of the simply-typed
$\lambda$-calculus).

The CGPS translations defined above for the subsystems of
$\lJmse$, being consistent with the CGPS translation of $\lJmse$,
have the advantage of inheriting the simulation theorem, and the
disadvantage of not being optimized for the particular system on
which they are defined. In fact, such translations can be optimized by
omitting garbage increments $\suc{G}$ in one or more of their
clauses. We give one example of this phenomenon.

%-------------------------------------------
\begin{thm}[Simulation] \label{theorem:simulation-lJ} Let $\ov{t}$ and $(t:G,K)$ be given for $t\in\lJ$ by:

$$
\begin{array}{rcl}
\ov{t}&=&\lb gk.(t:g,k)\\
(x:G,K)&=&xGK\\
(\lb x.t:G,K)&=&[K(\lb wx.w\ov{t});G]\\
(t(u,x.v):G,K)&=&(t:\suc{G},\lambda m.m\,(\lambda
x.(v:G,K))\,\ov{u})
\end{array}
$$

If $t\rightarrow u$ in $\lJ$, then $\ov{t}\rightarrow^+_{\beta}
\ov{u}$ in the $\lambda$-calculus.
\end{thm}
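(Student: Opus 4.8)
The plan is to re-run the reasoning of Lemma~\ref{lemma:simulation} and Theorem~\ref{theorem:simulation-cgps}, specialised to $\lJ$, whose only reduction rules are $\beta$ and $\pi$. The point of the simplification is that, with no $\mu$-rule around, the analogue of Lemma~\ref{lemma:simulation}.\ref{simmu} is never invoked, and that lemma was the only place where the garbage increment inside the clause for $(x:G,K)$ was actually used; hence it can be dropped here. Likewise, with no $\sigma$, $\mu$ or $\epsilon$ to simulate, the clause for $(t(u,x.v):G,K)$ need not distinguish values from applications. First I would re-establish the auxiliary facts: for the present, optimised translation the analogues of items \ref{simsubst}, \ref{simsubst0}, \ref{simcontains} and \ref{simgarbagereduction} of Lemma~\ref{lemma:simulation} still hold, with $T$ ranging over $\lJ$-terms. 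Items \ref{simsubst0}, \ref{simcontains}, \ref{simsubst} are obtained by the same (simultaneous, and shorter) inductions on the syntax as in the excerpt, and \ref{simsubst} in particular yields $[\ov{t}/x]\ov{u}\to_\beta^*\ov{[t/x]u}$; the analogue of item \ref{simgarbagereduction}, namely $(t:\suc{G},K)\to_\beta^+(t:G,K)$, follows from \ref{simsubst0} and \ref{simcontains} exactly as before, the derivation not inspecting the clause for $(x:G,K)$, so the optimisation is indeed harmless here.

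Next I would prove the append lemma adapted to $\lJ$: for all $\lJ$-terms $t$, all generalised arguments $R=(u,x.v)$ and $S=(u',x'.v')$, and all $\lambda$-terms $G,K$,
\[
(tR:\suc{G},\ \lambda m.m\,(\lambda x'.(v':G,K))\,\ov{u'})\ \to_\beta^+\ (t\,(\app{R}{S}):G,K),
\]
by induction on $v$. When $v=V$ is a value, unfolding both sides (using $\app{(u,x.V)}{S}=(u,x.VS)$ and the translation clauses) shows that the left-hand side is the right-hand side carrying one superfluous garbage increment at its outermost position, which is removed by the analogue of item \ref{simgarbagereduction}. When $v=t_0R_0$ is an application, unfolding the left-hand side again exposes one cancellable increment; removing it leaves an occurrence of $(t_0R_0:\suc{G},K'')$ inside the continuation passed to $t$, where $K''=\lambda m.m\,(\lambda x'.(v':G,K))\,\ov{u'}$ already has the shape required by the induction hypothesis (applied to $t_0$, $R_0$ and the same $S$); that hypothesis rewrites the occurrence to $(t_0\,(\app{R_0}{S}):G,K)$, and this rewriting propagates to the whole term precisely because, by the analogue of item \ref{simcontains}, the continuation is a genuine subterm of it. Recognising the result as $(t\,(u,x.t_0\,(\app{R_0}{S})):G,K)=(t\,(\app{R}{S}):G,K)$ closes the induction.

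With the two lemmata in hand, I would prove the theorem by induction on $t\to u$ in $\lJ$, showing $(t:G,K)\to_\beta^+(u:G,K)$ for all $G,K$; taking $G=g$, $K=k$ fresh then gives $\ov{t}\to_\beta^+\ov{u}$. For the base rule $\beta$, i.e. $(\lambda x.t)(u,y.v)\to[[u/x]t/y]v$, I would unfold $((\lambda x.t)(u,y.v):G,K)$ to $[(\lambda m.m\,(\lambda y.(v:G,K))\,\ov{u})(\lambda wx.w\ov{t});\suc{G}]$, perform the administrative $\beta$-steps --- discard $\suc{G}$ via $[A;\suc{G}]\to_\beta A$, feed $\lambda wx.w\ov{t}$ through $\lambda m.m(\cdot)\ov{u}$, then substitute $\ov{u}$, legal since $x\notin v,G,K$ by the convention on bound variables --- reaching $(\lambda y.(v:G,K))\,[\ov{u}/x]\ov{t}$, and finish with item \ref{simsubst} (once for $[\ov{u}/x]\ov{t}$, then, after one more $\beta$-step, once more using $y\notin G,K$) to obtain $([[u/x]t/y]v:G,K)$. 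For the base rule $\pi$, i.e. $tRS\to t\,(\app{R}{S})$, the claim is just the append lemma applied to $(tRS:G,K)=(tR:\suc{G},\lambda m.m\,(\lambda x'.(v':G,K))\,\ov{u'})$. The closure cases ($\lambda x.t$, and the three positions inside a generalised application) are routine: the induction hypothesis rewrites the relevant $\ov{\cdot}$- or $(\cdot:G',K')$-subterm, and item \ref{simcontains} guarantees it has not been erased, so the rewriting yields a strict $\to_\beta^+$ step on the whole term.

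The main obstacle will be the append lemma, and within it the garbage bookkeeping: one must check that in each subcase the unfolding produces exactly one extra garbage increment --- never zero, so that a strict ($\to_\beta^+$) step is genuinely available, and never more than one, so that item \ref{simgarbagereduction} settles it in one move. As in Lemma~\ref{lemma:simulation}.\ref{simappend}, the delicate feature is that the arguments $u'$ and $v'$ of the outer generalised application occur only inside the continuation handed to $t$, so strict simulation survives the recursion solely by virtue of item \ref{simcontains}. Everything else is a strict simplification of the $\lJmse$ development; in particular no analogue of Lemma~\ref{lemma:simulation}.\ref{simapply} or \ref{simmu} is needed, the former because $\lJ$'s $\beta$-rule performs its substitutions immediately rather than leaving a command to be substituted later.
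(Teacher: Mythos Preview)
Your proposal is correct and follows essentially the same approach as the paper, whose proof is just the one-line remark ``Similar to, but simpler than that of Theorem~\ref{theorem:simulation-cgps}.'' You have spelled out precisely the simplification the paper leaves implicit: dropping the analogues of Lemma~\ref{lemma:simulation}.\ref{simapply} and \ref{simmu}, reformulating the append lemma for generalised arguments, and handling the two base cases $\beta$ and $\pi$ directly.
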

%-------------------------------------------
\begin{proof} Similar to, but simpler than that of Theorem
\ref{theorem:simulation-cgps}.
\end{proof}

%+++++++++++++++++++++++++++++++++++++++++++++++++++++++++++
\subsection{C(G)PS translations with less double negations}\label{subsec:simpler-cps}

  Our definition of $(A\imp B)^*$ produces a type logically equivalent to $\ov{A}\imp\neg\neg\ov{B}$, which
  has an extra double negation of $\ov{B}$ when compared with traditional CPS's. One may wonder what happens if
  one sets $(A\imp B)^*=\ov A \imp \ov B$. There is no problem in \emph{defining} a CPS translation
  based on that, but we would even lose weak simulation in the form of
  Proposition~\ref{proposition:soundness-cps}. Let us take the simplified
  type translation, whose new clauses are:
\[
\begin{array}{rcl}
(\lambda x.t:K)&=&K(\lb x.\ov t)\\
(u::l:K)&=&\lambda w.w(\lambda m.(l:K)(m\ov{u}))\\
(t(u::l):K)&=&(t:\lambda m.(l:K)(m\ov{u}))
\end{array}
\]
This translation obeys to the typing rules of
Figure~\ref{fig:cpstypes}, but already $\beta$ steps at the root do
not obey to Proposition~\ref{proposition:soundness-cps}:

\begin{eqnarray*}
((\lambda x.t)(u::[]):K)
&=&(\lb m.(\lb w.wK)(m\ov u))(\lb x.\ov t)\\
&\to_{\beta}^2&(\lb x.\ov t)\ov u K\\
&=_{\beta}&(\lb x.\ov t K)\ov u\\
&\to_{\beta}&(\lb x.(t:K))\ov u\\
&=&(u(x)(t[]):K)
\end{eqnarray*}

\noindent The problem is that there is no reduction step from
$(\lb x.\ov t)\ov u K$ to $(\lb x.\ov t K)\ov u$ in
$\lb$-calculus. Similar remarks apply to the subsystem $\lJms$.

The failed simulation just illustrated would have not occurred,
had the $\beta$ rule been defined with implicit substitution:
$$(\lb x.\ov t)\ov u K
\to_{\beta}([\ov u/x]\ov t)K
=[\ov u/x](\ov tK)
\to_{\beta}[\ov u/x](t:K)
\to_{\beta}^*([u/x]t:K)
=(\ctt{[u/x]t}[]:K)$$

\noindent This is consistent with another fact: weak simulation,
through the simpler CPS, is recovered as soon as one moves down in
the spectrum to $\lJm$ or $\lJ$ (systems where $\beta$-reduction
employs implicit substitution). For these systems, the combination
of garbage passing with the simpler CPS delivers strict
simulation. The theorem below exemplifies the situation with $\lJ$
(to be compared with Theorem \ref{theorem:simulation-lJ}).

%-------------------------------------------
\begin{thm}[Simulation]
For a type $A$, let $\ov{A}=\top\imp\neg\neg A^*$, $X^*=X$ and
$(A\imp B)^*=\ov{A}\imp\ov{B}$ and for $t\in\lJ$, let
$\ov{t}=\lambda gk.(t:g,k)$ and $(t:G,K)$ be defined as

$$
\begin{array}{rcl}
\ov{t}&=&\lb gk.(t:g,k)\\
(x:G,K)&=&xGK\\
(\lb x.t:G,K)&=&[K(\lb x.\ov{t});G]\\
(t(u,x.v):G,K)&=&(t:\suc{G},\lambda m.(\lambda x.(v:G,K))(m\ov{u}))
\end{array}
$$

\begin{enumerate}
\item If $\Gamma\vdash_{\lJ} t:A$ then
$\ov{\Gamma}\vdash_{\lb}\ov{t}:\ov{A}$.
\item If $t\rightarrow u$ in $\lJ$, then $\ov{t}\rightarrow^+_{\beta}
\ov{u}$ in the $\lambda$-calculus.
\end{enumerate}
\end{thm}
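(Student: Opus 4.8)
The plan is to treat both parts exactly as in the unprimed C(G)PS development, the only difference being that the new clauses have been simplified according to $(A\imp B)^*=\ov A\imp\ov B$ and that the calculus is $\lJ$ rather than $\lJmse$. First I would prove part (1), type soundness, by simultaneous induction on the typing derivation, simultaneously establishing the admissible rules $\ov{\Gamma},\Gamma'\vdash(t:G,K):\bot$ whenever $\Gamma\vdash t:A$, $\ov{\Gamma},\Gamma'\vdash G:\top$ and $\ov{\Gamma},\Gamma'\vdash K:\neg A^*$, in the spirit of Figure~\ref{fig:cgpstypes}. The three clauses are routine: for $x$ use $xGK:\bot$ from $x:\top\imp\neg\neg A^*$; for $\lb x.t$ note that $\lb x.\ov t$ now has type $\ov A\imp\ov B=A^*$ (no extra negation), so $K(\lb x.\ov t):\bot$ and wrapping with garbage keeps the type; for generalised application use the induction hypotheses for $t$, $u$ and $v$ together with the fact that $\suc G:\top$ and that $\lb m.(\lb x.(v:G,K))(m\ov u)$ has type $\neg\ov B=\neg B^*{}'$ where $A\imp B$ is the type of $t$ — here $m\ov u:\ov B$ because $m:(A\imp B)^*=\ov A\imp\ov B$ and $\ov u:\ov A$. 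Type soundness of $\ov t$ then follows by one $\lambda$-abstraction over $g,k$.

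Next I would prove part (2), strict simulation, by simultaneously showing $T\to T'\Longrightarrow(T:G,K)\to_\beta^+(T':G,K)$ for all $T$ (and for $\lJ$ there are only terms, but the generalised-argument clause forces the analogue of the Lemma~\ref{lemma:simulation} bookkeeping). I would first reprove, for this simplified translation, the relevant parts of Lemma~\ref{lemma:simulation}: substitution commutation $[\ov t/x](u:G,K)\to_\beta^*([t/x]u:[\ov t/x]G,[\ov t/x]K)$ and in particular $[\ov t/x]\ov u\to_\beta^*\ov{[t/x]u}$ (by simultaneous induction on $u$), the "$x\notin T$" variant, the subterm property that $G$ and $K$ occur in $(T:G,K)$, and the garbage-decrement step $(t:\suc G,K)\to_\beta^+(t:G,K)$ which follows from the subterm property exactly as in the proof of Lemma~\ref{lemma:simulation}.\ref{simgarbagereduction}. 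The base-case $\beta$-step $(\lb x.t)(u,y.v)\to[[u/x]t/y]v$ is then handled by unfolding: $((\lb x.t)(u,y.v):G,K)=(\lb x.t:\suc G,\lb m.(\lb y.(v:G,K))(m\ov u))=[(\lb m.(\lb y.(v:G,K))(m\ov u))(\lb x.\ov t);\suc G]$, which $\to_\beta$ drops the garbage wrapper, then $\to_\beta$ contracts $m:=\lb x.\ov t$ giving $(\lb y.(v:G,K))((\lb x.\ov t)\ov u)$, then $\to_\beta$ contracts $(\lb x.\ov t)\ov u\to[\ov u/x]\ov t$, and then $[\ov u/x]\ov t=\lb g'k'.[\ov u/x](t:g',k')\to_\beta^*\lb g'k'.([u/x]t:g',k')=\ov{[u/x]t}$ by the substitution lemma; feeding this to $\lb y.(v:G,K)$ and applying the substitution lemma once more yields $([[u/x]t/y]v:G,K)$. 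This is precisely the computation displayed at the end of Subsection~\ref{subsec:simpler-cps}, which shows why moving to $\lJ$ (implicit-substitution $\beta$) repairs the failure that occurred for $\lJms$/$\lJmse$.

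The $\pi$-case $tRS\to t(\app RS)$ is handled as in Theorem~\ref{theorem:simulation-cgps}: I would establish the append lemma $(t(u,l):\suc G,\lb m.(\lb x.(v':G,K))(m\ov u'))\to_\beta^+(t\,(\app{(u,l)}{(u',x.v')}):G,K)$ (and the value-case variant) by induction on $R=(u,l)$, using the garbage-decrement step to absorb the extra $\suc{}$'s, and then the $\pi$-step reduces to observing that $(tRS:G,K)$ unfolds to a left-hand side of that lemma; the self-overlap of $\pi$ is not an issue since we only simulate single steps. The closure rules are routine inductions, with the one genuinely delicate clause being $t\to t'\Longrightarrow t(u,x.v)\to t'(u,x.v)$, where $t$ and $t'$ appear only inside the continuation argument to $t$'s translation — here the subterm property (the analogue of Lemma~\ref{lemma:simulation}.\ref{simcontains}) is exactly what guarantees the single $\beta^+$-step survives. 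I expect the main obstacle to be none of the individual steps but rather getting the simultaneous inductions set up so that the substitution lemma, the subterm property, and the garbage-decrement lemma are available at the right generality before the $\beta$- and $\pi$-cases need them; once that scaffolding is in place the proof is, as the excerpt says, "similar to, but simpler than" that of Theorem~\ref{theorem:simulation-cgps}, the simplification being that with the lighter $(A\imp B)^*$ the $\beta$-unfolding produces a genuine redex $(\lb x.\ov t)\ov u$ instead of the stuck term $(\lb x.\ov t)\ov u K$.
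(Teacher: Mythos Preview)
Your proposal is correct and follows the paper's approach: establish the admissible typing rule for the colon translation to obtain (1), and prove (2) via the analogues of Lemma~\ref{lemma:simulation} (substitution commutation, subterm property, garbage decrement, and an append lemma for $\pi$); your unfolding of the $\beta$ base case matches the paper's displayed computation. One small slip: the delicate closure case is not $t\to t'\Longrightarrow t(u,x.v)\to t'(u,x.v)$ --- there $t$ sits at the head of the colon translation and the induction hypothesis applies directly --- but rather $u\to u'$ and $v\to v'$, since $\ov u$ and $(v:G,K)$ occur only inside the continuation argument passed to $t$'s translation; \emph{that} is where the subterm property (Lemma~\ref{lemma:simulation}.\ref{simcontains}) is actually invoked.
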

%-------------------------------------------
\begin{proof}
The proof of (1) is based on the fact that the first rule of
Figure~\ref{fig:cgpstypes} is still admissible. Property (2) follows
along the lines of theorems \ref{theorem:simulation-lJ} and
\ref{theorem:simulation-cgps}, requiring some properties analogous
to those in Lemma \ref{lemma:simulation}. We illustrate below the
base case for $\beta$ (problematic for $\lJmse$ and $\lJms$, as
explained above):

\[
\begin{array}{rcll}
((\lambda x.t)(u,y.v):G,K)&=&(\lambda x.t:\suc G,\lambda m. (\lambda
y.
(v:G,K))(m\ov{u}))\\
 &=&[(\lambda m. (\lambda y.
(v:G,K))(m\ov{u}))(\lambda x.\ov{t});\suc G]&\\
&\rightarrow^4_{\beta}&
[[\ov{u}/x]\ov{t}]/y](v:G,K)\\
&\rightarrow^*_{\beta}& [\ov{[u/x]t]}/y](v:G,K)&\\
&\rightarrow^*_{\beta}& ([[u/x]t]/y]v:G,K)&
\end{array}
\]
\end{proof}
Note that this translation of types, variables and $\lb$-abstractions
coincides with that of~\cite{IkedaNakazawa06}. Evidently, the case of
generalized application is new since it was not considered there. Only
here is the need for a garbage increment.

Finally, let us observe that the extra double negation in $(A\imp
B)^*$ has to be integrated as $\neg\ov B\imp\neg\ov A$, and not as
$\ov A\imp\neg\neg\ov B$. Had the latter alternative been adopted,
and again, already for CPS, we would lose weak simulation. The
CPS would then be defined by:
\[
\begin{array}{rcl}
(\lambda x.t:K)&=&K(\lb xw.w\ov t)\\
(u::l:K)&=&\lambda w.w(\lambda m.m\ov u(l:K))\\
(t(u::l):K)&=&(t:\lambda m.m\ov u(l:K))
\end{array}
\]

\noindent With these definitions, one calculates:
\begin{eqnarray*}
((\lambda x.t)(u::[]):K)
&=&(\lb m.m\ov u(\lb w.wK))(\lb xw.w\ov t)\\
&\to_{\beta}&(\lb xw.w\ov t)\ov u(\lb w.wK)\\
&=_{\beta}&(\lb x.(\lb w.w\ov t)(\lb w.wK))\ov u\\
&\to_{\beta}^3&(\lb x.(t:K))\ov u\\
&=&(u(x)(t[]):K)
\end{eqnarray*}
Again, the undirected $=_\beta$-step cannot be dispensed with by
reduction steps in $\lb$-calculus.

%*********************************************************************************
\section{Higher-Order Systems}\label{sec:higher-order-systems}

In this section, we extend the CGPS translation to, and obtain
strong normalisation for, the extensions of $\lJmse$ described in
the following table:\\

\begin{tabular}{r|r|l}
intuitionistic logic & sequent calculus & natural deduction system\\
\hline
propositional&$\lJmse$ & $\lb$\\
second-order propositional&$\lJmseS$ & $\ldfS$\\
higher-order propositional&$\lJmseO$ & $\ldfO$\\
higher-order predicate&$\HJmse$ & $\HOL$
\end{tabular}\\

\noindent Such extensions constitute several systems of
intuitionistic logic formulated as sequent calculi, and have a
corresponding natural deduction system. The latter are formulated
as domain-free type theories \cite{BartheSoerensen00}, and all but
one belong to the domain-free cube. The only exception is $\HOL$,
which is a domain-free formulation of Geuvers' treatment of
higher-order intuitionistic logic \cite{GeuversCahiers}.

Each CGPS translation goes from a sequent calculus to the
corresponding natural deduction system, where the latter is
expected to satisfy strong normalisation. This is the case for the
systems in the domain-free cube \cite{BartheSoerensen00}. As to
$\HOL$, it is well known that it is a pure type system
\cite{GeuversCahiers} which, in addition, has a functional
specification \cite{BartheSoerensen00}. Now \emph{op.cit.} shows
that in such cases, strong normalisation of the domain-full system
implies the same property for the domain-free one. Therefore, we
infer from the strong normalisation of Geuvers' system that of
$\HOL$.

The formulation of the systems of higher order (unlike those at
second order) require the introduction of an upper level of
domains of quantification and their inhabitants. In order to avoid
that these technicalities blur the simplicity by which the
properties of the CGPS extend beyond the (zero-th order)
propositional case, we decided to develop first the second-order
case with the simplest formulation, even at the price of a little
amount of redundancy.

%++++++++++++++++++++++++++++++++++++++++++++++++++++++++++++++++++++++
\subsection{Second Order}\label{subsec:snd-order}

All the results of the previous sections readily extend to the second
order which is one of the important advantages of double-negation
translations w.\,r.\,t. G\"odel's negative translation (employed for
first-order $\lb\mu$-calculus by Parigot \cite{Parigot97}).
In order to give an idea of how to proceed,
we will sketch how to equip $\lJmse$ by a second-order
universal quantifier (yielding system $\lJmseS$) and how to extend the
CGPS translation of $\lJmse$ into simply-typed $\lb$-calculus to a
translation of $\lJmseS$ into a ``domain-free'' version
$\ldfS$ \cite{BartheSoerensen00} of second-order
$\lb$-calculus a.\,k.\,a.~system~$F$ \cite{GirardF}.

\subsubsection{System $\ldfS$}
To recall, system~$F$ corresponds to
second-order propositional logic and consequently also has the
types of the form $\forall X.A$. Therefore, also on the type
level, we need to allow silent renaming of bound variables. Just
as it is done in \cite{IkedaNakazawa06}, we stay with the
Curry-style typing of our previous systems but nevertheless add
$\Lambda X.t$ and $tA$ to the term syntax for $\lb$, for universal
introduction and universal elimination, respectively. These two
constructions normally belong to the typing discipline \`a la
Church, but in addition to $\lb$, they give (a variant of) system
$\ldfS$ of \cite{BartheSoerensen00}. The new typing rules are:

\[\infer[RIntro2]{\Gamma\vdash \Lambda X.t:\forall X.A}{\Gamma\vdash t:A}\qquad\infer{\Gamma\vdash tB:[B/X]A}{\Gamma\vdash t:\forall X.A}
\]
with $[B/X]A$ denoting type substitution, where $RIntro2$ is under the
proviso that $X$ is not free in any type in $\Gamma$. The new
reduction rule is $\beta2$:
$$(\Lambda X.t)B\to [B/X]t$$
with $[B/X]T$ type substitution in term $t$. It is shown in
\cite{BartheSoerensen00} that strong normalisation of typable terms
is inherited from the same property for system~$F$, that has been
established by Tait's refinement \cite{Tait75} of Girard's weak
normalisation proof \cite{GirardF}.

\subsubsection{System $\lJmseS$}
For $\lJmseS$, we also extend the term
syntax by $\Lambda X.t$ and extend the co-term syntax by $A::l$
that count among the evaluation contexts. The cases $u::l$ and
$A::l$ can be uniformly seen as $U::l$, where $U$ now stands for a
term or type. Type substitution $[B/X]T$ for $T$ a
term/co-term/command can be defined in the obvious way. $\lJmseS$
extends $\lJmse$ also by the rule $RIntro2$ above and by
$$\infer[LIntro2]{\seql{\Gamma}{B::l}{\forall X.A}{C}}{\seql{\Gamma}{l}{[B/X]A}{C}}$$
The notion $\app l{l'}$ is redefined with $u$ replaced by $U$ (and
stays associative), and the admissible typing rules for substitution,
weakening and $\app{}{}$ carry over from $\lJmse$, as well as the
obvious typing rules for type substitution.  The only new reduction
rule is
$$(\beta2)\quad(\Lambda X.t)(B::l)\to([B/X]t)l$$
So, term substitution is dealt with in an explicit way in $\lJmseS$,
but type substitution is still left implicit. This gap would be
annoying for dependently-typed systems, see \cite{LengrandEtAl06} for
a proposal that solves this problem.

The one-step reduction relation takes into account the new syntactic
constructions, and subject reduction follows.

\subsubsection{CGPS translation}
The CGPS-translation of $\lJmse$ into
$\lb$ is now extended to a CGPS of $\lJmseS$ into $\ldfS$. Unlike the case of implication, no further double negation
w.\,r.\,t.~\cite{IkedaNakazawa06} has to be added, since our sequent calculi do not provide an explicit type substitution; we set
$$(\forall X.A)^*=\forall X.\ov{A}\enspace.$$
Evidently, $([B/X]A)^*=[B^*/X]A^*$, hence (but to be proven
simultaneously) $\ov{[B/X]A}=[B^*/X]\ov{A}$. We extend the definition
of $(T:G,K)$ for $\lJmse$, with $G,K$ terms of $\ldfS$, by
$$
\begin{array}{rcl}
  (\Lambda X.t:G,K)&=&[K(\Lambda X.\ov{t});G]\\
  (t(B::l):G,K)&=&(t:G,\lambda m.(l:G,K)(mB^*))\\
  (B::l:G,K)&=&\lb w.wG(\lambda m.(l:G,K)(mB^*))
\end{array}
$$
The clause for $\Lambda X.t$ is taken from \cite{IkedaNakazawa06}.
This extended translation obeys to the same typing as for $\lJmse$ (now always w.\,r.\,t.~$\ldfS$),
hence satisfies type soundness.

%------------------------------------------------
\begin{lem}\label{lemma:simulation-2nd-order}The CGPS translation of $\lJmseS$ into $\ldfS$ satisfies the following:
\begin{enumerate}[\em(1)]
\item \emph{$-$ (7)} as in Lemma \ref{lemma:simulation}.
\item[\em(8)] $[B^*/X](T:G,K)=([B/X]T:[B^*/X]G,[B^*/X]K)$ and
$[B^*/X]\ov{t}=\ov{[B/X]t}$.
\item[\em(9)] $[B/X](T:G,K)=(T:[B/X]G,[B/X]K)$ for $X$ not free in $T$.

\item[\em(10)] \begin{enumerate}[\em(a)]
\item
$(tl:\suc{G},\lambda m.(l':G,K)(mB^*))\to_\beta^+(t(\app l{(B::l')}):G,K)$
\item
$(l:\suc{G},\lambda m.(l':G,K)(mB^*))\to_\beta^+(\app l{(B::l')}:G,K)$
\end{enumerate}
\end{enumerate}
\end{lem}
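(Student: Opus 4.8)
The plan is to mirror, statement by statement, the structure of the proof of Lemma~\ref{lemma:simulation}, handling the new second-order constructs $\Lambda X.t$, $B::l$ and the reduction rule $\beta2$ exactly as the corresponding first-order constructs $\lambda x.t$, $u::l$ and $\beta$ were handled, while treating the genuinely new items (8) and (9) about \emph{type} substitution in parallel with the term-substitution statements (1)/(2). First I would fix the inductive architecture: items (1)--(3), (8), (9) are all proved by simultaneous induction over terms, co-terms and commands, with the clauses for $\Lambda X.t$, $B::l$, $t(B::l)$ added to each induction; items (4)--(7) and (10) are then derived from (1)--(3), (8), (9) just as before, with (7)/(10) proved by a common induction on $l$ that now has the extra base-type case $l=B::l_0$ (structurally identical to the $l=u::l_0$ case). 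Because the type translation satisfies $(\forall X.A)^*=\forall X.\overline A$ and hence $([B/X]A)^*=[B^*/X]A^*$ and $\overline{[B/X]A}=[B^*/X]\overline A$ (to be carried along in the induction), the admissible CGPS typings extend verbatim, which is what makes all of this go through.

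For the new clauses the key computations are routine and follow the first-order template. For (1)/(2)/(3) the cases $\Lambda X.t$, $B::l$, $t(B::l)$ are immediate from the definitions since term substitution $[\overline t/x]$ commutes with type substitution and with the structural shape of the new clauses; (3) (that $G,K$ are subterms of $(T:G,K)$) holds for $B::l$ because $\lambda w.wG(\lambda m.(l:G,K)(mB^*))$ visibly contains $G$, $K$ (via the inductive hypothesis on $l$), and for $\Lambda X.t$ because $[K(\Lambda X.\overline t);G]=(\lambda z.K(\Lambda X.\overline t))G$ contains both. Item (8) is the genuinely new content: one proves $[B^*/X](T:G,K)=([B/X]T:[B^*/X]G,[B^*/X]K)$ by the same simultaneous induction, the only interesting clause being $t(B'::l)$, where on the left we substitute into $(t:G,\lambda m.(l:G,K)(m B'^*))$ and must recognise $[B^*/X](B'^*)=([B/X]B')^*$ — which is exactly the identity $([B/X]A)^*=[B^*/X]A^*$ noted above — and symmetrically for the $B::l$ and $\Lambda X.t$ clauses; the corollary $[B^*/X]\overline t=\overline{[B/X]t}$ then follows from the $T=u$ instance just as in item (1). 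Item (9) is the easy degenerate case of (8) when $X\notin T$, proved the same way.

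Items (4)--(6) need no change at all: their proofs invoke only (1)--(3), which now hold for the extended syntax, and no new base-rule case arises (the $\Lambda X.t$ and $B::l$ clauses introduce no new redex shapes relevant to garbage-decrement, $\mu$-simulation, or apply-absorption beyond what the induction on $l$ already covers). For item (10), the induction on $l$ gets one extra case $l=B::l_0$, which is copied word for word from the $l=u_0::l_0$ case of Lemma~\ref{lemma:simulation}.\ref{simappend} with $\overline{u_0}$ replaced by $B^*$ and the abstraction $\lambda n.n(\dots)\overline{u_0}$ replaced by $\lambda n.(\dots)(nB^*)$; part (a) follows from part (b) exactly as before. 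I expect the only mildly delicate point — the ``main obstacle'', such as it is — to be bookkeeping the interaction between the two kinds of substitution in item (8): one must be scrupulous that $\Lambda X$ does not capture $X$ in $B$ and that the simultaneous claim $\overline{[B/X]A}=[B^*/X]\overline A$ on types is threaded through correctly, since it is used implicitly every time a term-level $\Lambda X.t$ or $B::l$ clause is unfolded. Everything else is a mechanical extension of the first-order lemma.
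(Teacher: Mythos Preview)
Your proposal is correct and matches the paper's approach: the paper gives no explicit proof of this lemma, implicitly treating it as a routine extension of Lemma~\ref{lemma:simulation}, and your elaboration of how each item carries over is accurate. One small slip worth tightening is that items (5) and (6) are proved by case analysis on $l$, so they too acquire the new case $l=B::l_0$ (handled exactly like $l=u::l_0$), rather than needing ``no change at all''.
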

%-----------------------------------------------

Now we prove the following theorem just as before
Theorem~\ref{theorem:simulation-cgps}.

%---------------------------------------------
\begin{thm}[Simulation]
\label{theorem:simulation-cgps2} If $t\rightarrow u$ in $\lJmseS$,
then $\ov{t}\rightarrow^+_{\beta} \ov{u}$ in $\ldfS$.
\end{thm}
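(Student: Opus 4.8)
The plan is to mimic exactly the structure of the proof of Theorem~\ref{theorem:simulation-cgps} for $\lJmse$, now taking into account the extra syntactic constructors $\Lambda X.t$ and $B::l$ and the extra reduction rule $\beta2$. As before, I would prove by simultaneous induction over $T$ (terms, co-terms, commands) the auxiliary statement
\[
T\rightarrow T' \Longrightarrow (T:G,K) \rightarrow^+_{\beta} (T':G,K),
\]
from which the theorem follows by taking $G=g$, $K=k$ and prefixing $\lambda gk$. All the cases for the old base rules $\beta,\pi,\sigma,\mu,\epsilon$ and all the old closure rules go through verbatim as in the proof of Theorem~\ref{theorem:simulation-cgps}, appealing now to the correspondingly numbered items of Lemma~\ref{lemma:simulation-2nd-order} in place of Lemma~\ref{lemma:simulation}; in particular item~(10) replaces item~(7) of Lemma~\ref{lemma:simulation} for the sub-case $E = B::l'$ of $\pi$, and this is the only place where the new evaluation-context shape $B::l$ interacts with an old rule.

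The genuinely new work is the base case for $\beta2$: $(\Lambda X.t)(B::l)\to([B/X]t)l$. Here I would compute, unfolding the definitions of $(t(B::l):G,K)$ and $(\Lambda X.t:G,K)$,
\[
\begin{array}{rcll}
((\Lambda X.t)(B::l):G,K)&=&(\Lambda X.t:G,\lambda m.(l:G,K)(mB^*))&\\
&=&[(\lambda m.(l:G,K)(mB^*))(\Lambda X.\ov t);G]&\\
&\rightarrow_\beta^+&(l:G,K)((\Lambda X.\ov t)B^*)&\\
&\rightarrow_\beta&(l:G,K)([B^*/X]\ov t)&\\
&=&(l:G,K)\,\ov{[B/X]t}&\textrm{(Lemma~\ref{lemma:simulation-2nd-order}.(8))}\\
&\rightarrow_\beta^*&(([B/X]t)l:G,K)&\textrm{(Lemma~\ref{lemma:simulation-2nd-order}.(5))}
\end{array}
\]
so that at least one $\beta$-step is performed, as required. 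This is entirely parallel to the old $\beta$-case, with type substitution at the object level being tracked by $\beta2$ at the target level and then absorbed via item~(8) of the lemma. One also has a new closure rule $t\to t' \Longrightarrow \Lambda X.t\to\Lambda X.t'$, whose translation is immediate: $(\Lambda X.t:G,K)=[K(\Lambda X.\ov t);G]$ and the induction hypothesis gives $\ov t\to_\beta^+\ov{t'}$, so the bracketed term reduces with at least one step. The closure rules involving $B::l$ inside larger co-terms follow the same routine pattern as for $u::l$.

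The only point that needs any care — and the place I would flag as the main obstacle, modest as it is — is making sure that Lemma~\ref{lemma:simulation-2nd-order} really does provide all the auxiliary facts in the required form, in particular that items (8) and (9) (commutation of type substitution with the translation, respectively triviality of vacuous type substitution) are available, and that item~(5) of Lemma~\ref{lemma:simulation} (application reflection $(l:G,K)\ov t\to_\beta^*(tl:G,K)$) still holds with its straightforward case analysis extended to include $l=B::l'$. Since Lemma~\ref{lemma:simulation-2nd-order} is stated to carry over items (1)–(7) unchanged and to add precisely (8)–(10), this is exactly what has been arranged, so no essentially new idea is needed beyond the old proof plus bookkeeping for the extra constructor. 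Hence the proof is, as the paper says, ``just as before Theorem~\ref{theorem:simulation-cgps}''.
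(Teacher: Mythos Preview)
Your proposal is correct and follows essentially the same approach as the paper: the paper's proof also reduces to the two genuinely new base cases ($\beta2$ and the sub-case $E=B::l'$ of $\pi$), handling them via the same computations and the same appeals to items~(5), (8), and (10) of Lemma~\ref{lemma:simulation-2nd-order}. Your step-by-step unfolding of the $\beta2$ case matches the paper's $\rightarrow^3_\beta$ line, just with the three $\beta$-steps grouped slightly differently.
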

%----------------------------------------------

\proof We show the new base cases.

Case $\beta 2$: $(\Lambda X.t)(B::l)\rightarrow([B/X]t)l$.
\[
\begin{array}{rcll}
&&((\Lambda X.t)(B::l):G,K)&\\
&=&(\Lambda X.t:G,\lambda m. (l:G,K)(m B^*))&\\
&=&[(\lambda m. (l:G,K)(m B^*))(\Lambda X.\ov{t});G]&\\
&\rightarrow^3_{\beta}&(l:G,K)[B^*/X]\ov{t}&\\
&=&(l:G,K)\ov{[B/X]t}&(\textrm{Lemma \ref{lemma:simulation-2nd-order}.8.})\\
&\rightarrow^*_{\beta}&(([B/X]t)l:G,K)&\textrm{ (Lemma
\ref{lemma:simulation-2nd-order}.5.)}
\end{array}
\]

Case $\pi$: Sub-case $E=B::l'$.
$$
\begin{array}{rcll}
(\ctt{tl}(B::l'):G,K)&=&(tl:\suc{G},\lambda m.(l':G,K)(mB^*))&\\
&\rightarrow^+_{\beta}&(t\,(\app l{(B::l')}):G,K)&\textrm{(Lemma
    \ref{lemma:simulation-2nd-order}.10.)}\rlap{\hbox to21 pt{\hfill}\qEd}\\
\end{array}
$$

%---------------------------------
\begin{cor}\label{cor:SN-for-LambdaJmseS}
The typable terms of $\lJmseS$ are strongly normalising.
\end{cor}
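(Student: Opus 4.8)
The plan is to reproduce verbatim the argument that yielded Corollary~\ref{cor:SN-for-LambdaJmse} in the propositional case, now feeding it with the second-order apparatus assembled above: the CGPS translation of $\lJmseS$ into $\ldfS$, its type soundness, the \emph{strict} simulation provided by Theorem~\ref{theorem:simulation-cgps2}, and the fact (quoted from \cite{BartheSoerensen00}, and ultimately resting on \cite{Tait75,GirardF}) that the typable terms of $\ldfS$ are strongly normalising.

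Concretely, I would argue by contradiction. Suppose some typable term $t_0$ of $\lJmseS$, say $\Gamma\vdash_{\lJmseS} t_0:A$, admits an infinite reduction sequence $t_0\to t_1\to t_2\to\cdots$. By type soundness of the CGPS translation, $\ov\Gamma\vdash_{\ldfS}\ov{t_0}:\ov A$, so $\ov{t_0}$ is a typable term of $\ldfS$. By Theorem~\ref{theorem:simulation-cgps2}, $\ov{t_i}\to_\beta^+\ov{t_{i+1}}$ for every $i$ (note that the translation, and the simulation theorem, are defined and hold for arbitrary terms, so no appeal to subject reduction for the $t_i$ is needed). Concatenating these non-empty reduction sequences yields an infinite $\beta$-reduction sequence issued from $\ov{t_0}$ in $\ldfS$, contradicting strong normalisation of the typable terms of $\ldfS$. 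Hence no such $t_0$ exists, which is the claim.

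The point worth stressing is that the whole proof hinges on the word ``strict'' in Theorem~\ref{theorem:simulation-cgps2}: it is essential that \emph{each} source step be mapped to \emph{at least one} $\beta$-step in $\ldfS$, for otherwise the concatenation above could be finite. This is precisely the property that the garbage-passing machinery was designed to guarantee — in particular for the permutative rules $\pi$ and $\epsilon$ and for $\mu$, where a plain CPS would collapse the step — and, at second order, also for the new rule $\beta2$, which is covered by the base case displayed in the proof of Theorem~\ref{theorem:simulation-cgps2}. No genuinely new obstacle arises with respect to the propositional case: the second-order extension comes ``for free'' because the double-negation style of the type translation commutes with type substitution ($\ov{[B/X]A}=[B^*/X]\ov A$, as recorded before Lemma~\ref{lemma:simulation-2nd-order}), so neither an extra double negation nor a reworking of the simulation lemma is required beyond items (8)--(10) of Lemma~\ref{lemma:simulation-2nd-order}. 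The only routine verification is that every reduction rule of $\lJmseS$ — the five rules inherited from $\lJmse$ together with $\beta2$ — is indeed handled by the simulation theorem, which it is.
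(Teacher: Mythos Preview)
Your argument is correct and is precisely the intended one: the corollary is an immediate consequence of type soundness of the CGPS translation, the strict simulation of Theorem~\ref{theorem:simulation-cgps2}, and strong normalisation of $\ldfS$, exactly as in the propositional case. The paper leaves the proof implicit, and your spelling-out of the contradiction argument, together with the observation that strictness of the simulation is the crucial point, faithfully unpacks what the paper takes for granted.
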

%---------------------------------
A technically more involved CGPS for the Church-style version of
$\lJmseS$ into Church-style system~$F$ can be given along the lines of
\cite{Matthes06}, where the colon translation has to be made relative
to a context $\Gamma$.

%++++++++++++++++++++++++++++++++++++++++++++++++++++++++++++++++++++++
\subsection{$F^{\omega}$ and Higher-Order Logic}\label{subsec:Fomega-and-higher-order}

In the second order systems, one assumes that $X$ in the
quantification $\forall X.A$ ranges over the domain $\PROP$ of all
propositions (or types). In this subsection we study systems
allowing the formation of other \emph{domains of quantification},
usually denoted $\D$, $\E$. Quantification now has the form
$\forall X:\D.A$, but, at the proof-term level, abstraction
$\Lambda X.t$ remains domain-free.

In the following we formulate intuitionistic higher-order
predicate logic, both in the natural deduction format $\HOL$, and
the sequent calculus format $\HJmse$. A minor restriction in each
of these systems gives two formulations ($\ldfO$ and $\lJmseO$,
respectively) of intuitionistic higher-order propositional logic,
or system $F^{\omega}$.

\subsubsection{Domains of quantification}
Domains (of quantification)
are given by:

$$
\begin{array}{rcl}
\D,\E & ::= & \PROP\,|\,\X\,|\,\D\to\D
\end{array}
$$

\noindent $\X$ ranges over a set of \emph{domain variables}. These
play the role of ``sorts'' in multi-sorted first-order logic. The
set of domains is very much like Church's structure of simple types,
except that, besides $\PROP$ (the type of propositions), Church
only admitted one other base type $\iota$ of \emph{individuals}.

Next come the propositional, or type, or individual,
function(al)s:

$$
\begin{array}{rcl}
A,B,C & ::= & X\,|\,\lbd X.A\,|\,AB\,|\,A\imp B\,|\,\forall X:\D.A
\end{array}
$$

\noindent These are the inhabitants of domains. $X$ ranges over a
set, whose elements may be seen as \emph{type variables}, or
\emph{propositional variables}, or \emph{individual variables},
etc. In the last case a meta-variable like $\mathsf{x}$ would be
more expressive. Also, one may employ meta-variables $\varphi$ and
$\psi$ instead of $A$, if one wants to emphasize that the
inhabitant is a proposition, or $\mathsf{t}$ if one wants to
emphasize that the inhabitant lives in some domain of individuals
$\X$. $\lbd X.A$ and $AB$ are the generic, and usual, mechanism
for building inhabitants at all levels of the domain
structure.\footnote{Notation: different forms of abstraction are
denoted by variants of the symbol $\lambda$, but application is
always denoted by juxtaposition.}

The relationship between domains and their inhabitants is governed
by \emph{domain assignment rules}. Let $\Delta$ range over
consistent sets of declarations $X:\D$. Such rules derive sequents
of the form $\Delta\vdash A:\D$, as described in
Figure~\ref{fig:domains}.
\begin{figure}\caption{Domain assignment rules for higher-order logic}\label{fig:domains}
$$
\begin{array}{c}
\infer[Ax]{\Delta\vdash X:\D}{(X:\D)\in\Delta}\\ \\
\infer[I\to]{\Delta\vdash\lbd X.A:\D\to\E}{\Delta,X:\D\vdash
A:\E}\qquad\infer[E\to]{\Delta\vdash AB:\E}{\Delta\vdash
A:\D\to\E&\Delta\vdash B:\D}\\ \\\infer[F\imp]{\Delta\vdash A\imp
B:\PROP}{\Delta\vdash A:\PROP&\Delta\vdash
B:\PROP}\qquad\infer[F\forall]{\Delta\vdash\forall
X:\D.A:\PROP}{\Delta,X:\D\vdash A:\PROP}
\end{array}
$$
\end{figure}
\noindent Besides the ordinary rules of the simply-typed
$\lambda$-calculus, one has two \emph{formation} rules. If
$\Delta\vdash A:\PROP$, then we say that $A$ is a
\emph{$\Delta$-proposition}, or just \emph{proposition}.
Alternative terminology is ``formula'' or ``type''.

Finally, the inhabitants of domains may reduce according to the
following reduction rule:

$$
\begin{array}{rrcl}
(\beta_0) & (\lbd X.A)B& \rightarrow & [B/X]A\enspace.
\end{array}
$$

The given definition of domains, their inhabitants, and the
derivable sequents $\Delta\vdash A:\D$ remains fixed for the rest
of this subsection (that is, in all the systems $\ldfO$, $\HOL$,
$\lJmseO$, and $\HJmse$), except for one thing: in $\ldfO$ and
$\lJmseO$, domain variables $\X$ are not allowed.

\subsubsection{Systems $\ldfO$ and $\HOL$}
We now define the natural
deduction system $\HOL$ and its minor variant $\ldfO$.
Specifically, we define proof expressions and their ``typing''
rules, that is, the rules governing what expressions inhabit what
propositions. At this level, the systems $\ldfO$ and $\HOL$ are
indistinguishable; indeed, the single difference is the one already
pointed out at the domains level.

In addition, at this level, also $\HOL$ and $\ldfS$ would be
indistinguishable, provided that (i) we had defined $\ldfS$ with a
trivial domain level $\D=\PROP$, and with formal ``domain
assignment rules'' generating the types/ propositions; (ii) we
wrote $\forall X:\D.A$ and not just $\forall X.A$; (iii) sequents
$\Gamma\vdash t:A$ carried an outer set $\Delta$ declaring
necessary variables $X$ with domain $\PROP$. So, what follows
may be used as a recapitulation of $\ldfS$.

In $\HOL$ one has the following \emph{proof terms}:

$$
\begin{array}{rcl}
t,u,v & ::= & x\,|\,\lambda x.t\,|\,tu\,|\,\Lambda X.t\,|\,tA
\end{array}
$$

Proof terms are assigned to propositions through \emph{proposition
assignment rules}, which generate sequents of the form
$\Delta;\Gamma\vdash t:A$ according to the rules of Figure~\ref{fig:HOL}. Here $\Gamma$ is a consistent set of
declarations $x:A$; in addition we expect
$\Delta\vdash\Gamma:\PROP$ and $\Delta\vdash A:\PROP$, whenever
$\Delta;\Gamma\vdash t:A$ is generated. The notation
$\Delta\vdash\Gamma:\PROP$ means
$(x:A)\in\Gamma\Rightarrow\Delta\vdash A:\PROP$.
\begin{figure}\caption{Proposition assignment rules of $\HOL$}\label{fig:HOL}
$$
\begin{array}{c}
\infer[Ax]{\Delta;\Gamma\vdash x:A}{\Delta\vdash\Gamma:\PROP & (x:A)\in\Gamma}\\ \\
\infer[E\imp]{\Delta;\Gamma\vdash tu:B}{\Delta;\Gamma\vdash
t:A\imp B&\Delta;\Gamma\vdash u:A}\quad\quad
\infer[I\imp]{\Delta;\Gamma\vdash\lambda x.t:A\imp
B}{\Delta;\Gamma,x:A\vdash
t:B}\\
\\
\infer[E\forall]{\Delta;\Gamma\vdash
tB:[B/X]A}{\Delta;\Gamma\vdash t:\forall X:\D.A&\Delta\vdash
B:\D}\quad\quad \infer[I\forall]{\Delta;\Gamma\vdash\Lambda
X.t:\forall X:\D.A }{\Delta,X:\D;\Gamma\vdash
t:A}\\
\\
\infer[Conv.]{\Delta;\Gamma\vdash t:B}{\Delta;\Gamma\vdash t:A &
\Delta\vdash B:\PROP & A=_{\beta_0}B}

\end{array}
$$
\end{figure}
Proof terms reduce according to these two reduction rules:

$$
\begin{array}{rrcl}
(\beta_1) & (\lambda x.t)u& \rightarrow & [u/x]t\\
(\beta_2) & (\Lambda X.t)B& \rightarrow & [B/X]t\\
\end{array}
$$

\noindent Proof terms are capable of $\beta_0$-reduction, via the
closure rule $B\to_{\beta_0}B'\Longrightarrow tB\to_{\beta_0}tB'$.

\subsubsection{Systems $\lJmseO$ and $\HJmse$}
We now define the sequent
calculi $\HJmse$ and its minor variant $\lJmseO$. Again, at the
level of proof expressions and their ``typing'' rules, the systems
$\lJmseO$ and $\HJmse$ are indistinguishable; indeed, the single
difference is the one already pointed out at the domains level.

In addition, at this level, also $\HJmse$ and $\lJmseS$ would be
indistinguishable, under the same provisos as before for the
indistinguishability of $\HOL$ and $\ldfS$. Hence, also the
following definition is mostly a recapitulation of $\lJmseS$.

In $\HJmse$ one has the following \emph{proof expressions}:

$$
\begin{array}{lcrcl}
\textrm{(Proof terms)} &  & t,u,v & ::= & x\,|\,\lambda x.t\,|\,\Lambda X.t\,|\,\ctt c\\
\textrm{(Proof co-terms)} &  & l & ::= & []\,|\,u::l\,|\,B::l\,|\,(x)c\\
\textrm{(Proof commands)} &  & c & ::= & tl
\end{array}
$$

Proposition assignment rules generate sequents of the forms
$\Delta;\Gamma\vdash t:A$, and $\seql{\Delta;\Gamma}{l}{B}{A}$,
and $\seqc{\Delta;\Gamma}{c}{B}$. In these sequents we expect
$\Delta\vdash \Gamma:\PROP$, $\Delta\vdash A:\PROP$, and
$\Delta\vdash B:\PROP$. The rules are shown in Figure~\ref{fig:HJmse}.
\begin{figure}\caption{Proposition assignment rules of $\HJmse$}\label{fig:HJmse}
$$
\begin{array}{c}
\infer[LAx]{\seql{\Delta;\Gamma}{[]}{A}{A}}{\Delta\vdash\Gamma:\PROP & \Delta\vdash A:\PROP}
\quad\quad\infer[RAx]{\Delta;\Gamma\vdash x:A}{\Delta\vdash\Gamma:\PROP & (x:A)\in\Gamma}\\ \\
\infer[L\imp]{\seql{\Delta;\Gamma}{u::l}{A\imp B}{
C}}{\Delta;\Gamma\vdash u:A&
\seql{\Delta;\Gamma}{l}{B}{C}}\quad\quad
\infer[R\imp]{\Delta;\Gamma\vdash\lambda x.t:A\imp
B}{\Delta;\Gamma,x:A\vdash
t:B}\\
\\
\infer[L\forall]{\seql{\Delta;\Gamma}{B::l}{\forall X:\D.A}{
C}}{\Delta\vdash B:\D&
\seql{\Delta;\Gamma}{l}{[B/X]A}{C}}\quad\quad
\infer[R\forall(X\notin\Gamma)]{\Delta;\Gamma\vdash\Lambda
X.t:\forall X:\D.A }{\Delta,X:\D;\Gamma\vdash
t:A}\\
\\
\infer[LSel]{\seql{\Delta;\Gamma}{(x)c}{A}{B}}{\seqc{\Delta;\Gamma,x:A}{c}{B}}
\quad\quad \infer[RSel]{\Delta;\Gamma\vdash
\ctt c:A}{\seqc{\Delta;\Gamma}{c}{A}}\\
\\
\infer[Cut]{\seqc{\Delta;\Gamma}{tl}{B}}{\Delta;\Gamma\vdash t:A &
\seql{\Delta;\Gamma}{l}{A}{B}}\\
\\
\infer[Conv.]{\Delta;\Gamma\vdash t:B}{\Delta;\Gamma\vdash t:A &
\Delta\vdash B:\PROP & A=_{\beta_0}B}
\end{array}
$$
\end{figure}

The rules for the reduction of proof expressions are:

$$
\begin{array}{rrcl}
(\beta_1) & (\lambda x.t)(u::l)& \rightarrow & u((x)tl)\\
(\beta_2) & (\Lambda X.t)(B::l)& \rightarrow & ([B/X]t)l\\
(\pi) & \ctt{tl}E & \rightarrow & t\,(\app lE)\\
(\sigma) & t(x)c & \rightarrow & [t/x]c\\
(\mu) & (x)xl & \rightarrow & l,\textrm{ if $x\notin
l$}\\
(\epsilon) & \ctt{t[]} & \rightarrow & t
\end{array}
$$

\noindent Proof expressions are capable of $\beta_0$-reduction,
via the closure rule $B\to_{\beta_0}B'\Longrightarrow
B::l\to_{\beta_0}B'::l$.

\subsubsection{CGPS translations}

We will see that, when the CGPS translation is extended to $\lJmseO$
and $\HJmse$, its properties (type soundness and the simulation
theorem) remain valid and are proved almost \emph{verbatim} relative
to the second-order case. Here is an explanation. The proofs of the
properties of the CGPS translation have two components. The first
component is a proof that the CGPS translation behaves well
relative to domain inhabitants/assignment. This comprises (i)
domain soundness (Lemma \ref{lemma:domain-assignment-to-CGPS}
below); (ii) commutation with substitution of type variables $X$;
(iii) simulation of $\beta_0$ (Lemma \ref{lem:simulation-beta0}).
This component depends on the domain inhabitants/assignment and
inhabitants reduction ($\beta_0$) of the system where the
translation is defined. Very little variation exists between
$\lJmseS$, $\lJmseO$, and $\HJmse$ regarding these aspects, the only
singularity being that there is no $\beta_0$ at second order. The
second component is the proper proofs of type soundness and strict
simulation, which are all the same for $\lJmseS$, $\lJmseO$, and
$\HJmse$, except for one inductive case of the strict simulation
theorem, absent in $\lJmseS$, and relative to $\beta_0$ reduction at
proof-expression level.

We define a CGPS translation from $\HJmse$ to $\HOL$. It can be
seen as a CGPS translation from $\lJmseO$ to $\ldfO$ as well, and
generalises only slightly the previous CGPS translation from
$\lJmseS$ to $\ldfS$, by providing translations for $\lbd X.A$ and
$AB$.

Domains remain fixed, but their inhabitants are translated as in Figure~\ref{fig:prop-trans}.
\begin{figure}\caption{Translation of propositional/individual function(al)s}\label{fig:prop-trans}
\[
\begin{array}{rcl}
X^*&=&X\\
(A\imp B)^*&=&\neg\ov{B}\imp\neg\ov{A}\\
(\forall X:\D.A)^*&=&\forall X:\D.\ov{A}\\
(\lbd X.A)^*&=&\lbd X.A^*\\
(AB)^*&=&A^*B^*\\
&&\\
\ov{A}&=&\top\imp\neg\neg A^*\\
\end{array}
\]
\end{figure}
Recall that the relation of domain assignment of $\HJmse$ is the
same as that of $\HOL$. Such relation is intended in the following
result.

%-----------------------------------------
\begin{lem}[Domain soundness]\label{lemma:domain-assignment-to-CGPS}The following holds:\\
$$
\infer{\Delta\vdash A^*:\D}{\Delta\vdash
A:\D}\qquad\infer{\Delta\vdash \ov{A}:\PROP}{\Delta\vdash A:\PROP}
$$
\end{lem}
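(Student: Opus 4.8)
The plan is to prove both statements simultaneously by a single induction on the derivation of the domain-assignment judgement $\Delta\vdash A:\D$ (equivalently, on the structure of the inhabitant $A$), following the rules of Figure~\ref{fig:domains}. The two claims must be coupled because the translation $(A\imp B)^*=\neg\ov B\imp\neg\ov A$ and $(\forall X:\D.A)^*=\forall X.\ov A$ mixes the two judgements: producing a $\PROP$-derivation for $(A\imp B)^*$ needs $\PROP$-derivations of $\ov A$ and $\ov B$, while producing a $\D$-derivation of $(AB)^*=A^*B^*$ needs $\D'\to\D$- and $\D'$-derivations of $A^*$ and $B^*$. Note also that $\ov A=\top\imp\neg\neg A^*$ is just an abbreviation; since $\top$ is a fixed ground type (with $\Delta\vdash\top:\PROP$) and $\bot$ is a fixed type variable, the second claim $\Delta\vdash\ov A:\PROP$ follows from $\Delta\vdash A^*:\PROP$ by two applications of $F\imp$. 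So the real content is the first claim for the case $\D=\PROP$, plus the first claim for arbitrary $\D$.

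First I would dispatch the base case $A=X$: from $(X:\D)\in\Delta$ and $X^*=X$ we get $\Delta\vdash X^*:\D$ by $Ax$. For $A=\lbd X.A_0$ with $\Delta,X:\D_1\vdash A_0:\D_2$, the induction hypothesis gives $\Delta,X:\D_1\vdash A_0^*:\D_2$, and since $(\lbd X.A_0)^*=\lbd X.A_0^*$, rule $I\!\to$ yields $\Delta\vdash(\lbd X.A_0)^*:\D_1\to\D_2$. For $A=A_1A_2$ with $\Delta\vdash A_1:\D_1\to\D_2$ and $\Delta\vdash A_2:\D_1$, the IH gives $\Delta\vdash A_1^*:\D_1\to\D_2$ and $\Delta\vdash A_2^*:\D_1$, so $E\!\to$ gives $\Delta\vdash(A_1A_2)^*:\D_2$. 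For $A=A_1\imp A_2$, all premises have domain $\PROP$; by IH on the two \emph{second} claims we get $\Delta\vdash\ov{A_1}:\PROP$ and $\Delta\vdash\ov{A_2}:\PROP$, and then $(A_1\imp A_2)^*=\neg\ov{A_2}\imp\neg\ov{A_1}=(\ov{A_2}\imp\bot)\imp(\ov{A_1}\imp\bot)$ is built by repeated $F\imp$ using $\Delta\vdash\bot:\PROP$. For $A=\forall X:\D_0.A_0$ with $\Delta,X:\D_0\vdash A_0:\PROP$, the IH (second claim) gives $\Delta,X:\D_0\vdash\ov{A_0}:\PROP$, and then $F\forall$ gives $\Delta\vdash\forall X:\D_0.\ov{A_0}:\PROP$, i.e.\ $\Delta\vdash(\forall X:\D_0.A_0)^*:\PROP$. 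In each case, the second claim for $A$ then follows from the first by the two extra $F\imp$ steps mentioned above; for this one uses the auxiliary observation $([B/X]A)^*=[B^*/X]A^*$ nowhere — that identity is only needed for the simulation theorems, not here.

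I expect this proof to be essentially routine; the only point requiring care is the mutual dependency between the two claims, which forces the simultaneous induction and means one cannot prove "$A^*$ well-formed" first and "$\ov A$ well-formed" afterwards in separate passes — in the implication case the recursion on $A^*$ genuinely bottoms out through $\ov{\cdot}$ of the subformulas. A secondary bookkeeping point is that $\top$, $\bot$ and the abbreviation $\neg(\cdot)=(\cdot)\imp\bot$ must be checked to be domain-correct, but this is immediate since $\bot$ is a type variable and $\top$ is a fixed proposition. No weakening or substitution lemma at the domain level is needed, because $\Delta$ is threaded unchanged through every case except the two binders, where it is extended exactly as in the source derivation.
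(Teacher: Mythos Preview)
Your proof is correct and follows exactly the approach the paper indicates: a simultaneous induction on $A$ (the paper's proof is the single line ``By simultaneous induction on $A$''). Your detailed case analysis is a faithful expansion of that one-line proof; the only trivial slip is that deriving $\ov A=\top\imp\neg\neg A^*$ from $A^*:\PROP$ takes three applications of $F\imp$, not two.
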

%-----------------------------------------
\begin{proof}
By simultaneous induction on $A$.
\end{proof}

Recall also that the relation of domain assignment of $\lJmseO$ is
the same as that of $\ldfO$. If this latter relation is intended,
the previous result also holds, with the same proof. The previous
lemma generalises the fact that, at second order, if $A$ is a
proposition (type), then so is $A^*$ and $\ov{A}$.

The same grammar generates the sets of proof expressions of $\HJmse$,
$\lJmseO$, and $\lJmseS$; another single grammar generates the sets
of proof expressions of $\HOL$, $\ldfO$, and $\ldfS$. These two
grammars are already known from the second-order systems, so the CGPS
translation at the level of proof expressions is known and we do not
repeat it.

The equations $([B/X]A)^*=[B^*/X]A^*$ and
$\ov{[B/X]A}=[B^*/X]\ov{A}$ still hold, and are proved by the same
simultaneous induction, supplemented with the straightforward new
cases $\lbd X.A$ and $AB$.

%---------------------------------------------
\begin{lem}
\label{lem:simulation-beta0} If $A\rightarrow_{\beta_0} B$ in
$\HJmse$ (resp. $\lJmseO$), then $A^*\rightarrow_{\beta_0} B^*$
and $\ov{A}\rightarrow_{\beta_0} \ov{B}$ in $\HOL$ (resp.
$\ldfO$).
\end{lem}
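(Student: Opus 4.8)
The plan is to establish the first claim, $A\to_{\beta_0}B\Rightarrow A^*\to_{\beta_0}B^*$, by a routine induction, and then read off the second claim $\ov A\to_{\beta_0}\ov B$ for free. The observation that makes the second claim free is that $\ov A=\top\imp\neg\neg A^*$, so a single $\beta_0$-step inside $A^*$ lifts verbatim to a single $\beta_0$-step inside $\ov A$, the contracted redex merely living deeper in the formula; for the same reason, whenever the induction hypothesis delivers $A_i^*\to_{\beta_0}A_j^*$ it also delivers $\ov{A_i}\to_{\beta_0}\ov{A_j}$, which is precisely what the closure cases for $\imp$ and $\forall$ will need.

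I would run the induction on the derivation of $A\to_{\beta_0}B$ (equivalently, on the structure of $A$ together with the position of the redex), recalling that on domain inhabitants $\to_{\beta_0}$ is the compatible closure of the rule $(\lbd X.A)B\to[B/X]A$. In the base case $A=(\lbd X.C)D$ and $B=[D/X]C$; then $A^*=(\lbd X.C^*)D^*$ is itself a $\beta_0$-redex, contracting to $[D^*/X]C^*$, and by the substitution equation $([D/X]C)^*=[D^*/X]C^*$ --- already recorded above as holding, proved by the same simultaneous induction as $\ov{[B/X]A}=[B^*/X]\ov A$ --- this is exactly $B^*$; so one step maps to one step. For the closure cases the redex sits in an immediate subterm: if $A=\lbd X.A_1$, or $A=A_1B_1$, or $A=B_1A_1$, the induction hypothesis on the subterm together with the matching clause of $(\cdot)^*$ concludes at once; if $A=A_1\imp B_1$ with the redex in $A_1$, then $A^*=\neg\ov{B_1}\imp\neg\ov{A_1}$ and the derived reduction $\ov{A_1}\to_{\beta_0}\ov{A_1'}$ yields $A^*\to_{\beta_0}\neg\ov{B_1}\imp\neg\ov{A_1'}=(A_1'\imp B_1)^*$, the redex-in-$B_1$ case being symmetric; and $A=\forall X:\D.A_1$ is analogous via $(\forall X:\D.A_1)^*=\forall X:\D.\ov{A_1}$. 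This exhausts the constructors of domain inhabitants, and the second claim then follows since $\ov A=\top\imp\neg\neg A^*\to_{\beta_0}\top\imp\neg\neg B^*=\ov B$. The $\lJmseO$-into-$\ldfO$ version is literally the same argument: the domains, their inhabitants, and the translation $(\cdot)^*$ on them are defined identically, and the sole difference --- the absence of domain variables $\X$ --- plays no role.

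There is no genuine obstacle here. The only two points that deserve a moment's care are invoking the already-proven substitution lemma $([B/X]A)^*=[B^*/X]A^*$ in the base case, and checking that the $\imp$- and $\forall$-clauses of $(\cdot)^*$ are monotone for $\beta_0$ --- which is immediate once one notes that $\ov{(\cdot)}$ transports $\beta_0$-reductions of $(\cdot)^*$.
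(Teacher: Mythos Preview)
Your proposal is correct and follows exactly the approach the paper uses: a straightforward induction on $A\to_{\beta_0}B$, with the base case discharged by the substitution equation $([B/X]A)^*=[B^*/X]A^*$ and the closure cases routine. You have simply spelled out in detail what the paper summarises as ``the inductive cases are routine,'' including the observation that the $\imp$ and $\forall$ clauses of $(\cdot)^*$ go through $\ov{(\cdot)}$ and hence need the derived fact $\ov{A_i}\to_{\beta_0}\ov{A_i'}$, which is immediate from the induction hypothesis since $\ov A=\top\imp\neg\neg A^*$.
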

%----------------------------------------------
\begin{proof}
Straightforward induction on $A\rightarrow_{\beta_0} B$. The base
case follows from $([B/X]A)^*=[B^*/X]A^*$. The inductive cases are
routine.
\end{proof}

Then one obtains the admissible typing rules of
Figure~\ref{fig:HJmseTypes}.
\begin{figure}\caption{Admissible proposition assignment rules for CGPS translation of $\HJmse$}\label{fig:HJmseTypes}
$$
\begin{array}{c}
\infer{\Delta;\ov{\Gamma}\vdash\ov{t}:\ov{A}}{\Delta;\Gamma\vdash t:A}\\ \\
\infer{\Delta;\ov{\Gamma},\Gamma'\vdash
(t:G,K):\bot}{\Delta;\Gamma\vdash
t:A&\Delta;\ov{\Gamma},\Gamma'\vdash G:\top&\Delta;\ov{\Gamma},\Gamma'\vdash K:\neg A^*}\\ \\
\infer{\Delta;\ov{\Gamma},\Gamma'\vdash
(l:G,K):\neg\ov{A}}{\Delta;\Gamma|l:A\vdash
B&\Delta;\ov{\Gamma},\Gamma'\vdash G:\top&\Delta;\ov{\Gamma},\Gamma'\vdash K:\neg B^*}\\ \\
\infer{\Delta;\ov{\Gamma},\Gamma'\vdash
(c:G,K):\bot}{\seqc{\Delta;\Gamma}{c}{A}&\Delta;\ov{\Gamma},\Gamma'\vdash
G:\top&\Delta;\ov{\Gamma},\Gamma'\vdash K:\neg A^*}
\end{array}
$$
\end{figure}
This is the same typing obeyed by the CGPS translation
from $\lJmseS$ to $\ldfS$, provided, as remarked before, $\lJmseS$
and $\ldfS$ are defined with a formal level of domains, etc.

%------------------------------------------------
\begin{lem}\label{lemma:simulation-HOL} The CGPS translations of $\HJmse$ into $\HOL$,
and of $\lJmseO$ into $\ldfO$, satisfy the items (1) to (10) of
Lemma \ref{lemma:simulation-2nd-order}.
\end{lem}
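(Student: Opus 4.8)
The plan is to exploit the observation, already flagged in the running text, that at the level of proof expressions the CGPS translation is clause-for-clause \emph{identical} for $\HJmse$ and $\lJmseO$ to the one for $\lJmseS$: the three source systems share a single grammar of proof terms, co-terms and commands, and the three target systems share a single grammar of target proof expressions, so the recursive definition of $(T:G,K)$ and of $\ov{t}$ is literally unchanged. What has grown is only the translation of \emph{inhabitants}, by the two clauses $(\lbd X.A)^{*}=\lbd X.A^{*}$ and $(AB)^{*}=A^{*}B^{*}$. Hence the whole proof consists in replaying the simultaneous inductions that establish items (1)--(10) of Lemma~\ref{lemma:simulation-2nd-order} (which themselves point back to Lemma~\ref{lemma:simulation}), checking at each node that the richer domain structure changes nothing.

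More concretely, I would first dispatch items (1)--(7): these talk only about substitution for, and reduction of, proof-term variables, never about type variables, and the clauses that mention translated inhabitants --- those for $\Lambda X.t$, for $B::l$ and for $t(B::l)$ --- merely transport the type $B^{*}$ around, so their induction steps are word-for-word the second-order ones. Next I would do items (8) and (9): these are the simultaneous inductions giving commutation of a type substitution $[B/X](\cdot)$ with the translation on proof expressions; the induction is again structured on proof expressions, hence identical to the second-order case, and its only type-level inputs are the equations $([B/X]A)^{*}=[B^{*}/X]A^{*}$ and $\ov{[B/X]A}=[B^{*}/X]\ov{A}$, which continue to hold, their induction on $A$ being supplemented with the straightforward cases $\lbd X.A$ and $AB$. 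Nothing is contributed by domain variables $\X$ or by functional domains $\D\to\D$, since domains themselves are not translated. Finally, item (10) is the append lemma with a type $B$ at the head of the outer context, proved by the same case analysis/induction on $l$ as item (10) at second order.

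I do not expect a genuine obstacle --- the content of the lemma is precisely that ``nothing breaks'' --- but the one point I would make explicit is that the extended inhabitant translation never manufactures inhabitant-level redexes inside proof expressions: in the image $(T:G,K)$ every occurrence of a translated inhabitant $B^{*}$ sits as the argument of a \emph{variable}, in a subterm $m\,B^{*}$, so no $\beta_{0}$-redex $(\lbd X.A)B$ is created. Consequently items (1)--(10) remain, exactly as at second order, purely statements about term-level $\beta$-reduction, and the inhabitant-level $\beta_{0}$-reduction --- the only truly new ingredient of these systems --- is handled separately by Lemma~\ref{lem:simulation-beta0} and surfaces only in the simulation theorem, not in the present lemma.
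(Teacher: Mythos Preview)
Your proposal is correct and matches the paper's approach: the paper in fact gives no explicit proof of this lemma at all, relying on the surrounding discussion that the proof-expression grammar and the CGPS clauses are literally unchanged from the second-order case, with the only novelty being the two extra inhabitant-translation clauses that are already shown to preserve the commutation equations $([B/X]A)^*=[B^*/X]A^*$ and $\ov{[B/X]A}=[B^*/X]\ov{A}$. Your write-up is a faithful and somewhat more explicit elaboration of precisely this reasoning; the additional remark that every translated inhabitant $B^*$ occurs only as an argument to a variable (so no $\beta_0$-redex is manufactured) is a nice clarification that the paper does not spell out.
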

%-----------------------------------------------

%---------------------------------------------
\begin{thm}[Simulation]
\label{theorem:simulation-cgpsHOL} If $t\rightarrow u$ in $\HJmse$
(resp. $\lJmseO$), then $\ov{t}\rightarrow^+_{\beta} \ov{u}$ in
$\HOL$ (resp. $\ldfO$).
\end{thm}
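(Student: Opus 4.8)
The plan is to follow, almost \emph{verbatim}, the proof of the second-order simulation theorem (Theorem~\ref{theorem:simulation-cgps2}), relying on Lemma~\ref{lemma:simulation-HOL}, which supplies in the higher-order setting all the auxiliary properties (1)--(10) of Lemma~\ref{lemma:simulation-2nd-order}. Concretely, I would prove simultaneously, by induction on the derivation of the reduction step, that
\[
T\to T' \ \Longrightarrow\ (T:G,K)\to^+_\beta (T':G,K)
\]
for $T,T'$ ranging over proof terms, proof co-terms and proof commands of $\HJmse$ (resp.\ $\lJmseO$) and for arbitrary target terms $G,K$. Taking $T=t$, $T'=u$, $G=g$, $K=k$ and recalling $\ov{t}=\lambda gk.(t:g,k)$ then gives the theorem. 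Since proof expressions, their translation, and the auxiliary Lemma~\ref{lemma:simulation-HOL} are literally the ones already used at second order, the only difference from the proof of Theorem~\ref{theorem:simulation-cgps2} is one additional inductive case, treated below.

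First I would dispatch the base cases for the proof-expression rules $\beta_1$, $\beta_2$, $\pi$ (with its sub-cases $E=[]$, $E=u::l'$, $E=B::l'$), $\sigma$, $\mu$ and $\epsilon$. These are exactly the computations already performed in the proofs of Theorems~\ref{theorem:simulation-cgps} and~\ref{theorem:simulation-cgps2}: $\beta_1$ uses item (5) of Lemma~\ref{lemma:simulation-2nd-order} (now read off Lemma~\ref{lemma:simulation-HOL}), $\beta_2$ uses items (5) and (8), the sub-cases $E=u::l'$ and $E=B::l'$ of $\pi$ use items (7) and (10) respectively, $\sigma$ uses item (1), $\mu$ uses item (6), and $\epsilon$ together with the sub-case $E=[]$ of $\pi$ use the garbage-decrement property, item (4). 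None of these arguments is sensitive to whether the quantifier domain is $\PROP$ or a larger domain $\D$, so they carry over unchanged. The congruence rules $t\to t'\Rightarrow tl\to t'l$, $l\to l'\Rightarrow tl\to tl'$, $c\to c'\Rightarrow\ctt c\to\ctt{c'}$, $t\to t'\Rightarrow\lambda x.t\to\lambda x.t'$, and so on, are handled by the same routine case analysis (on the shape of $l$, resp.\ of $l\to l'$) plus the induction hypothesis; the subterm property (item (3) of Lemma~\ref{lemma:simulation-2nd-order}) ensures that reductions inside the garbage argument $G$, inside $K$, or inside the $\ov{u}$ appearing in the clauses for $t(u::l)$ and $u::l$, survive propagation through all clauses.

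The single new case is the closure rule at proof-co-term level, $B\to_{\beta_0}B'\Longrightarrow B::l\to B'::l$. Here I would use
\[
(B::l:G,K)=\lambda w.wG(\lambda m.(l:G,K)(mB^*)),
\]
so that $B$ occurs in the translation only through the type $B^*$, and invoke Lemma~\ref{lem:simulation-beta0}, which gives $B^*\to_{\beta_0}(B')^*$ in $\HOL$ (resp.\ $\ldfO$); hence $(B::l:G,K)$ reduces to $(B'::l:G,K)$ by a single $\beta_0$-step inside the translation, which is what is required. This case is present both for $\HJmse$ and $\lJmseO$ and is precisely the one absent at second order, where there is no $\beta_0$.

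There is no real obstacle: the substance of the theorem is exactly that the extra layer of domains of quantification is inert for the CGPS simulation. The only points needing a moment's care are (i) that Lemma~\ref{lemma:simulation-HOL} genuinely holds --- but this is assumed, being the higher-order counterpart of Lemma~\ref{lemma:simulation}, proved by the same simultaneous inductions, now supplemented by the commutations $([B/X]A)^*=[B^*/X]A^*$ and $\ov{[B/X]A}=[B^*/X]\ov{A}$ together with Lemma~\ref{lem:simulation-beta0} --- and (ii) remembering to include the new $\beta_0$-closure case, which, as just shown, is immediate from Lemma~\ref{lem:simulation-beta0}. The conversion rule $Conv.$ plays no role, since the statement concerns the reduction relation, not typability.
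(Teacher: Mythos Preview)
Your proposal is correct and follows essentially the same approach as the paper: reuse the second-order proof verbatim via Lemma~\ref{lemma:simulation-HOL}, and add the single new inductive case for the $\beta_0$-closure $B::l\to B'::l$, dispatched by Lemma~\ref{lem:simulation-beta0} together with the definition of $(B::l:G,K)$.
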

%----------------------------------------------

\begin{proof}
The same proof as in the second-order case applies. There is only
one new inductive case, to prove
$(B::l:G,K)\to_{\beta}^+(B'::l:G,K)$, when $B\to_{\beta_0}B'$, a
case which is an immediate consequence of Lemma
\ref{lem:simulation-beta0} and the definition of the CGPS
translation.
\end{proof}

%---------------------------------
\begin{cor}\label{cor:SN-for-HOL}
The typable terms of $\HJmse$ and $\lJmseO$ are strongly
normalising.
\end{cor}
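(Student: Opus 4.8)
The plan is to obtain this as an immediate corollary of the Simulation Theorem~\ref{theorem:simulation-cgpsHOL}, in exactly the way Corollary~\ref{cor:SN-for-LambdaJmse} was obtained from Theorem~\ref{theorem:simulation-cgps} (and Corollary~\ref{cor:SN-for-lambdaJ(m)(s)} from the corresponding simulation results). The three ingredients are: type soundness of the CGPS translation, strong normalisation of the target systems $\HOL$ and $\ldfO$, and the \emph{strictness} of the simulation, i.e.\ that every source step is mapped to at least one $\beta$-step in the target.

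First I would record type soundness. By the first admissible rule of Figure~\ref{fig:HJmseTypes} --- itself a consequence of Lemma~\ref{lemma:domain-assignment-to-CGPS}, the identities $([B/X]A)^*=[B^*/X]A^*$ and $\ov{[B/X]A}=[B^*/X]\ov{A}$, and the typing clauses --- whenever $\Delta;\Gamma\vdash t:A$ holds in $\HJmse$ we have $\Delta;\ov{\Gamma}\vdash\ov{t}:\ov{A}$ in $\HOL$, and likewise for $\lJmseO$ into $\ldfO$. Hence the translation of a typable proof term is a typable term of the target system. Second, I would appeal to strong normalisation of the targets, which is already available: $\ldfO$ is a domain-free presentation of $F^\omega$ and lies in the domain-free cube, so its SN is inherited from that of $F^\omega$ via \cite{BartheSoerensen00}; and $\HOL$ is a pure type system with a functional specification, so by \emph{op.\,cit.}\ its SN follows from the SN of Geuvers' domain-full system \cite{GeuversCahiers}. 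Both facts were stated at the beginning of Section~\ref{sec:higher-order-systems}.

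With these in hand the argument is the usual one by contradiction: if some typable $t$ of $\HJmse$ admitted an infinite reduction $t=t_0\to t_1\to t_2\to\cdots$, then $\ov{t_0}$ is typable in $\HOL$ by type soundness, and by Theorem~\ref{theorem:simulation-cgpsHOL} we get $\ov{t_0}\to^+_\beta\ov{t_1}\to^+_\beta\ov{t_2}\to^+_\beta\cdots$, an infinite $\beta$-reduction out of a typable $\HOL$-term, contradicting strong normalisation of $\HOL$; the case of $\lJmseO$ into $\ldfO$ is identical. There is essentially no obstacle here, since the work has all been done in Theorem~\ref{theorem:simulation-cgpsHOL}; the only point I would be careful to check is that the simulation covers the \emph{whole} reduction relation of $\HJmse$, including the $\beta_0$-steps on embedded domain inhabitants via the closure rule $B\to_{\beta_0}B'\Longrightarrow B::l\to_{\beta_0}B'::l$. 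This is precisely the extra inductive case of the proof of Theorem~\ref{theorem:simulation-cgpsHOL}, which rests on Lemma~\ref{lem:simulation-beta0}, so nothing more is needed.
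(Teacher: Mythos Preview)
Your proposal is correct and follows exactly the approach the paper intends: the corollary is stated without an explicit proof because it is an immediate consequence of type soundness (Figure~\ref{fig:HJmseTypes}), strict simulation (Theorem~\ref{theorem:simulation-cgpsHOL}), and the previously recorded strong normalisation of $\HOL$ and $\ldfO$, in complete analogy with Corollary~\ref{cor:SN-for-LambdaJmse}. Your care in noting that the $\beta_0$-closure case is handled via Lemma~\ref{lem:simulation-beta0} is well placed and matches the one new ingredient the paper highlights in the proof of Theorem~\ref{theorem:simulation-cgpsHOL}.
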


\section{Further remarks}\label{sec:conclusions}

\textbf{Contributions.} This article provides reduction-preserving
CGPS translations of $\lJmse$ and other intuitionistic calculi,
hence obtaining embeddings into the simply-typed
$\lambda$-calcu\-lus and proving strong normalisation. As a
by-product, the connections between systems like $\lJ$ and $\lJm$
and the intuitionistic fragment of $\lmmt$ are detailed, and
confluence for them obtained. It is shown that all the results
smoothly extend to systems with quantification over propositions
and even functionals over propositions and (many-sorted)
individuals. In all cases, the sequent-calculus format is embedded
into the natural-deduction style.

\textbf{C(G)PS and strong normalisation.} In the literature one
finds strong normalisation proofs for sequent calculi
\cite{Dragalin88,DyckhoffUrbanJLC,LengrandWRS2003,LengrandEtAl06,PolonovskiFOSSACS2004,UrbanBiermanTLCA1999},
but not by means of CPS translations; or CPS translations for
natural deduction systems
\cite{BartheHatcliffSoerensen97,BartheHatcliffSoerensen99,Groote02,HarperLillibridgeJFP96,IkedaNakazawa06,NakazawaTatsuta06}.

This article provides, in particular, a reduction-preserving CGPS
translation for the lambda-calculus with generalised applications
$\lJ$. \cite{NakazawaTatsuta06} covers full propositional classical
logic with general elimination rules and its intuitionistic
implicational fragment corresponds to $\lJ$. However,
\cite{NakazawaTatsuta06} does not prove a strict simulation by CPS
(permutative conversions are collapsed), so an auxiliary argument in
the style of de Groote \cite{Groote02}, involving a proof in isolation
of SN for permutative conversions, is used.

In Curien and Herbelin's work
\cite{CurienHerbelinFP2000,HerbelinHabilitation} one finds a CPS
translation $(\_)^n$ of the call-by-name restriction of $\lmmt$.
We compare $(\_)^n$ with our $\ov{(\_)}$. (i) $(\_)^n$ generalises
Hofmann-Streicher translation \cite{HofmannStreicherJournal};
$\ov{(\_)}$ generalises Plotkin's call-by-name CPS translation
\cite{PlotkinTLCA1975}. (ii) $(\_)^n$ does not employ the colon
operator; $\ov{(\_)}$ does employ (we suspect that doing
administrative reductions at compile time is necessary to achieve
strict simulation of reduction); (iii) $(\_)^n$ is defined for
expressions where every occurrence of $u::l$ is of the particular
form $u::E$; no such restriction is imposed in the definition of
$\ov{(\_)}$. (iv) at some points it is unclear what the properties
of $(\_)^n$ are, but no proof of strong normalisation is claimed;
the CGPS $\ov{(\_)}$ strictly simulates reduction and thus achieves a proof
of strong normalisation.

\textbf{Higher-order sequent calculi.} Our formulations of system
$F^{\omega}$ and higher-order logic in the sequent calculus format
were helpful for showing the wide applicability of the CGPS
technique. Nevertheless, they are another experience in the
formulation of type theories as sequent calculi
\cite{LengrandEtAl06}. We adopted the guideline that only
proof-expression could suffer a change in the proof-theoretical
format, but other, more ``uniform'', possibilities exist, where also
the domain assignment relation is changed to the sequent calculus
format. An improvement, in view of proof-search, is to restrict the
conversion rule of the typing system to an expansion rule
\cite{SeldinTCS2000}. Finally, in $\HJmse$, $\lJmseO$, and $\lJmseS$
we re-encounter explicit substitutions in higher-order type theories
\cite{BlooMSCS2001,MunozMSCS2001}, but with a simpler treatment (no
explicit execution) and in a simpler setting (no dependent types).

\textbf{Future work.} We plan to extend the technique of
continuation-and-garbage passing to $\lmmt$ and to
dependently-typed systems. We tried to extend the CGPS to CBN
$\lmmt$, but already for a CPS translation, we do not see how to
profit from the continuation argument for the translation of
co-terms and commands. Moreover, a special case of the rule we
call $\pi$ corresponds to the renaming rule $a(\mu b.M)
\rightarrow [a/b]M$ of $\lb\mu$-calculus. This rule is evidently
not respected by the CGPS translation by Ikeda and Nakazawa
\cite{IkedaNakazawa06} (nor by the CPS they recall) since the
continuation argument $K$ is omitted in the interpretation of the
left-hand side but not in the right-hand side. So, new ideas or
new restrictions will be needed.\\

\noindent \emph{Acknowledgements:} We thank the referees of this
journal version, whose reviews helped to improve considerably the
first submission. We also thank the referees of the conference
version \cite{ESMP:TLCA07} for, among other things, pointing out the
work of Nakazawa and Tatsuta, whom we thank for an advanced copy of
\cite{NakazawaTatsuta06}. The first and third authors are supported
by FCT through the Centro de Matem\'atica da Universidade do Minho.
The second author thanks for an invitation by that institution to
Braga in October 2006 and in May 2007. All authors were also
supported by the European Union FP6-2002-IST-C Coordination Action
510996 ``Types for Proofs and Programs''.

%\bibliography{LambdaJmse}
%\bibliographystyle{plain}

\end{document}